\documentclass[12pt]{article}
\usepackage{amssymb,amsmath}
\usepackage{amsthm}
\usepackage[usenames,dvipsnames]{xcolor}
\usepackage[UKenglish]{babel}
\usepackage[UKenglish]{isodate}
\usepackage{cancel}
\usepackage[colorlinks=true]{hyperref}
\usepackage{orcidlink}
\usepackage[bottom=0.5in,top=0.5in,left=0.5in,right=0.5in]{geometry}

\newtheorem{theorem}{Theorem}
\newtheorem{proposition}[theorem]{Proposition}
\newtheorem{lemma}[theorem]{Lemma}
\newtheorem{corollary}[theorem]{Corollary}

\theoremstyle{definition}

\newtheorem{remark}[theorem]{Remark}
\newcommand{\td}{\mathrm{D}_x}
\newcommand{\p}{\partial}

\allowdisplaybreaks[3]

\newcommand*{\pd}
[2]{\mathchoice{\frac{\partial#1}{\partial#2}}
  {\partial#1/\partial#2}{\partial#1/\partial#2}
  {\partial#1/\partial#2}}

\newcommand*{\fd}
[2]{\mathchoice{\frac{\delta#1}{\delta#2}}
  {\delta #1/\delta#2}{\delta#1/\delta#2}{\delta#1/\delta#2}}
\newcommand{\ddx}[1]{\partial_x^{#1}}

\begin{document}

\title{Compatible pairs of Hamiltonian operators\\
  of the first and third orders}

\author{P. Lorenzoni\orcidlink{0000-0001-6171-0821}$^{*}$,
  S. Opanasenko\orcidlink{0000-0002-7956-2473}$^{**}$
  and
  R. Vitolo\orcidlink{0000-0001-6825-6879}$^{***}$}
\maketitle
\vspace{-7mm}
\begin{center}
    ${}^{*}$ Dipartimento di Matematica e Applicazioni, University of Milano-Bicocca\\
  via Roberto Cozzi 53 I-20125 Milano, Italy\\
  INFN sezione di
  Milano-Bicocca\\
  $^{***}$ Dipartimento di Matematica e Fisica ``E. De Giorgi'', Universit\`a
  del Salento\\
    via per Arnesano, 73100 Lecce, Italy\\
  $^{**}$ $^{***}$ INFN Sezione di Lecce\\
$^{**}$ Institute of Mathematics of NAS of Ukra\"\i{}ne\\
  3 Tereshchenkivska Str.\, 01024 Ky\"\i{}v, Ukra\"\i{}ne\\
  e-mails:
\texttt{paolo.lorenzoni@unimib.it}\\
  \texttt{stanislav.opanasenko@le.infn.it}\\
  \texttt{raffaele.vitolo@unisalento.it}
\end{center}

\begin{abstract}
  We compute general compatibility conditions between a weakly nonlocal
  homogeneous Hamiltonian operator and a third-order homogeneous Hamiltonian
  operator. Such operators determine a bi-Hamiltonian structure for many
  integrable PDEs (Korteweg--De Vries, Camassa--Holm, dispersive water waves,
  Dym, WDVV and others). Remarkably, the full set of conditions is purely
  algebraic and the first-order operator is completely determined by commuting
  systems of conservation laws that are Hamiltonian with respect to a
  third-order operator. We illustrate the above results with several examples,
  some of which, concerning WDVV equations, are new.
\end{abstract}

\tableofcontents

\section{Introduction: formulation of the problem and main
  results}

One of the most important ways to define integrable hierarchies of PDEs is
through a compatible pair\footnote{We recall that two Hamiltonian operators
  $A_1$, $A_2$ are compatible if and only if any linear combination
  $\lambda A_1 + \mu A_2$ of them (`pencil') is a Hamiltonian operator.}  of
Hamiltonian operators $A_1$, $A_2$.  Indeed, by Magri's Theorem
\cite{Magri:SMInHEq} an integrable hierarchy of PDEs in $n$~dependent variables
$u^i=u^i(t,x)$, $i=1,\ldots,n$, of the form
\begin{equation*}
  u^i_{t} = A_1^{ij}\left(\fd{H_n}{u^j}\right)
\end{equation*}
is defined by recursively solving the equation
\begin{equation*}
  A_1\left(\fd{H_{n+1}}{u^i}\right) = A_2\left(\fd{H_n}{u^i}\right)
\end{equation*}
for $H_{n+1}$. That implies that all PDEs in the hierarchy are
endowed with an infinite sequence of conserved densities $\{H_n\}$ that
commute with respect to the Poisson brackets defined by both Hamiltonian
operators as
\begin{equation*}
  \{F,G\}_{i} = \int \fd{F}{u^h} A^{hk}_i \fd{G}{u^k}\, \mathrm dx,\qquad i=1,2.
\end{equation*}
That is the essence of the celebrated bi-Hamiltonian formalism for PDEs. In
practice, the applicability of the method, i.e. the well-posedness of the
recursion relations, relies on some additional conditions like the vanishing of
the certain cohomology groups ensuring the existence of solutions at each step
of the iteration.

If we have a closer look at bi-Hamiltonian equations that were found in the
beginning of the research work on integrable PDEs, we notice that many of them
are determined by Hamiltonian operators that are homogeneous with respect to
the grading given by the count of $x$-derivatives in their expressions.  The
study of such operators was initiated by Dubrovin and
Novikov~\cite{DN83,DubrovinNovikov:PBHT}.  The list of the above equations
includes, but is not exhausted by, the KdV equation, the AKNS
equation~\cite{falqui06:_camas_holm,liu05:_defor_hamil}, the Kaup--Broer
system~\cite{Kuper_85}, the Dispersive Water Waves
system~\cite{antonowicz89:_factor_scroed,Kuper_85}, the coupled Dym
system~\cite{antonowicz88:_coupl_harry_dym}, the Monge--Amp\`ere
system\footnote{It is a Monge--Amp\`ere-type equation rewritten as a
  first-order quasilinear system of PDEs,
  see~\cite{mokhov98:_sympl_poiss,OlverNutku1988}}, the WDVV first-order
systems and others (see \cite{LSV:bi_hamil_kdv,vasicek21:_wdvv_hamil,
  OpanasenkoVitolo2024} for a more complete discussion and an extensive list of
references). Basing on remarks in \cite{olver96:_tri_hamil}, we introduced two
types of bi-Hamiltonian equations determined by the following types of
bi-Hamiltonian pairs:
\begin{itemize}
\item bi-Hamiltonian pairs $(A_1,A_2)$ of
  KdV-type~\cite{LSV:bi_hamil_kdv,lorenzoni23:_bih_kdv_monge}:
\begin{equation*}
  A_1 = P_1+R_k, \qquad A_2=Q_1
\end{equation*}
where $P_1$, $Q_1$ are homogeneous first-order Hamiltonian operators,
$R_{k}$ is a homogeneous $k$-st order Hamiltonian operator,
and all operators are mutually compatible. Equations of this type include the
KdV, AKNS, Kaup--Broer, Dispersive Water Waves and Dym equations;
\item bi-Hamiltonian pairs $(A_1,A_2)$ of WDVV-type~\cite{OpanasenkoVitolo2024}:
\begin{equation*}
  A_1 = R_{k}, \qquad A_2=Q_1
\end{equation*}
where $Q_1$ and $R_{k}$ are compatible operators as above. Equations of this
type include the WDVV and Monge--Amp\`ere equations, after rewriting them as
first-order systems of conservation laws.
\end{itemize}

Conditions under which homogeneous operators of the first, second and third
orders are Hamiltonian are well known, see \emph{e.g.} the review by
Mokhov~\cite{mokhov98:_sympl_poiss}.  On the other hand, the compatibility of
two operators is a way more complicated problem.

Compatibility of two homogeneous first-order operators, local and nonlocal, has
been extensively studied (see, for example, the review
paper~\cite{mokhov17:_pencil} and~\cite{mokhov01:_nloc_compat} for the nonlocal
case) as such pairs of operators define the dispersionless limit of dispersive
integrable
hierarchies~\cite{dubrovin01:_normal_pdes_froben_gromov_witten}. Computing the
conditions is not extremely difficult, as they boil down to the Hamiltonian
property for a first-order operator, which depends linearly on one parameter.

The compatibility of homogeneous operators with different degrees of
homogeneity is a relatively simple matter only at low degrees of
homogeneity. For instance, compatible operators of order $1$ and of order $0$
can be found in \cite{mokhov98:_sympl_poiss}; they were used in in
\cite{dellatti:_geomet} in order to study bi-Hamiltonian pairs of
non-homogeneous operators. Compatible operators of the first-order modified by
means of a non-local term defined by isometries were considered
in~\cite{PVV_21_PRSA}.

However, conditions of compatibility of homogeneous operators with different
degrees of homogeneity become much more difficult to compute as the order
increases. In general, the only possibility is the direct calculation of the
Schouten bracket $[P,R_k]$. This is a natural operation on variational
multivectors (see
\cite{Dorfman:DSInNEvEq,IgoninVerbovetskyVitolo:VMBGJS,magri08:_hamil_poiss,
  Olver:ApLGDEq} for an introductory exposition) which involves an increasing
amount of calculations for higher degrees of homogeneity.  For more details on
various possible approaches to compute the Schouten bracket in case of weakly
nonlocal operators,
see~\cite{CLV19,m.20:_weakl_poiss,lorenzoni20:_weakl_schout}.

Some initial attempts to the calculation of compatibility conditions between a
first-order homogeneous operator $P$ and a higher-order operator $R_k$ of the
simplest type can be found
in~\cite{bolsinov:_applic_nijen_iii,konyaev21:_poiss,lorenzoni23:_projec_kdv}
(where $k=3$ and $R_3^{ij}=f^{ij}\partial_x^3$, $(f^{ij})$ is a constant
non-degenerate symmetric matrix) and~\cite{lorenzoni23:_bih_kdv_monge} (where
$k=2$ and $R_2^{ij}=\omega^{ij}\partial_x^2$, $(\omega^{ij})$ is a constant
non-degenerate skew-symmetric matrix).

The main difficulty in the above calculations is to be able to determine a
`minimal' set of conditions that are necessary and sufficient to the vanishing
of the Schouten bracket. Indeed, most of the coefficients are differential
consequences of conditions that are `hidden' inside a few selected coefficients.

In this paper, motivated by the study of bi-Hamiltonian pairs of KdV (see for
instance~\cite{antonowicz89:_factor_scroed,antonowicz88:_coupl_harry_dym,
  Kuper_85}) and WDVV type
(see~\cite{opanasenko25:_wdvv_hamil,FGMN97,PV15,vasicek21:_wdvv_hamil}), we
compute a (possibly minimal) set of compatibility conditions for a first-order
weakly nonlocal homogeneous Hamiltonian operator and a third-order homogeneous
Hamiltonian operator.

It is well known that homogeneous operators are invariant with respect to the
group of diffeomorphisms of the dependent variables.  This suggests one to
reduce one of the two considered operators to a canonical form with respect to
this group. Such forms are known both for local first-order operators (that can
be reduced to constant form thanks to a classical result of Dubrovin and
Novikov~\cite{DN83}) and for general third-order operators. These can be
reduced to the so-called canonical Doyle--Pot\"emin form
\cite{doyle93:_differ_poiss,potemin91:PhDt,potemin97:_poiss}:
\begin{equation*}
R^{ij}_3=\mathrm D_x(f^{ij}\mathrm D_x+c^{ij}_su^s_x)\mathrm D_x.
\end{equation*}
As a consequence of the Hamiltonian property, $f_{ij}$ is a Monge metric of the
form
\begin{equation}\label{eq:13}
  f_{ij} = \phi_{\alpha\beta}\psi^\alpha_i\psi^\beta_j,
\end{equation}
where $\phi$ is a constant non-degenerate symmetric square matrix of order~$n$,
and $\psi _{k}^{\gamma }=\psi _{ks}^{\gamma }u^s+\omega _{k}^{\gamma }$, with
$\psi _{ij}^{\gamma }=-\psi _{ji}^{\gamma }$, is a non-degenerate square
matrix of dimension~$n$, with the constants $\phi_{\alpha\beta}$,
$\psi _{ij}^{\gamma }$ and $\omega _{k}^{\gamma }$ satisfying the relations
\begin{subequations}\label{eq:18}
  \begin{align}
    \phi _{\beta \gamma }(\psi _{il}^{\beta }\psi _{jk}^{\gamma }
    +\psi_{jl}^{\beta }\psi _{ki}^{\gamma }
    +\psi _{kl}^{\beta }\psi _{ij}^{\gamma})=0,
    \\
    \phi _{\beta \gamma }(\omega _{i}^{\beta }\psi _{jk}^{\gamma
    }+\omega _{j}^{\beta }\psi _{ki}^{\gamma }+\omega _{k}^{\beta }\psi
    _{ij}^{\gamma })=0.
  \end{align}
\end{subequations}
The residual group that preserves the canonical forms of $P$ and $R_3$ is the
group of affine transformations of the dependent variables. However, since the
conditions that imply the Hamiltonian property of third-order operators are
known only in the Doyle--Pot\"emin form it is natural to assume $R_3$ to be of
such form.  This choice simplifies a lot the computations, even if one has to
pay the price that the differential geometry of the compatibility conditions is
lost.

On the other hand, it is known that Doyle--Pot\"emin form is preserved by an
extended group, that of the \emph{reciprocal projective
  transformations}~\cite{FPV14,FPV16}, which are projective transformations of
the dependent variables coupled with a $t$-preserving reciprocal transformation
of the independent variables:
\begin{equation*}
  \tilde{u}^i = \frac{T^i_j u^j + T^i_0}{\Delta},\quad \mathrm d\tilde{x} =
  \Delta\,\mathrm dx,\quad \mathrm d\tilde{t}=\mathrm dt,\quad\text{where}\quad
  \Delta = T^0_j u^j + T^0_0\neq0.
\end{equation*}
Since reciprocal projective transformations do not preserve locality we will
consider first-order operators containing a weak nonlocality,
\begin{equation}\label{WNL}
P^{ij}=g^{ij}\mathrm D_x + \Gamma^{ij}_s u^s_x
  + c^{\alpha\beta}w^i_{\alpha k}u^k_x\mathrm D_x^{-1}w^j_{\beta
    l}u^l_x.
\end{equation}
This choice, besides being natural from the point of view of the invariance
group of the problem, is also justified by the existence of several examples
of systems whose bi-Hamiltonian pair contains one of such commuting  pairs
(\emph{e.g.} in \cite{vasicek21:_wdvv_hamil}).

The main results of the paper are listed below.
\begin{enumerate}
\item We calculated the full set of compatibility conditions for the
  operators~$P$ and $R_3$ (this is a nontrivial generalisation of the results
  of~\cite{lorenzoni23:_projec_kdv}), Theorems~\ref{th:comp_nloc} and
  \ref{th:comp_loc}. In particular, we prove that the nonlocal `tail'
  coefficients $w^i_{\alpha j}u^j_x$ of the first-order operator are
  Hamiltonian systems of conservation laws of the third-order operator for
  every $\alpha$:
  \begin{equation}\label{eq:41}
    u^i_t = (w^i_\alpha)_x = w^i_{\alpha k}u^k_x =
    R_3^{ik}\left(\fd{H}{u^k}\right),\qquad i,k=1,\ldots,n.
  \end{equation}
  Such systems are completely described in~\cite{FPV17:_system_cl}
  (see also~\cite{vergallo20:_homog_hamil}): they have the form
  $w^i_\alpha= \psi^i_\gamma Z^\gamma_\alpha$, where $(\psi^i_\gamma)$ is the
  inverse matrix of $(\psi^\gamma_i)$ in~\eqref{eq:13} and
  $Z^\gamma_\alpha = \theta^\gamma_{\alpha k}u^k + \xi^\gamma$ are linear
  functions such that the following linear algebraic equations hold:
  \begin{subequations}\label{eq:16}
    \begin{align}
    &\phi_{\beta\gamma}[\psi_{ij}^{\beta}\theta_{k}^{\gamma j}
      +\psi_{jk}^{\beta}\theta_{i}^{\gamma j}
      +\psi_{ki}^{\beta}\theta_{j}^{\gamma j}]=0,
    \\
    & \phi_{\beta\gamma}[\psi_{ik}^{\beta}\xi^{\gamma j}
      +\omega_{k}^{\beta}\theta_{i}^{\gamma j}
      -\omega_{i}^{\beta}\theta_{k}^{\gamma j}]=0.
    \end{align}
  \end{subequations}
  So, the space of the above Hamiltonian systems is a finite-dimensional vector
  space.
\item We integrated part of the remaining conditions of compatibility,
  and reduced all other
  conditions to algebraic equations, Theorem~\ref{th:compcond_integrated}.  In
  that process, we found the Structure Formula (Corollary~\ref{cor:structure-g})
  \begin{equation*}
    g^{ij} = \psi^i_\gamma Z^{\gamma j} + \psi^j_\gamma Z^{\gamma i}
      -c^{\alpha\beta}w^i_\alpha w^j_\beta
  \end{equation*}
  for the components of the first-order operator (the metric and the nonlocal
  `tail'). The summands $\psi^i_\gamma Z^{\gamma j}$ are again Hamiltonian
  systems of conservation laws of the third-order operator for any
  $j=1,\ldots,n$, with $Z^{\gamma j}=\theta^{\gamma j}_k u^k + \xi^{\gamma j}$,
  where the coefficients $\theta^{\gamma j}_k$, $\xi^{\gamma j}$ fulfill the
  same system of algebraic equations of~\eqref{eq:16}.
\item Finally, in Theorem~\ref{theor:Ham_prop} we proved that the Hamiltonian
  property of the first-order operator implies that the above Hamiltonian
  systems commute: $V^i_{,k}W^k_{,j} = W^i_{,k}V^k_{,j}$, where `$,k$' stands
  for derivative with respect to $u^k$ and $V^i$, $W^i$ are any of the above
  Hamiltonian systems $\psi^i_\gamma Z^{\gamma j}$, $w^i_\alpha = \psi^i_\gamma
  Z^\gamma_\alpha$.
\item We describe the impact of our results on several examples, two of which
  are new: the first-order Hamiltonian operators for first-order WDVV systems
  in dimensions $N=4$ and $N=5$.
\end{enumerate}

We remark that, years ago, we found that a third-order homogeneous Hamiltonian
operator~$R_3$ is completely determined by a quadratic line complex, which is
an algebraic variety in the Pl\"ucker embedding of the projective space with
affine coordinates~$(u^i)$~\cite{FPV14,FPV16}. Moreover, it was
found~\cite{FPV17:_system_cl} that systems of conservation laws that are
Hamiltonian with respect to~$R_3$ can be identified with linear line
congruences, which are also a projective variety in the Pl\"ucker embedding of
a suitable projective space.

After a reciprocal transformation of projective type, all the above algebraic
varieties are preserved. We still do not have a direct proof that first-order
operators~$P$ that are compatible with third-order operators~$R_3$ are
form-invariant with respect to such transformations, but it is very reasonable
to conjecture it (see also the discussion in~\cite{opanasenko25:_wdvv_hamil}).

The paper is organised as follows. In Section~2 we recall the main facts about
first- and third-order homogeneous Hamiltonian operators. These facts are used
in Section~3, that is devoted to the computation of compatibility conditions
and to their consequences. Section~4 contains examples of bi-Hamiltonian
structures of KdV- and WDVV-type. An Appendix is devoted to specific, nontrivial
parts of the computation of compatibility conditions.

\subsection*{Acknowledgements}

PL, SO and RV have been partially supported by the research project
Mathematical Methods in Non Linear Physics (MMNLP) by the Commissione
Scientifica Nazionale -- Gruppo 4 -- Fisica Teorica of the Istituto Nazionale di
Fisica Nucleare (INFN), the Department of Mathematics and Applications of the
Universit\`a Milano-Bicocca, the Department of Mathematics and Physics ``E. De
Giorgi'' of the Universit\`a del Salento, GNFM of the Istituto Nazionale di
Alta Matematica (INdAM). PL and RV have been partially supported by PRIN
2022TEB52W \textit{The charm of integrability: from nonlinear waves to random
  matrices}. RV has been partially supported by ICSC -- Centro Nazionale di
Ricerca in High Performance Computing, Big Data and Quantum Computing, funded
by European Union -- NextGenerationEU and PRIN2020 F3NCPX ``Mathematics for
Industry 4.0''.

\section{Hamiltonian structures and their compatibility}
\label{sec:hamilt-struct}

In this paper, the main problem is to determine the compatibility conditions
between two Hamiltonian operators belonging to two distinct classes: a
first-order weakly nonlinear Hamiltonian operator~$P$ of hydrodynamic type
(see, \emph{e.g.} \cite{ferapontov91:_nloc_hamil}) and a third-order
homogeneous Hamiltonian operator~$R:=R_3$~\cite{DubrovinNovikov:PBHT}. In this
Section we will summarize the main properties of operators in these two
classes.

\subsection{Weakly nonlinear Hamiltonian operator of
  hydrodynamic type}
\label{sec:first-order-hamilt}

Weakly nonlinear Hamiltonian operators of hydrodynamic type
$P=(P^{ij})$:
\begin{equation*}
  P^{ij}=g^{ij}\mathrm D_x + \Gamma^{ij}_s u^s_x
  + c^{\alpha\beta}w^i_{\alpha k}u^k_x\mathrm D_x^{-1}w^j_{\beta
    l}u^l_x
\end{equation*}
are a direct nonlocal generalization of Dubrovin--Novikov homogeneous
Hamiltonian operators of the first order~\cite{DN83}, and were first introduced by
Ferapontov (see the review papers \cite{ferapontov91:_nloc_hamil,F95:_nl_ho}).
Here $g^{ij}$, $\Gamma^{ij}_k$ and $w^i_{\alpha k}$ are functions of the
dependent variables $u^i=u^i(t,x)$, and indices $i$, $j$, $k$, $l$ are in the
range $\{1,\ldots,n\}$. The matrix $(c^{\alpha\beta})$ is square and constant;
in principle, its dimension $m$ is not related with $n$.  The operator $P$
reduces to the Dubrovin--Novikov case when $c^{\alpha\beta}=0$. Throughout this
paper, we assume that the leading coefficient matrix $(g^{ij})$ is
non-degenerate: $\det(g^{ij})\neq 0$. We set $(g_{ij})=(g^{ij})^{-1}$.

We recall that, under the action of a transformation of the dependent variables
$\tilde{u}^i = U^i(u^j)$, $i=1, \dots, n$, $g^{ij}$ transform as the components
of a contravariant $2$-tensor, and the quantities
$\Gamma^i_{jk}=-g_{jp}\Gamma^{pi}_k$ transform as the Christoffel symbols of a
linear connection. Moreover, the quantities $W_{\alpha}=(w^i_{\alpha j})$
transform as the components of a $(1,1)$ tensor, and when contracted with
velocities as in $v^i_\alpha=w^i_{\alpha j}u^j_x$ they transform as the
components of a vector field $v_\alpha=w^i_{\alpha j}u^j_x\pd{}{u^i}$.

It is well-known that the operator~$P$ is Hamiltonian if and only if its associated bracket is Poisson,
\emph{i.e.}, skew-symmetric and Jacobi. It is equivalent to the requirements that~$P$ must be formally
skew-adjoint and the Schouten bracket $[P,P]$ vanishes. It was proved~\cite{F95:_nl_ho}
(see also~\cite{OpanasenkoVitolo2024}) that such
requirements are equivalent to the following conditions:
\begin{enumerate}
\item $g=(g_{ij})$ is a metric, and $\Gamma^i_{jk}$ are the Christoffel symbols
  of the Levi-Civita connection of~$g$; the latter property is equivalent to
  the conditions
  \begin{gather}
    \label{eq:15}
        g^{ij}_{,k} = \Gamma^{ij}_k + \Gamma^{ji}_k,
        \\
        \label{eq:17}
    g^{is}\Gamma^{jk}_s = g^{js}\Gamma^{ik}_s,
  \end{gather}
  where $g^{ij}_{,k}=\pd{g^{ij}}{u^k}$;
\item the $(1,1)$ tensors $W_\alpha=(w^i_{\alpha j})$ are symmetric with
  respect to the metric $g_{ij}$:
  \begin{equation*}
    g^{is}w^j_{\alpha s} = g^{js}w^i_{\alpha s};
  \end{equation*}
\item the tensors $W_\alpha$ fulfill the further property
  \begin{equation}\label{eq:25}
    \nabla_k w^i_{\alpha j} = \nabla_j w^i_{\alpha k};
  \end{equation}
\item the vector fields $v_\alpha=w^i_{\alpha k}u^k_x\pd{}{u^i}$ are a
  commuting family:
  \begin{equation*}
        [v_\alpha,v_\beta] = 0,
  \end{equation*}
  where $[,]$ is the Jacobi bracket~\cite{KruglikovLychagin:GDEq}. When the
  components of $v_\alpha$ and $v_\beta$ are total $x$-derivatives:
  $w^i_{\alpha j}u^j_x = (w^i_\alpha)_x$ (and the same for $v_\beta$), the
  above condition reduces to the commutativity of the matrix product: $W_\alpha
  W_\beta = W_\beta W_\alpha$, or explicitly
  \begin{equation*}
    w^i_{\alpha k}w^k_{\beta j} = w^i_{\beta k}w^k_{\alpha j};
  \end{equation*}
\item the Riemannian curvature\footnote{We follow the Kobayashi--Nomizu
    convention for the Riemannian curvature; see~\cite{OpanasenkoVitolo2024}
    for more details.}  of the Levi-Civita connection of~$g_{ij}$ fulfills
  \begin{equation}\label{eq:34}
    R^{ij}_{kl} = c^{\alpha\beta}\Big( w^i_{\alpha l}w^j_{\beta k}
    - w^i_{\alpha k}w^j_{\beta l}\Big)
      \end{equation}
\item the matrix $(c^{\alpha\beta})$ is symmetric and nondegenerate.
\end{enumerate}
The above conditions have a nice geometric interpretation that is given
in~\cite{ferapontov91:_nloc_hamil,F95:_nl_ho}.

\subsection{Third-order Hamiltonian operator in canonical form}
\label{sec:third-order-hamilt}

The most general form of a third-order homogeneous operator
$R=(R^{ij})$, in accordance with the definition given in~\cite{DubrovinNovikov:PBHT},
is
\begin{equation*}
    R^{ij}=
     f^{ij}\mathrm D_{x}^{3}
      +b_{s}^{ij}u_{x}^{s}\mathrm D_{x}^{2}
  +[c_{s}^{ij}u_{xx}^{s}
  +c_{st}^{ij}u_{x}^{s}u_ {x}^{t}]\mathrm D_{x}+d_{s}^{ij}u_{xxx}^{s}
  +d_{st}^{ij}u_{x}^{s}u_{xx}^{t}
  +d_{srt}^{ij}u_{x}^{s}u_{x}^{r}u_{x}^{t},
\end{equation*}
but in this generality a set of conditions that ensure the
Hamiltonian property of $R$ is unknown.
(As usual we assume that $(f^{ij})$ is a non-degenerate matrix,
and we set $(f_{ij})=(f^{-1})^{ij}$.)
What is known though is, as in the first-order case, $f_{ij}$ transform as
the components of a covariant metric. Moreover, the quantities
 $-\frac{1}{3}f_{jk}d^{ki}_{s}$ transform as the
Christoffel symbols of a linear connection, which is symmetric and flat
as proved independently in~\cite{doyle93:_differ_poiss} and \cite{potemin91:PhDt,
  potemin97:_poiss}. This result has also been generalised to higher-order Hamiltonian operators~\cite{CarletCasati2025}.

On the other hand, there always exists a change of dependent variables
that brings a Hamiltonian operator of the above form
into what we call \emph{Doyle--Pot\"emin canonical form},
\begin{equation*}
R^{ij}=\mathrm D_x(f^{ij}\mathrm D_x+c^{ij}_su^s_x)\mathrm D_x.
\end{equation*}
This form has considerably less coefficients, and the Hamiltonian property is
equivalent to
\begin{subequations}
\begin{gather}
f^{ij}_{,k}=c^{ij}_k+c^{ji}_k,\label{f_der}\\
c^{ij}_kf^{kp}=-c^{pj}_kf^{ki},\label{c_f_sym}\\
c^{ij}_kf^{kp}+c^{jp}_kf^{ki}+c^{pi}_kf^{kj}=0,\label{c_f_cycle}\\
c^{ij}_{k,l}f^{kp} = c^{ip}_kc^{kj}_l - c^{pi}_kc^{kj}_l
- c^{pj}_kc^{ik}_l - c^{pj}_kc^{ki}_l.\label{c_der}
\end{gather}
\end{subequations}
It was proved \cite{FPV14} that, introducing the quantities
$c_{ijk}=f_{iq}f_{jp}c_{k}^{pq}$ the above system simplifies into
\begin{subequations}\label{eq:505}
  \begin{gather}
    c_{ijk}=\frac{1}{3}(f_{ik,j}-f_{ij,k}), \label{eq:506}\\
    f_{ij,k}+f_{jk,i}+f_{ki,j}=0,\label{eq:507}\\
    c_{ijk,l}=-f^{pq}c_{pil}c_{qjk}.\label{eq:733}
  \end{gather}
\end{subequations}
A covariant $2$-tensor $f$ satisfying equations \eqref{eq:507} must be a Monge
metric of a quadratic line complex (see~\cite{FPV14} and reference therein).
This is an algebraic variety that is a section of the Pl\"ucker variety by
means of a quadric. Here, the Pl\"ucker embedding that we are considering is
that of the projective space whose affine chart is $(u^i)$. In particular,
being Monge, $f_{ij}$ is a quadratic polynomial in the field variables~$u$.  It
was further proved
in~\cite{balandin01:_poiss,doyle93:_differ_poiss,potemin91:PhDt,potemin97:_poiss}
that the tensor $f_{ij}$ can be factorized as
\begin{equation*}
  f_{ij} = \phi_{\alpha\beta}\psi^\alpha_i\psi^\beta_j,\quad
  \text{\Big(or, in a matrix form,} \quad f=\Psi\Phi\Psi^\top\Big)
\end{equation*}
where $\phi$ is a constant non-degenerate symmetric matrix of order~$n$, and
$\psi _{k}^{\gamma }=\psi _{ks}^{\gamma }u^s+\omega _{k}^{\gamma }$, with
$\psi _{ij}^{\gamma }=-\psi _{ji}^{\gamma }$, is a non-degenerate square matrix
of dimension~$n$, with the constants $\phi _{\beta \gamma }$,
$\psi _{ij}^{\gamma }$ and $\omega _{k}^{\gamma }$ satisfying the relations~\eqref{eq:18}.

It is known~\cite{FPV17:_system_cl} that a third-order Hamiltonian
operator~$R$ is the Hamiltonian operator of a first-order system of
conservation laws of the form $u^i_t = (V^i)_x$, $i=1,\ldots,n$
(see~\eqref{eq:41}) if
\begin{subequations}
  \label{V}
  \begin{align}\label{e1}
    & f_{im}V^{m}_{,j}=f_{jm}V^m_{,i},\\
    \label{e3}
    & c_{mkl}V^m_{,i}+c_{mik}V^m_{,l}+c_{mli}V^m_{,k}=0,\\
    \label{e2}
    &f_{ks}V^k_{,ij}=c_{smj}V^m_{,i} + c_{smi}V^{m}_{,j},
  \end{align}
\end{subequations}
where $V^i_{,j}=\pd{V^i}{u^j}$.
Nevertheless, it turns out that this result can be improved.

\begin{theorem}\label{th:redundant_condition_3ord_Ham}
  In the system~\eqref{V} the conditions~\eqref{e3} are a consequence
  of~\eqref{e1} and~\eqref{e2}.
\end{theorem}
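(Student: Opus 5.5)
The plan is to differentiate \eqref{e1} with respect to $u^k$, replace the resulting second derivatives of $V$ using \eqref{e2}, and replace the derivatives of $f$ using the simplified Hamiltonian conditions \eqref{eq:505}. What is left should be exactly \eqref{e3}.

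\emph{Step 1: identities for $f$ and $c$.} From \eqref{eq:506}, $c_{ijk}$ is skew-symmetric in its last two indices, $c_{ijk}=-c_{ikj}$. Combining \eqref{eq:506} with the cyclic identity \eqref{eq:507}, I compute
\begin{equation*}
c_{ijk}+c_{jik}=\tfrac13\big(f_{ik,j}+f_{jk,i}-2f_{ij,k}\big)=-f_{ij,k},
\end{equation*}
so $f_{ij,k}=-c_{ijk}-c_{jik}$. I also need the cyclic identity $c_{ijk}+c_{jki}+c_{kij}=0$. It follows from \eqref{c_f_cycle} after lowering indices with $f$, and I expect it can also be checked directly from \eqref{eq:506} and \eqref{eq:507}.

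\emph{Step 2: differentiate the symmetry condition.} Since $f_{im}V^m_{,j}$ is symmetric in $i,j$ by \eqref{e1}, its $u^k$-derivative
\begin{equation*}
f_{im,k}V^m_{,j}+f_{im}V^m_{,jk}
\end{equation*}
is also symmetric in $i,j$. By \eqref{e2}, with indices renamed, $f_{im}V^m_{,jk}=c_{imk}V^m_{,j}+c_{imj}V^m_{,k}$. Substituting this and $f_{im,k}=-c_{imk}-c_{mik}$ from Step 1, the terms $c_{imk}V^m_{,j}$ cancel. We are left with the statement that
\begin{equation*}
-c_{mik}V^m_{,j}+c_{imj}V^m_{,k}
\end{equation*}
is symmetric in $i,j$.

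\emph{Step 3: rewrite the symmetry as \eqref{e3}.} Write the symmetry explicitly as the difference of the expression and its $i\leftrightarrow j$ swap being zero. The coefficient of $V^m_{,k}$ in that difference is $c_{imj}-c_{jmi}$. Using skew-symmetry and the cyclic identity,
\begin{equation*}
c_{imj}-c_{jmi}=c_{imj}+c_{jim}=-c_{mji}=c_{mij}.
\end{equation*}
The symmetry condition therefore becomes
\begin{equation*}
c_{mij}V^m_{,k}+c_{mjk}V^m_{,i}-c_{mik}V^m_{,j}=0.
\end{equation*}
Since $-c_{mik}=c_{mki}$, this reads $c_{mij}V^m_{,k}+c_{mjk}V^m_{,i}+c_{mki}V^m_{,j}=0$, which after relabelling indices is exactly \eqref{e3}.

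The only delicate point is the index bookkeeping in Steps 1 and 3. This means getting the precise sign form of $f_{ij,k}$ in terms of $c$, and confirming that the cyclic identity for $c_{ijk}$ really holds in lowered-index form under the convention $c_{ijk}=f_{iq}f_{jp}c^{pq}_k$. I would verify both directly from \eqref{eq:506}–\eqref{eq:507} before assembling the argument.
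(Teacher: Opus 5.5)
Your proof is correct and is essentially the paper's argument. The paper antisymmetrizes \eqref{e2} in $s,j$ and removes the second-derivative terms with the $u$-derivative of \eqref{e1}, using $f_{ij,k}=-(c_{ijk}+c_{jik})$ and the cyclic identity for $c_{ijk}$; you do the same computation starting from the differentiated \eqref{e1}, and your index bookkeeping checks out (the cyclic identity in fact follows from \eqref{eq:506} and the symmetry of $f$ alone).
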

\begin{proof}
  Let us take the difference of the condition \eqref{e2} with the same
  condition with the indices $s$ and $j$ exchanged:
  \begin{equation}
    \label{eq:181}
    f_{ks}V^k_{,ij} - f_{kj}V^k_{,is} -c_{skj}V^k_{,i} - c_{ski}V^{k}_{,j}
    + c_{jks}V^k_{,i} + c_{jki}V^{k}_{,s} = 0.
  \end{equation}
  We have
  \begin{align*}
    f_{ks}V^k_{,ij} - f_{kj}V^k_{,is}= & (f_{ks}V^k_{,j} - f_{kj}V^k_{,s})_{,i}
    - (f_{ks,i}V^k_{,j} - f_{kj,i}V^k_{,s})
    \\
    =& (c_{ksi} + c_{ski})V^k_{,j} - (c_{kji}+c_{jki})V^k_{,s},
  \end{align*}
  where we used~\eqref{e1}. Then,~\eqref{eq:181} becomes
  \begin{align*}
    &-c_{skj}V^k_{,i} - c_{ski}V^{k}_{,j}
      + c_{jks}V^k_{,i} + c_{jki}V^{k}_{,s} +
      \\
    &\hphantom{ciaociao} (c_{ksi} + c_{ski})V^k_{,j} - (c_{kji}+c_{jki})V^k_{,s} =
    \\
    &-c_{skj}V^k_{,i}
    + c_{jks}V^k_{,i} +
    c_{ksi}V^k_{,j} - c_{kji}V^k_{,s} =
    \\
    &(c_{jsk}+c_{kjs})V^k_{,i}  + c_{jks}V^k_{,i} +
    c_{ksi}V^k_{,j} - c_{kji}V^k_{,s} =
    \\
    &c_{kjs}V^k_{,i} + c_{ksi}V^k_{,j} + c_{kij}V^k_{,s} = 0.
  \end{align*}
\end{proof}

The interested reader can find in \cite{FPV17:_system_cl} the general
expression of the Hamiltonian any system of conservation laws of the above
type.

\section{Conditions of compatibility}
\label{sec:cond-comp}

\subsection{Variational Schouten bracket of operators}
\label{sec:conj-struct-g_1}

In what follows we will assume that $\psi$ denotes a covector-valued density:
$\psi=\psi_i\mathrm du^i\otimes(\mathrm dt\wedge \mathrm dx)$ and
$\phi=\phi^i\pd{}{u^i}$ denotes a (generalized) vector field.  Their
coefficients depend on field variables $(u^i)$ and their $x$-derivatives up to
a finite (although unspecified) order.  Such elements are dual in the geometric
theory of the calculus of variations; conservation laws are uniquely
represented by their characteristic vectors, which are covector-valued
densities, and symmetries are generalized vector fields
\cite{KrasilshchikVinogradov:SCLDEqMP,Olver:ApLGDEq}.

A Hamiltonian operator $A$ can be regarded as a variational bivector,
\emph{i.e.} a bi-differential operator taking two covectors as arguments, with
values into densities: $A(\psi^1,\psi^2)$ is a density, and it is defined up
to total $x$-derivatives. The variational Schouten bracket of two Hamiltonian
operators $[A_1,A_2]$ is a variational three-vector. We will use the following
formula for the variational Schouten bracket (this formula is used by Dubrovin
and Zhang~\cite{dubrovin01:_normal_pdes_froben_gromov_witten}):
\begin{equation*}
  [A_1,A_2](\psi^1,\psi^2)(\psi^3) =
  \left[\ell_{A_1,\psi^1}(A_2(\psi^2))(\psi^3)
    +  \ell_{A_2,\psi^1}(A_1(\psi^2))(\psi^3)  +
    \text{cyclic}(\psi^1,\psi^2,\psi^3)\right].
\end{equation*}
See \cite{KrasilshchikVinogradov:SCLDEqMP,Dorfman:DSInNEvEq,
  IgoninVerbovetskyVitolo:VMBGJS,IgoninVerbovetskyVitolo:FLVDOp,
  magri08:_hamil_poiss,Olver:ApLGDEq} for a comprehensive theory.  The square
brackets in the right-hand side of the above expression mean that the result is
an equivalence class with respect to total $x$-derivatives. In what follows we
will drop the square brackets for ease of reading, although retaining the fact
that computations hold up to total $x$-derivatives.

We recall the formula for the linearization of an operator~$A$ and its adjoint.
If $A(\psi)= a^{ij\sigma}\mathrm D_\sigma\psi_j$ then
\begin{gather*}
  \ell_{A,\psi}(\phi) = \frac{\partial a^{ij\sigma}}{\partial u^k_\tau}
  \mathrm D_\sigma\psi_j\mathrm D_\tau\phi^k,\quad
  \ell^*_{A,\psi^1}(\psi^2) = (-1)^{|\tau|}\mathrm D_\tau
  \left(\psi^2_{i}\frac{\partial a^{ij\sigma}}{\partial u^k_\tau}
  \mathrm D_\sigma\psi^{1}_{j}\right).
\end{gather*}
We recall that if $\Delta$, $\square$ are two operators in total derivatives
then $(\Delta\circ\square)^* = \square^*\circ\Delta^*$. Note that
$(\mathrm D^{-1})^*=-\mathrm D^{-1}$. Moreover, we recall that
\emph{linearization commutes with total derivatives also in case of negative
  exponent}.
There is a linearization formula for the composition
of operators:
\begin{equation*}
\ell_{A\circ B,\psi}(\varphi) = \ell_{A,B\psi}(\varphi) +
A\circ\ell_{B,\psi}(\varphi).
\end{equation*}
In view of next computations we introduce the following notation:
\begin{gather*}
  R = \mathrm D_x\circ F \circ \mathrm D_x,\quad\text{where}\quad
  F = f^{ij}\mathrm D_x + c^{ij}_k u^k_x,
  \\
  P = L + c^{\alpha\beta}N_{\alpha\beta},\quad\text{where}\quad
  L= g^{ij}\mathrm D_x + \Gamma^{ij}_k u^k_{x},\quad
  N_{\alpha\beta}= w^i_{\alpha k}u^k_{x}\mathrm D_x^{-1}w^j_{\beta h}u^h_{x}
\end{gather*}
Then we have
\begin{equation*}
  [P,R] = [L,R] + c^{\alpha\beta}[N_{\alpha\beta},R]
\end{equation*}
where
\begin{align*}
   [L,R](\psi^1,\psi^2,\psi^3) =
  &\ell_{L,\psi^1}(R\psi^2)\psi^3 + \mathrm D_x(\ell_{F,\mathrm D_x\psi^1}(L\psi^2))\psi^3
      + \text{cyclic}(\psi^1,\psi^2,\psi^3)
\end{align*}
and
\begin{align*}
   [N_{\alpha\beta},R](\psi^1,\psi^2,\psi^3) =
  &\ell_{N,\psi^1}(R\psi^2)\psi^3 + \mathrm D_x(\ell_{F,\mathrm D_x\psi^1}(N_{\alpha\beta}\psi^2))\psi^3
      + \text{cyclic}(\psi^1,\psi^2,\psi^3)
\end{align*}

The building blocks of our computation are the linearizations of $L$, $F$
and~$N_{\alpha\beta}$,
\begin{align*}
  &\ell_{L,\psi^1}(\varphi) = g^{ij}_{,k}\mathrm D_x\psi^1_j\varphi^k +
    (\Gamma^{ij}_{h,k}u^h_x\varphi^k + \Gamma^{ij}_{h}\mathrm D_x\varphi^h)\psi^1_j,
  \\
  &\ell_{F,\psi^1}(\varphi) = f^{ij}_{,k}\mathrm D_x\psi^1_j\varphi^k +
  (c^{ij}_{h,k}u^h_x\varphi^k + c^{ij}_{h}\mathrm D_x\varphi^h)\psi^1_j,
  \\
  &\ell_{N_{\alpha\beta},\psi^1}(\varphi) = (w^i_{\alpha k,l}u^k_x\varphi^l + w^i_{\alpha k}\mathrm D_x\varphi^k)
  \mathrm D_x^{-1}(w^j_{\beta h}u^h_x\psi^1_j) +
    w^i_{\alpha h}u^h_x\mathrm D_x^{-1}(w^j_{\beta h,l}u^h_x\varphi^l + w^j_{\beta h}\mathrm D_x\varphi^h)\psi^1_j.
\end{align*}
Introducing the notation
\begin{equation*}
  \tilde{\psi}^i_\alpha = \mathrm D_x^{-1}(w^j_{\alpha k}  u^k_x \psi^i_j),
\end{equation*}
we can rewrite the nonlocal part of the linearization (up to total
$x$-derivatives, or `integrating by parts') as
\begin{equation*}
  \ell_{N_{\alpha\beta},\psi^1}(\varphi) =
  (w^i_{\alpha k,l}u^k_x\varphi^l + w^i_{\alpha k}\mathrm D_x\varphi^k)
  \tilde{\psi}^1_\beta
  - (w^j_{\beta h,l}u^h_x\varphi^l + w^j_{\beta h}\mathrm D_x\varphi^h)
  \tilde{\psi}^1_\alpha
\end{equation*}

In order to compute the components of the three-vector~$[P,R]$,
we will use the algorithm developed in~\cite{CLV19} (see also~\cite{m.20:_weakl_poiss}).
Then, we will gradually simplify the coefficients of the three-vector
using the Hamiltonian properties of~$P$ and~$R$.
It will leave us with a set of conditions
whose vanishing is equivalent to the compatibility of~$P$ and~$R$.

Since the expressions of the coefficients of the three-vector are huge
we limit ourselves to giving a precise description on how to obtain them,
describing the main points and issues of the simplification process
and presenting the conditions under which all coefficients of the three-vector vanish.

\subsection{The algorithm}

The three-vector $[P,R]$ has coefficients that do not determine it uniquely, in
view of the fact that $[P,R]$ is defined up to total $x$-derivatives.  So, our
aim is to obtain a unique, total $x$-derivative free representative of
$[P,R]$. The algorithm in \cite{CLV19} is divided in two parts.

\paragraph{First step: computing the nonlocal part.} We bring all summands
which are linear in $\tilde{\psi}^i_\alpha$'s to the form
\begin{equation}\label{eq:43}
  c^{jp}\tilde{\psi}^a_{\alpha}\mathrm D_x^{n}\psi^b_j\psi^c_p
\end{equation}
where $c^{jp}$ are coefficients depending on~$(u^i,u^i_\sigma)$
and $(a,b,c)\in\{(1,2,3),(3,1,2),(2,3,1)\}$.
We will do this by `integration by parts'; for example,
\begin{equation*}
  c^{jp}\, \tilde{\psi}^1_{\alpha}\psi^2_j\mathrm
  D_x^{n}\psi^3_p
  =
  (-1)^n\mathrm D_x^n\big(c^{jp}\,
  \tilde{\psi}^1_{\alpha}\psi^2_j\big)\mathrm \psi^3_p +
  \text{total $x$-derivative terms},
\end{equation*}
so that any summand which is not of the type of~\eqref{eq:43} is brought to
that type up to total $x$-derivatives.

In the process, we will create new local terms that will be added to the local
part of the three-vector. Note that $[P,R]$ does not contain, in this
particular case, nonlocal terms which are quadratic in
$\tilde{\psi}^i_\alpha$'s.

\paragraph{Second step: computing the local part.} We operate on the local part
of the three-vector as usual. More precisely, we bring the three-vector to a
form that is of zero order with respect to the argument $\psi^3$ using
`integration by parts'; this form is free from total $x$-derivatives. For
example,
\begin{equation*}
  c^{ijp}\, \mathrm D_x^{m_1}\psi^1_i\mathrm D_x^{m_2}\psi^2_j
  \mathrm D_x^{m_3}\psi^3_p
  =
  (-1)^{m_3}\mathrm D_x^{m_3}\big(c^{jp}\,
  \mathrm D_x^{m_1}\psi^1_i\mathrm D_x^{m_2}\psi^2_j\big)\psi^3_p +
  \text{total $x$-derivative terms}.
\end{equation*}

\subsection{Preparatory calculations}
\label{sec:prep-calc}

First of all, we need to compute the four summands in~$[P,R]$.
Then we will apply a cyclic permutation.

\begin{align*}
  \ell_{L,\psi^1} (R(\psi^2))(\psi^3) &=
    g^{pi}_{,k}\ddx{}\psi^1_i
    \big(f^{kj}\ddx{3}\psi^2_j +
    ((f^{kj}_{,h} + c^{kj}_h)u^h_x)\ddx{2}\psi^2_j +
  (c^{kj}_{h,l}u^l_xu^h_x + c^{kj}_hu^h_{2x})\ddx{}\psi^2_j\big)\psi^3_p
    \\
   & +
   \Big(\Gamma^{pi}_{h,k}u^h_x
   \big(f^{kj}\ddx{3}\psi^2_j +
  ((f^{kj}_{,h} + c^{kj}_h)u^h_x)\ddx{2}\psi^2_j +
  (c^{kj}_{h,l}u^l_xu^h_x + c^{kj}_hu^h_{2x})\ddx{}\psi^2_j
  \\
  & + \Gamma^{pi}_{k}   \ddx{}\big(f^{kj}\ddx{3}\psi^2_j +
     ((f^{kj}_{,h} + c^{kj}_h)u^h_x)\ddx{2}\psi^2_j
    +  (c^{kj}_{h,l}u^l_xu^h_x + c^{kj}_hu^h_{2x})\ddx{}\psi^2_j\big)\Big)
    \psi^1_i\psi^3_p,
  \\[1ex]
  \partial_x(\ell_{F,\partial_x(\psi^1)}(L(\psi^2)))(\psi^3) &=
  \ddx{}\Big(f^{pi}_{,k}\ddx{2}\psi^1_i \big(g^{kj}\ddx{}(\psi^2_j)
  + \Gamma^{kj}_h u^h_{x}\psi^2_j\big)
  \\
  & +
    \Big(c^{pi}_{h,k}u^h_x\big(g^{kj}\ddx{}(\psi^2_j)
    + \Gamma^{kj}_h u^h_{x}\psi^2_j\big)
  + c^{pi}_{h}\ddx{}\big(g^{hj}\ddx{}(\psi^2_j)
    + \Gamma^{hj}_l u^l_{x}\psi^2_j\big)\Big)
    \ddx{}(\psi^1_i)\Big)\psi^3_p,
  \\[1ex]
  \ell_{N_{\alpha\beta},\psi^1} (R(\psi^2))(\psi^3) &=
  \Big(  w^p_{\alpha k,l}u^k_x
  \big(f^{lj}\ddx{3}\psi^2_j +
    ((f^{lj}_{,h} + c^{lj}_h)u^h_x)\ddx{2}\psi^2_j
  + (c^{lj}_{h,k}u^k_xu^h_x + c^{lj}_hu^h_{2x})\ddx{}\psi^2_j\big)
   \\
    &   + w^p_{\alpha k}\ddx{}\big(
       f^{kj}\ddx{3}\psi^2_j +
      ((f^{kj}_{,h} + c^{kj}_h)u^h_x)\ddx{2}\psi^2_j
 + (c^{kj}_{h,l}u^l_xu^h_x + c^{kj}_hu^h_{2x})\ddx{}\psi^2_j \big)\Big)
    \tilde{\psi}^1_\beta\psi^3_p
  \\
  & - \Big(w^i_{\beta h,l}u^h_x
    \big(f^{lj}\ddx{3}\psi^2_j +
    ((f^{lj}_{,k} + c^{lj}_k)u^k_x)\ddx{2}\psi^2_j
   + (c^{lj}_{h,k}u^k_xu^h_x + c^{lj}_hu^h_{2x})\ddx{}\psi^2_j\big)
  \\
  & + w^i_{\beta h}\ddx{}\big( f^{hj}\ddx{3}\psi^2_j +
          ((f^{hj}_{,k} + c^{hj}_k)u^k_x)\ddx{2}\psi^2_j
   +
  (c^{hj}_{l,k}u^k_xu^l_x + c^{hj}_ku^k_{2x})\ddx{}\psi^2_j
    \big)\Big)\psi^1_i \tilde{\psi}^3_\alpha,
  \\[1ex]
  \partial_x(\ell_{F,\partial_x(\psi^1)} (N_{\alpha\beta}(\psi^2)))(\psi^3) &=
  \ddx{}\Big(f^{pi}_{,k}\ddx{2}\psi^1_iw^k_{\alpha l}u^l_{x}
    \tilde{\psi}^2_\beta
     +
         \big(c^{pi}_{h,k}u^h_xw^k_{\alpha l}u^l_{x}
         \tilde{\psi}^2_\beta
   + c^{pi}_{h}\ddx{}\big(w^h_{\alpha l}u^l_{x}
     \tilde{\psi}^2_\beta\big)\big)\ddx{}\psi^1_i
     \Big)\psi^3_p.
\end{align*}

Let us introduce the following notation:
\begin{gather*}
  T_{N_{\alpha\beta}} = [N_{\alpha\beta},R](\psi^1,\psi^2,\psi^3) ,
 \qquad
  T_{N_{\alpha\beta}}= \mathfrak N_{\alpha\beta} + \mathfrak L_{\alpha\beta},
\end{gather*}
where  $\mathfrak N_{\alpha\beta}$ contains only nonlocal terms and
$\mathfrak L_{\alpha\beta}$ contains only local terms. We also introduce
\begin{equation*}
  T_{LR} = [L,R](\psi^1,\psi^2,\psi^3).
\end{equation*}
We further split the above expressions into the following summands:
\begin{gather*}
  T_{LR}= T_{LR}^{(123)} + T_{LR}^{(312)} + T_{LR}^{(231)},\qquad
  \\
    \mathfrak N_{\alpha\beta} = \mathfrak N_{\alpha\beta}^{(123)} + \mathfrak N_{\alpha\beta}^{(312)}
  + \mathfrak N_{\alpha\beta}^{(231)},
  \\
    \mathfrak L_{\alpha\beta} = \mathfrak L_{\alpha\beta}^{(123)} + \mathfrak L_{\alpha\beta}^{(312)}
  + \mathfrak L_{\alpha\beta}^{(231)},
\end{gather*}
where the superscript indicate the order of the arguments $\psi^1$, $\psi^2$,
$\psi^3$ in the above four summands with respect to the cyclic permutation.

\subsection{Computing the nonlocal part}
\label{sec:comp-nonl-part}

Applying the first step of the algorithm to
$c^{\alpha\beta}\mathfrak N_{\alpha\beta}$ yields the nonlocal part $T_n$ of the
three-vector $[P,R]$ in canonical form. We will obtain three times the same
components with respect to the three different orderings of the arguments
$\psi^1$, $\psi^2$, $\psi^3$ that we chose in~\eqref{eq:43}, and we write $T_n
= T_n^{(123)}+T_n^{(312)}+T_n^{(231)}$. We list only the
$(3,1,2)$ sector:
\begin{gather*}
  T_{n}^{(312)} = c^{\alpha\beta}(w^q_{\beta k}f^{kj}
  -w^j_{\beta k}f^{kq})\tilde\psi^3_\alpha\mathrm D_x^4\psi^1_j\psi^2_q+
  \\
2c^{\alpha\beta}(-c^{jq}_{k}w^k_{\beta p}+w^q_{\beta
  k}c^{jk}_p+c^{qj}_kw^k_{\beta p}-w^j_{\beta
  k}c^{qk}_p)u^p_x\tilde\psi^3_\alpha\mathrm D_x^3\psi^1_j\psi^2_q+
\\
c^{\alpha\beta}\Big(\mathrm D_x((-w^j_{\beta h}c^{qh}_p+c^{qj}_{k}w^k_{\beta p}
-c^{jq}_kw^k_{\beta p}+w^q_{\beta h}c^{jh}_p)u^p_x)
+S^{jq}_{\beta lp}u^l_xu^p_x\Big)
\tilde\psi^3_\alpha\mathrm D_x^2\psi^1_j\psi^2_q
\\
+c^{\alpha\beta}\Big(\mathrm D_x(S^{jq}_{\beta lp}u^l_xu^p_x)+
c^{\alpha\beta}(-c^{qj}_{k}w^k_{\alpha p}+w^q_{\alpha k}c^{kj}_p-w^j_{\beta
  p,k}f^{kq})u^p_{xxx}\Big)\tilde\psi^3_\alpha\mathrm D_x\psi^1_j\psi^2_q +
\\
c^{\alpha\beta}(w^j_{\beta p,k}-w^j_{\beta k,p})u^p_{xxxx}f^{kq}
\tilde\psi^3_\alpha\psi^1_j\psi^2_q,
\end{gather*}
where
\[
S^{jq}_{\beta lp}:=w^j_{\beta l,k}c^{kq}_p -c^{jq}_{k,l}w^k_{\beta p} +c^{jq}_{l,k}w^k_{\beta p} -w^q_{\beta k,l}c^{kj}_p.
\]

\begin{theorem}\label{th:comp_nloc}
  The nonlocal part $T_n$ of the three-vector $[P,R]$ in canonical form
  vanishes if and only if the following conditions are satisfied:
\begin{subequations}\label{eq13:comp13}
\begin{gather}
  w^j_{\beta p,k} = w^j_{\beta k,p},\label{w_gradient}
  \\
  w^q_{\beta k}f^{kj} = w^j_{\beta k}f^{kq},\label{Commut_f_w}
  \\
  w^j_{\alpha p,k}f^{kq}=w^q_{\alpha k}c^{kj}_p
  - c^{qj}_{k}w^k_{\alpha p}.\label{w_derivative}
\end{gather}
\end{subequations}
\end{theorem}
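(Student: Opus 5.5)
We work with the $(3,1,2)$ sector $T_n^{(312)}$ displayed above; the other two sectors are its cyclic images, so it suffices to make it vanish. Since the form \eqref{eq:43} is canonical (free of total $x$-derivatives), $T_n$ vanishes if and only if every coefficient of $\tilde\psi^3_\alpha\,\mathrm D_x^m\psi^1_j\,\psi^2_q$, $m=0,\dots,4$, vanishes identically as a function of the jet variables. We will read the coefficients from the highest order in $\psi^1$ downwards. At each step we keep track of the dependence on $u_{xxx}$, $u_{xxxx}$ and $u_x$, so that each coefficient splits into independent polynomial pieces.

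For sufficiency, one notes that the vanishing must hold for each $\alpha$ separately after contraction with the nondegenerate $c^{\alpha\beta}$. Hence we may strip $c^{\alpha\beta}$ and impose conditions on each $w_\beta$.

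\textbf{Leading orders.} The $\mathrm D_x^4$ coefficient gives \eqref{Commut_f_w} directly. The $\mathrm D_x^0$ coefficient is $(w^j_{\beta p,k}-w^j_{\beta k,p})f^{kq}u^p_{xxxx}$, and since $f$ is nondegenerate this gives \eqref{w_gradient}. Next, differentiate \eqref{Commut_f_w} with respect to $u^p$, using \eqref{f_der} to replace $f^{kj}_{,p}$ by $c^{kj}_p+c^{jk}_p$. Comparing with the $\mathrm D_x^3$ coefficient, this should give a relation between $w^j_{\beta k,p}f^{kq}$ and the $c$--$w$ combinations appearing there.

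\textbf{Deriving \eqref{w_derivative}.} The $\mathrm D_x^1$ coefficient contains a $u_{xxx}$ term $-c^{qj}_kw^k_{\alpha p}+w^q_{\alpha k}c^{kj}_p-w^j_{\beta p,k}f^{kq}$. The $\mathrm D_x(S\,u_xu_x)$ part contains no $u_{xxx}$, so this coefficient must vanish by itself, and that is exactly \eqref{w_derivative}.

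\textbf{Consequences of the three conditions.} We then check that \eqref{w_gradient}, \eqref{Commut_f_w} and \eqref{w_derivative} imply the vanishing of the remaining pieces: the $\mathrm D_x^3$ coefficient, the $\mathrm D_x^2$ coefficient, and the $u_xu_x$ part of the $\mathrm D_x^1$ coefficient.

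For $\mathrm D_x^3$, the bracket $-c^{jq}_kw^k_p+w^q_kc^{jk}_p+c^{qj}_kw^k_p-w^j_kc^{qk}_p$ becomes antisymmetric in $(j,q)$ after \eqref{w_derivative} is used twice. Indeed, it equals $w^q_{p,k}f^{kj}-w^j_{p,k}f^{kq}$ plus symmetric pieces, and this vanishes by differentiating \eqref{Commut_f_w} together with \eqref{f_der}, \eqref{c_f_sym} and \eqref{w_gradient}. The first part of the $\mathrm D_x^2$ coefficient is then the $\mathrm D_x$ of a zero quantity.

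It remains to show $S^{jq}_{\beta lp}u^l_xu^p_x=0$, i.e.\ the symmetrization of $S$ in $(l,p)$ vanishes. For this, differentiate \eqref{w_derivative} with respect to $u^l$, contract with $f_{qs}$, and substitute \eqref{c_der} for $c_{,l}$. Then use \eqref{w_gradient} to symmetrize $w_{,pl}$, and \eqref{c_f_sym}, \eqref{c_f_cycle} to rearrange the quadratic $c\,w$ terms. This is where we expect the main obstacle: a lengthy tensor identity. We would first try rewriting everything in lowered indices using $c_{ijk}$ and \eqref{eq:505}, where \eqref{eq:733} makes the $c_{,l}$ terms quadratic and easier to cancel. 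In that form, \eqref{w_derivative} reads like condition \eqref{e2} for $V=w_\beta$, and Theorem~\ref{th:redundant_condition_3ord_Ham}-type manipulations should close the identity.
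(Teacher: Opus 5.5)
Your architecture matches the paper's. You read \eqref{Commut_f_w}, \eqref{w_gradient} and \eqref{w_derivative} off the $\mathrm D_x^4$ coefficient, the $\mathrm D_x^0$ coefficient and the $u_{xxx}$-part of the $\mathrm D_x^1$ coefficient of $T_n^{(312)}$. For the converse you show that the $\mathrm D_x^3$ bracket and the symmetrised $S^{jq}_{\beta lp}$ vanish.

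Your treatment of the $\mathrm D_x^3$ bracket is correct in substance, but the bookkeeping is off:
\begin{itemize}
\item The bracket is antisymmetric in $(j,q)$ by construction, not as a result of using \eqref{w_derivative}.
\item After \eqref{w_derivative} is used twice, the leftover terms are $w^q_kf^{kj}_{,p}-w^j_kf^{kq}_{,p}$, which are antisymmetric, not symmetric.
\item With \eqref{f_der} and \eqref{w_gradient}, the whole bracket is then exactly $\partial_p(w^q_kf^{kj}-w^j_kf^{kq})$, which vanishes by \eqref{Commut_f_w}. \eqref{c_f_sym} is not needed.
\item Stripping $c^{\alpha\beta}$ is what the necessity direction needs, not sufficiency.
\end{itemize}

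The genuine gap is the step you flag as the main obstacle: $S^{jq}_{\beta lp}+S^{jq}_{\beta pl}=0$ (Lemma~\ref{c_der_w_der}). This identity is the real content of the ``if'' direction, and your proposal does not prove it.

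Your sketched route also starts in the wrong direction. $S$ contains only first derivatives of $w_\beta$ and of $c$. Differentiating \eqref{w_derivative} therefore creates second derivatives $w^j_{\beta p,kl}$ that do not occur in $S$ and would have to be removed again. Contracting with $f_{qs}$ lowers an index that plays no role in the substitution. Finally, ``Theorem~\ref{th:redundant_condition_3ord_Ham}-type manipulations should close the identity'' is a hope, not an argument: that theorem never deals with derivatives of $c$.

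What works is direct substitution. Since $f$ is nondegenerate, \eqref{w_derivative} and \eqref{c_der} express all first derivatives of $w_\beta$ and of $c$ algebraically; the only task is to put the indices in the right slots.
\begin{enumerate}
\item Multiply the symmetrised $S$ by $f^{kl}f^{ps}$, i.e.\ raise both lower indices.
\item Use \eqref{w_gradient} to rewrite $w^j_{\beta k,h}c^{hq}_p$ as $w^j_{\beta h,k}c^{hq}_p$, so its derivative index is contracted with $f^{kl}$.
\item Use \eqref{Commut_f_w} in the form $w^h_{\beta p}f^{ps}=w^s_{\beta p}f^{ph}$, so the lower index of $c^{jq}_{h,k}$ is contracted with $f$.
\item Now \eqref{w_derivative} and \eqref{c_der} remove every derivative. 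The result is quadratic in $c$ and linear in $w_\beta$.
\item Collect it by the coefficients of $w^k_{\beta h}$, $w^s_{\beta h}$ and $w^l_{\beta h}$. Each coefficient vanishes by \eqref{c_f_sym} and \eqref{c_f_cycle}; the latter is used twice for the $w^s_{\beta h}$ and $w^l_{\beta h}$ coefficients.
\end{enumerate}
Your lowered-index variant, using \eqref{eq:733} and reading \eqref{w_derivative} as \eqref{e2}, would also work. You must substitute rather than differentiate, and use \eqref{e1} together with the antisymmetry and cyclic identity of $c_{ijk}$ to finish the cancellation.
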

\begin{proof}
Indeed, $S^{jq}_{\beta lp}+S^{jq}_{\beta pl}$ vanishes identically in view of
the known conditions, see Lemma~\ref{c_der_w_der} in the Appendix.

In turn, the identity
\begin{equation}\label{eq:47}
  -c^{jq}_{k}w^k_{\beta p}+w^q_{\beta k}c^{jk}_p+c^{qj}_kw^k_{\beta p}
  -w^j_{\beta k}c^{qk}_p=0
\end{equation}
is a simple consequence of~\eqref{eq13:comp13}.
\end{proof}
Compatibility implies that we can define new quantities $w^i_\alpha$ such
that $\partial_xw^i_\alpha = w^i_{\alpha j}u^j_x$.
\begin{corollary}
  The potentials $w^i_{\alpha}$ of the coefficients of the nonlocal part of $P$
  can be interpreted, for every index $\alpha$, as fluxes of systems of
  first-order conservation laws that are Hamiltonian with respect to $R$:
  \begin{equation*}
    u^i_t = (w^i_\alpha)_x = w^i_{\alpha ,k}u^k_x
    = R^{ih}\left(\fd{H}{u^h}\right).
  \end{equation*}
  Moreover, we have $w^i_\alpha = \psi^i_\gamma Z^\gamma_\alpha$, where
  $(\psi^i_\gamma)=(\psi^\gamma_i)^{-1}$ and $\psi^\gamma_i$ is a part of the
  decomposition of the Monge metric
  $f_{ij}=\phi_{\alpha\beta}\psi^\alpha_i\psi^\beta_j$ (see
  Subsection~\ref{sec:third-order-hamilt}), and
  $Z^\gamma_\alpha=\eta^\gamma_{\alpha k}u^k + \xi^\gamma_\alpha$ are linear
  functions that solve the linear algebraic system
   \begin{align*}
    &\phi_{\beta\gamma}[\psi_{ij}^{\beta}\eta_{\alpha k}^{\gamma }
      +\psi_{jk}^{\beta}\eta_{\alpha i}^{\gamma }
      +\psi_{ki}^{\beta}\eta_{\alpha j}^{\gamma }]=0,
    \\
    & \phi_{\beta\gamma}[\psi_{ik}^{\beta}\xi^{\gamma }_\alpha
      +\omega_{k}^{\beta}\eta_{\alpha i}^{\gamma }
      -\omega_{i}^{\beta}\eta_{\alpha k}^{\gamma }]=0.
\end{align*}
\end{corollary}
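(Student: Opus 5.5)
The plan is to show that the three conditions of Theorem~\ref{th:comp_nloc} are exactly the system~\eqref{V}, written for the flux $V^i=w^i_\alpha$ (one copy for each fixed $\alpha$), with \eqref{e3} dropped by Theorem~\ref{th:redundant_condition_3ord_Ham}. The description of $w^i_\alpha$ through $\psi$ and $Z$ then follows from the classification in~\cite{FPV17:_system_cl}.

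\emph{Step 1: potentials.} By~\eqref{w_gradient}, for fixed $\alpha$ the $1$-form $w^j_{\alpha k}\,\mathrm du^k$ is closed. On a simply connected domain in the $u$-space it is therefore exact, so there are functions $w^j_\alpha(u)$ with $w^j_{\alpha,k}=w^j_{\alpha k}$ and $(w^j_\alpha)_x=w^j_{\alpha k}u^k_x$. We set $V^i=w^i_\alpha$, so that $V^i_{,k}=w^i_{\alpha k}$.

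\emph{Step 2: the conditions of~\eqref{V}.}
\begin{itemize}
\item Lowering both indices in~\eqref{Commut_f_w} with $f_{ij}$ gives $f_{im}V^m_{,j}=f_{jm}V^m_{,i}$, which is~\eqref{e1}.
\item Condition~\eqref{w_derivative} reads $V^j_{,pk}f^{kq}=V^q_{,k}c^{kj}_p-c^{qj}_kV^k_{,p}$. Contract it with $f_{js}$ and $f_{qr}$, and use $c_{ijk}=f_{iq}f_{jp}c^{pq}_k$.
\item This index gymnastics is the only delicate point: one must keep track of which upper index of $c^{ij}_k$ is lowered by which $f$. It is also where the symmetry relation~\eqref{c_f_sym} and the already obtained~\eqref{e1} may be needed to swap indices.
\item After this, the right-hand side should become $c_{smj}V^m_{,i}+c_{smi}V^m_{,j}$, giving~\eqref{e2} in the form $f_{ks}V^k_{,ij}=\dots$.
\end{itemize}
If a sign or ordering mismatch appears, I would use the symmetry of $V^k_{,ij}$ in $i,j$ to symmetrise both sides. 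I would also use~\eqref{e1} to move the metric past $V$, until the two expressions coincide.

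\emph{Step 3: conclusion.} By Theorem~\ref{th:redundant_condition_3ord_Ham}, \eqref{e3} is then automatic. Hence $u^i_t=(w^i_\alpha)_x$ satisfies all of~\eqref{V}, and by the result of~\cite{FPV17:_system_cl} recalled above it is Hamiltonian with respect to~$R$.

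The explicit form is then taken from~\cite{FPV17:_system_cl}, whose classification was recalled in the introduction around~\eqref{eq:16}. Every such flux has the form $\psi^i_\gamma Z^\gamma$, with $Z^\gamma$ linear in $u$. Its coefficients satisfy the stated linear algebraic system, obtained from~\eqref{eq:16} by renaming $\theta\to\eta_\alpha$ and $\xi\to\xi_\alpha$. Note that the potentials $w^i_\alpha$ are defined only up to additive constants; this freedom is absorbed into the constants $\xi^\gamma_\alpha$.

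The main obstacle I expect is the index bookkeeping in Step~2 when converting~\eqref{w_derivative} into~\eqref{e2}.
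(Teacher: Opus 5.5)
Your proposal is correct and is essentially the paper's argument: the paper notes that \eqref{w_gradient} makes the tail coefficients gradients and that \eqref{Commut_f_w} and \eqref{w_derivative} are exactly the conditions \eqref{V}, then invokes the classification recalled in Subsection~\ref{sec:third-order-hamilt}. Your index check, which the paper omits, goes through: contracting \eqref{w_derivative} with $f_{js}f_{qr}$ and using \eqref{e1} gives $f_{js}V^j_{,pr}=c_{smp}V^m_{,r}-c_{srm}V^m_{,p}$, and \eqref{c_f_sym} is precisely $c_{srm}=-c_{smr}$, so you get \eqref{e2}.

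One small slip: shifting $w^i_\alpha$ by constants $C^i$ changes $Z^\gamma_\alpha$ by $\psi^\gamma_iC^i=\psi^\gamma_{is}C^iu^s+\omega^\gamma_iC^i$, so it alters $\eta^\gamma_{\alpha k}$ as well as $\xi^\gamma_\alpha$. This is harmless, since a constant flux is itself a solution of the linear system.
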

\begin{proof}
  The conditions on the vanishing of the nonlocal part of $[P,R]$ imply that
  the coefficients of the nonlocal part of~$P$ are total derivatives of some
  functions. Then, the other conditions are exactly the conditions found in~\cite{FPV17:_system_cl} for Hamiltonian systems of~$R$ (see also
  Subsection~\ref{sec:third-order-hamilt}).
\end{proof}
\begin{remark}
  Note that we proved in Theorem~\ref{th:redundant_condition_3ord_Ham} that one
  of the conditions in \cite{FPV17:_system_cl} is redundant, namely~\eqref{e3};
  that is the same as~\eqref{eq:47} after lowering the indices.
\end{remark}

\subsection{Computing the local part}
\label{sec:computing-local-part}

We shall compute the canonical form of the local part of the three-vector
$[P,R]$, which is $T_{LR}+ c^{\alpha\beta}\mathfrak L_{\alpha\beta}$. Its expression
can be easily obtained from $T_{LR}^{(123)}+
c^{\alpha\beta}\mathfrak L_{\alpha\beta}^{(123)}$ by means of a cyclic permutation.
A long computation yields the following expressions:
\begin{align*} 
  \begin{split}
    T_{LR}^{(123)} & = g^{pi}_{,k}f^{kj}\ddx{}\psi^1_i\ddx{3}\psi^2_j\psi^3_p
   + g^{pi}_{,k}((f^{kj}_{,h} + c^{kj}_h)u^h_x)
    \ddx{}\psi^1_i\ddx{2}\psi^2_j\psi^3_p
   + g^{pi}_{,k}(c^{kj}_{h,l}u^l_xu^h_x + c^{kj}_hu^h_{2x})
     \ddx{}\psi^1_i\ddx{}\psi^2_j\psi^3_p
  \\ 
    & + \big(\Gamma^{pi}_{h,k}u^h_xf^{kj} + \Gamma^{pi}_{k} \ddx{}(f^{kj})
    + \Gamma^{pi}_{k}((f^{kj}_{,h} + c^{kj}_h)u^h_x)\big)
     \psi^1_i\ddx{3}\psi^2_j\psi^3_p
  \\ 
  & + \big(\Gamma^{pi}_{h,k}u^h_x((f^{kj}_{,m} + c^{kj}_m)u^m_x)
  + \Gamma^{pi}_{k}\ddx{}(((f^{kj}_{,h} + c^{kj}_h)u^h_x))
  + \Gamma^{pi}_{k}(c^{kj}_{h,l}u^l_xu^h_x + c^{kj}_hu^h_{2x})\big)
  \psi^1_i\ddx{2}\psi^2_j\psi^3_p
  \\ 
  & + \Gamma^{pi}_{k} f^{kj}\psi^1_i\ddx{4}\psi^2_j\psi^3_p
  + \big(\Gamma^{pi}_{h,k}u^h_x(c^{kj}_{m,l}u^l_xu^m_x + c^{kj}_mu^m_{2x})
  + \Gamma^{pi}_{k}\ddx{}(c^{kj}_{h,l}u^l_xu^h_x + c^{kj}_hu^h_{2x})\big)
  \psi^1_i\ddx{}\psi^2_j\psi^3_p
    \\ 
  & + \big(\ddx{}(f^{pi}_{,k})g^{kj} + f^{pi}_{,k}\ddx{}(g^{kj})
  + f^{pi}_{,k}\Gamma^{kj}_h u^h_{x}
  + c^{pi}_{h,k}u^h_xg^{kj}
      + c^{pi}_{h}\ddx{}(g^{hj}) + c^{pi}_{h}\Gamma^{hj}_l u^l_{x}\big)
      \ddx{2}\psi^1_i \ddx{}\psi^2_j\psi^3_p
  \\ 
   & + \big(\ddx{}(f^{pi}_{,k})\Gamma^{kj}_h u^h_{x}
     + f^{pi}_{,k} \ddx{}(\Gamma^{kj}_h u^h_{x})
   + c^{pi}_{h,k}u^h_x\Gamma^{kj}_l u^l_{x}
   + c^{pi}_{h}\ddx{}(\Gamma^{hj}_l u^l_{x})\big)
   \ddx{2}\psi^1_i \psi^2_j \psi^3_p
  \\ 
   & + f^{pi}_{,k}g^{kj}\ddx{3}\psi^1_i\ddx{}(\psi^2_j)\psi^3_p
    + f^{pi}_{,k}\Gamma^{kj}_h u^h_{x}\ddx{3}\psi^1_i\psi^2_j\psi^3_p
    + \big( f^{pi}_{,k} g^{kj} + c^{pi}_{h}g^{hj}\big)
      \ddx{2}\psi^1_i\ddx{2}\psi^2_j\psi^3_p
  \\
  & + \big( \ddx{}(c^{pi}_{h,k}u^h_xg^{kj})
   + c^{pi}_{h,k}u^h_x\Gamma^{kj}_l u^l_{x}
     + \ddx{}\big( c^{pi}_{h}\ddx{}(g^{hj})\big)
   + c^{pi}_{h}\ddx{}(\Gamma^{hj}_l u^l_{x})
   +\ddx{}(c^{pi}_{h}\Gamma^{hj}_l u^l_{x})\big)
   \ddx{}(\psi^1_i)\ddx{}\psi^2_j\psi^3_p
  \\ 
   &  + \big( c^{pi}_{h,k}u^h_xg^{kj}+ c^{pi}_{h}\ddx{}(g^{hj})
     + \ddx{}(c^{pi}_{h}g^{hj})
    + c^{pi}_{h}\Gamma^{hj}_l u^l_{x}\big)
     \ddx{}\psi^1_i\ddx{2}\psi^2_j\psi^3_p
  \\ 
  & + \big( \ddx{}(c^{pi}_{h,k}u^h_x\Gamma^{kj}_l u^l_{x})
   + \ddx{}(c^{pi}_{h}\ddx{}(\Gamma^{hj}_l u^l_{x}))\big)
               \ddx{}(\psi^1_i)\psi^2_j\psi^3_p
   + c^{pi}_{h}g^{hj}\ddx{}\psi^1_i\ddx{3}\psi^2_j\psi^3_p
 \end{split}
\end{align*}
and
\begin{equation*}
  \begin{split}
    \mathfrak L_{\alpha\beta}^{(123)}=&\ t_{000}\psi^1_i\psi^2_j\psi^3_p +
    t_{001}\psi^1_i\psi^2_j\td\psi^3_p + t_{002}\psi^1_i\psi^2_j\td^2\psi^3_p
    + t_{003}\psi^1_i\psi^2_j\td^3\psi^3_p + t_{010}\psi^1_i\td\psi^2_j\psi^3_p
    \\
    & + t_{020}\psi^1_i\td^2\psi^2_j\psi^3_p
    + t_{011}\psi^1_i\td\psi^2_j\td\psi^3_p
    + t_{100}\td\psi^1_i\psi^2_j\psi^3_p
    + t_{101}\td\psi^1_i\psi^2_j\td\psi^3_p
    \\
    & + t_{110}\td\psi^1_i\td\psi^2_j\psi^3_p
    + t_{102}\td\psi^1_i\psi^2_j\td^2\psi^3_p
    + t_{200}\td^2\psi^1_i\psi^2_j\psi^3_p
     + t_{201}\td^2\psi^1_i\psi^2_j\td\psi^3_p
    + t_{300}\td^3\psi^1_i\psi^2_j\psi^3_p
  \end{split}
\end{equation*}
where the coefficients are:
\begin{align*}
  t_{000}= &
     - w^i_{\beta h} f^{hj}\ddx{3}\big(w^p_{\alpha k} u^k_x\big)
     - 4\ddx{3}\big(w^i_{\beta h} f^{hj}\big)w^p_{\alpha k} u^k_x
     - 4 \ddx{}\big(w^i_{\beta h} f^{hj}\big)\ddx{2}(w^p_{\alpha k} u^k_x)
     - 6\ddx{2}\big(w^i_{\beta h} f^{hj}\big)
      \ddx{}\big(w^p_{\alpha k} u^k_x\big)
   \\
   &   + \big(w^i_{\beta h,l}u^h_xf^{lj} + w^i_{\beta h}\ddx{}( f^{hj})
        + w^i_{\beta h} ((f^{hj}_{,k} + c^{hj}_k)u^k_x)\big)
        \ddx{2}(w^p_{\alpha m}u^m_x\big)
    \\
    &  + 3\ddx{2}\big(w^i_{\beta h,l}u^h_xf^{lj} + w^i_{\beta h}\ddx{}( f^{hj})
   + w^i_{\beta h} ((f^{hj}_{,k} + c^{hj}_k)u^k_x)\big)w^p_{\alpha m}u^m_x
  \\
  &     + 3\ddx{}\big(w^i_{\beta h,l}u^h_xf^{lj} + w^i_{\beta h}\ddx{}( f^{hj})
     + w^i_{\beta h} (f^{hj}_{,k} + c^{hj}_k)u^k_x\big)
     \ddx{}\big(w^p_{\alpha m}u^m_x\big)
  \\
  &  - \big(w^i_{\beta h,l}u^h_x((f^{lj}_{,k} + c^{lj}_k)u^k_x)
    + w^i_{\beta h}\ddx{}((f^{hj}_{,k} + c^{hj}_k)u^k_x)
    + w^i_{\beta h}(c^{hj}_{l,k}u^k_xu^l_x + c^{hj}_ku^k_{2x})
    \big)\ddx{}\big(w^p_{\alpha m}u^m_x\big)
  \\
  & - 2\ddx{}\big(\big(w^i_{\beta h,l}u^h_x((f^{lj}_{,k} + c^{lj}_k)u^k_x)
    + w^i_{\beta h}\ddx{}((f^{hj}_{,k} + c^{hj}_k)u^k_x)
    + w^i_{\beta h}(c^{hj}_{l,k}u^k_xu^l_x + c^{hj}_ku^k_{2x})
    \big)\big)w^p_{\alpha m}u^m_x
  \\
  & + \big(w^i_{\beta q,l}u^q_x
    (c^{lj}_{h,k}u^k_xu^h_x + c^{lj}_hu^h_{2x})
  + w^i_{\beta h}\ddx{}(c^{hj}_{l,k}u^k_xu^l_x + c^{hj}_ku^k_{2x})
  \big)w^p_{\alpha m}u^m_x
  \\
  & + \big(\ddx{}(f^{pi}_{,k}w^k_{\alpha l}u^l_{x})
    +  c^{pi}_{h,k}u^h_xw^k_{\alpha l}u^l_{x}
  + c^{pi}_{h}\ddx{}(w^h_{\alpha l}u^l_{x}) \big)
      \ddx{}\big(w^j_{\beta m}u^m_x\big)
  \\
  & + 2 \ddx{}
    \big(\ddx{}(f^{pi}_{,k}w^k_{\alpha l}u^l_{x})
    +  c^{pi}_{h,k}u^h_xw^k_{\alpha l}u^l_{x}
  + c^{pi}_{h}\ddx{}(w^h_{\alpha l}u^l_{x}) \big)
  w^j_{\beta m}u^m_x
  \\
  & - f^{pi}_{,h}w^h_{\alpha l}u^l_{x}\ddx{2}\big(w^j_{\beta k}u^k_x\big)
   - 3\ddx{2}\big(f^{pi}_{,h}w^h_{\alpha l}u^l_{x}\big)w^j_{\beta k}u^k_x
  \\
  & - 3\ddx{}\big(f^{pi}_{,h}w^h_{\alpha l}u^l_{x}\big)
      \ddx{}\big(w^j_{\beta k}u^k_x\big)
     - \ddx{}\big(c^{pi}_{h,k}u^h_xw^k_{\alpha l}u^l_{x}
      + c^{pi}_{h}\ddx{}(w^h_{\alpha l}u^l_{x})\big)w^j_{\beta m}u^m_x
  \\
  t_{001}= & - 3 w^i_{\beta h} f^{hj}\ddx{2}\big(w^p_{\alpha k} u^k_x\big)
 - 8 \ddx{}\big(w^i_{\beta h} f^{hj}\big)\ddx{}(w^p_{\alpha k} u^k_x)
  - 6\ddx{2}\big(w^i_{\beta h} f^{hj}\big)w^p_{\alpha k}u^k_x
  \\
  & + 2 \big(w^i_{\beta h,l}u^h_xf^{lj} + w^i_{\beta h}\ddx{}( f^{hj})
    + w^i_{\beta h} ((f^{hj}_{,k} + c^{hj}_k)u^k_x)\big)
    \ddx{}(w^p_{\alpha m}u^m_x\big)
  \\
  &  + 3\ddx{}\big(w^i_{\beta h,l}u^h_xf^{lj} + w^i_{\beta h}\ddx{}( f^{hj})
     + w^i_{\beta h} (f^{hj}_{,k} + c^{hj}_k)u^k_x\big)w^p_{\alpha m}u^m_x
  \\
  &    - \big(w^i_{\beta h,l}u^h_x((f^{lj}_{,k} + c^{lj}_k)u^k_x)
    + w^i_{\beta h}\ddx{}((f^{hj}_{,k} + c^{hj}_k)u^k_x)
    + w^i_{\beta h}(c^{hj}_{l,k}u^k_xu^l_x + c^{hj}_ku^k_{2x})
    \big)w^p_{\alpha m}u^m_x
  \\
  & + 2\big(\ddx{}(f^{pi}_{,k}w^k_{\alpha l}u^l_{x})
    +  c^{pi}_{h,k}u^h_xw^k_{\alpha l}u^l_{x}
    + c^{pi}_{h}\ddx{}(w^h_{\alpha l}u^l_{x}) \big)w^j_{\beta m}u^m_x
  \\
  &
     - 3f^{pi}_{,h}w^h_{\alpha l}u^l_{x}\ddx{}\big(w^j_{\beta k}u^k_x\big)
     - 6 \ddx{}\big(f^{pi}_{,k}w^k_{\alpha l}u^l_{x}\big)w^j_{\beta k}u^k_x
  \\
  t_{002}= & - 3 w^i_{\beta h} f^{hj}\ddx{}\big(w^p_{\alpha k} u^k_x\big)
     - 4 \ddx{}\big(w^i_{\beta h} f^{hj}\big)w^p_{\alpha k} u^k_x
  \\
  & + \big(w^i_{\beta h,l}u^h_xf^{lj} + w^i_{\beta h}\ddx{}( f^{hj})
    + w^i_{\beta h} ((f^{hj}_{,k} + c^{hj}_k)u^k_x)\big)w^p_{\alpha m} u^m_x
    - 3\big(f^{pi}_{,h}w^h_{\alpha l}u^l_{x}\big)w^j_{\beta k}u^k_x
  \\ 
    t_{003} = & - w^i_{\beta h} f^{hj}w^p_{\alpha k} u^k_x
  \\ 
  t_{010} = &
     \big(\ddx{}(f^{pi}_{,k}w^k_{\alpha l}u^l_{x})
    +  c^{pi}_{h,k}u^h_xw^k_{\alpha l}u^l_{x}
  + c^{pi}_{h}\ddx{}(w^h_{\alpha l}u^l_{x}) \big)
      w^j_{\beta m}u^m_x
  \\
  & - 2 f^{pi}_{,h}w^h_{\alpha l}u^l_{x}\ddx{}\big(w^j_{\beta k}u^k_x\big)
    - 3\ddx{}\big(f^{pi}_{,h}w^h_{\alpha l}u^l_{x}\big)w^j_{\beta k}u^k_x
  \\ 
   t_{020}= & - f^{pi}_{,h}w^h_{\alpha l}u^l_{x}w^j_{\beta k}u^k_x
  \\ 
    t_{011} = & - 3f^{pi}_{,h}w^h_{\alpha l}u^l_{x}w^j_{\beta k}u^k_x
  \\
  t_{100} = & - 4 w^i_{\beta h} f^{hj}\ddx{2}(w^p_{\alpha k} u^k_x)
   - 12\ddx{2}\big(w^i_{\beta h} f^{hj}\big)w^p_{\alpha k}u^k_x
   - 12\ddx{}\big(w^i_{\beta h} f^{hj}\big)\ddx{}\big(w^p_{\alpha k}u^k_x\big)
  \\
  & + 3\big(w^i_{\beta h,l}u^h_xf^{lj} + w^i_{\beta h}\ddx{}( f^{hj})
     + w^i_{\beta h} (f^{hj}_{,k} + c^{hj}_k)u^k_x\big)
     \ddx{}\big(w^p_{\alpha m}u^m_x\big)
  \\
  & + 6\ddx{}\big(w^i_{\beta h,l}u^h_xf^{lj} + w^i_{\beta h}\ddx{}( f^{hj})
     + w^i_{\beta h} (f^{hj}_{,k} + c^{hj}_k)u^k_x\big)w^p_{\alpha m}u^m_x
  \\
  & - 2\big(w^i_{\beta h,l}u^h_x((f^{lj}_{,k} + c^{lj}_k)u^k_x)
    + w^i_{\beta h}\ddx{}((f^{hj}_{,k} + c^{hj}_k)u^k_x)
    + w^i_{\beta h}(c^{hj}_{l,k}u^k_xu^l_x + c^{hj}_ku^k_{2x})
    \big) w^p_{\alpha m}u^m_x
    \\
    &  + c^{pi}_{h,k}u^h_xw^k_{\alpha l}u^l_{x}w^j_{\beta m}u^m_{x}
    + c^{pi}_{h}\ddx{}(w^h_{\alpha l}u^l_{x})w^j_{\beta m}u^m_{x}
    + \ddx{}(c^{pi}_{h}w^h_{\alpha l}u^l_{x}w^j_{\beta m}u^m_{x})
  \\ 
  t_{101}=& - 8 w^i_{\beta h} f^{hj}\ddx{}(w^p_{\alpha k} u^k_x)
   - 12\ddx{}\big(w^i_{\beta h} f^{hj}\big)w^p_{\alpha k}u^k_x
  \\
  & + 3\big(w^i_{\beta h,l}u^h_xf^{lj} + w^i_{\beta h}\ddx{}( f^{hj})
     + w^i_{\beta h} (f^{hj}_{,k} + c^{hj}_k)u^k_x\big)w^p_{\alpha m}u^m_x
  \\ 
  t_{102} = & - 4 w^i_{\beta h} f^{hj}w^p_{\alpha k}u^k_x
  \\ 
  t_{110} = & c^{pi}_{h}w^h_{\alpha l}u^l_{x}w^j_{\beta m}u^m_{x}
  \\
  t_{200} = &
   - 6 w^i_{\beta h} f^{hj}\ddx{}\big(w^p_{\alpha k}u^k_x\big)
    - 12\ddx{}\big(w^i_{\beta h} f^{hj}\big)w^p_{\alpha k}u^k_x
  \\
    &+ 3 \big(w^i_{\beta h,l}u^h_xf^{lj} + w^i_{\beta h}\ddx{}( f^{hj})
    + w^i_{\beta h} (f^{hj}_{,k} + c^{hj}_k)u^k_x\big)w^p_{\alpha m}u^m_x
     + \big(f^{pi}_{,k}w^k_{\alpha l}u^l_{x}w^j_{\beta k}u^k_{x}
          + c^{pi}_{h}w^h_{\alpha l}u^l_{x}w^j_{\beta m}u^m_{x}\big)
  \\
  t_{201} = & - 6 w^i_{\beta h} f^{hj}w^p_{\alpha k}u^k_x
  \\ 
  t_{300} = & - 4 w^i_{\beta h} f^{hj}w^p_{\alpha k} u^k_x
\end{align*}

\begin{theorem}\label{th:comp_loc}
  The Hamiltonian operators~$P$ and~$R$ are compatible if and only if
  \begin{itemize}
  \item for every $\alpha$ the coefficients $w^i_{\alpha j}u^j_x$ are total
    $x$-derivatives of functions $w^i_\alpha$, which can be interpreted as
    fluxes of systems of first-order conservation laws
    \begin{equation*}
      u^i_t = (w^i_\alpha)_x = w^i_{\alpha,k}u^k_x
    \end{equation*}
    that are Hamiltonian with respect to
    $R$;
  \item the following conditions hold:
\begin{subequations}\label{eq13:FullCompatibility}
\begin{gather}
  \label{eq:1}
    \Gamma^{ij}_kf^{kp} = \Gamma^{pj}_k f^{ki},
    \\[1ex]
    \label{eq:221}(\Gamma^{ij}_k-\Gamma^{ji}_k)f^{kp}
    + (\Gamma^{pj}_k-\Gamma^{jp}_k)f^{ki}
     + (c^{pi}_kg^{kj} - c^{pj}_k g^{ki}) + (c^{ip}_kg^{kj} - c^{ij}_kg^{kp}) = 0
     \\[1ex]
    \label{eq:8} \big(\Gamma^{ij}_{k,l} + c^{\alpha\beta}w^i_{\alpha k}w^j_{\beta l}\big)f^{kp} = \Gamma^{pj}_kc^{ki}_l - \Gamma^{kj}_l c^{pi}_k,
    \\[1ex]
    \begin{split}\label{eq:118}
    -c^{ij}_{k,l}g^{kp}-c^{pi}_{k,l}g^{kj}-c^{ji}_{k,l}g^{kp}+ c^{pi}_{l,k}g^{kj}+ c^{ij}_{l,k}g^{kp}+ c^{jp}_{l,k}g^{ki}
    \\
    +c^{ki}_l\Gamma^{pj}_k-c^{kp}_l\Gamma^{ij}_k-2c^{ki}_l\Gamma^{jp}_k+2c^{kp}_l\Gamma^{ji}_k -c^{jk}_l\Gamma^{ip}_k+c^{jk}_l\Gamma^{pi}_k
    \\
   -c^{ji}_k\Gamma^{pk}_l+c^{jp}_{k}\Gamma^{ik}_l+c^{\alpha\beta}(w^i_{\alpha k}f^{kj}w^p_{\beta l}-w^j_{\alpha k}f^{kp}w^i_{\beta l})=0,
\end{split}
\end{gather}
\end{subequations}
\end{itemize}
\end{theorem}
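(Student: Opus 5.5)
The compatibility $[P,R]=0$ splits into the vanishing of the nonlocal part $T_n$ and of the local part $T_{LR}+c^{\alpha\beta}\mathfrak L_{\alpha\beta}$. These two parts are independent in the canonical (total-derivative free) representative obtained by the algorithm of~\cite{CLV19}, so the three-vector vanishes iff both parts vanish.

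\textbf{Nonlocal part.} By Theorem~\ref{th:comp_nloc}, $T_n=0$ is equivalent to \eqref{w_gradient}, \eqref{Commut_f_w} and \eqref{w_derivative}. Condition \eqref{w_gradient} gives potentials $w^i_\alpha$. After lowering indices with $f_{ij}$, \eqref{Commut_f_w} and \eqref{w_derivative} become \eqref{e1} and \eqref{e2} for $V=w_\alpha$. Theorem~\ref{th:redundant_condition_3ord_Ham} then gives \eqref{e3}, so these are exactly the conditions in~\cite{FPV17:_system_cl}. This yields the first bullet, and conversely that bullet implies $T_n=0$.

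\textbf{Local part.} Assuming the first bullet, I reduce the local part to the normal form in which $\psi^3$ appears undifferentiated, integrating by parts in $\psi^3$ in each $t_{abc}$ and in $T_{LR}^{(123)}$ together with its cyclic images. I then collect coefficients of the monomials $\td^a\psi^1_i\,\td^b\psi^2_j\,\psi^3_p$, ordered by total differential degree from the top ($a+b=4$) downwards. The coefficients are polynomial in $u_x,u_{xx},\dots$ with functions of $u$ as coefficients, so I split each one further by the jet monomials $u^k_x$, $u^k_xu^l_x$, $u^k_{xx}$, and so on.

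\begin{itemize}
\item The highest-order coefficients, those of $\psi^1\td^4\psi^2$, $\td\psi^1\td^3\psi^2$, $\td^2\psi^1\td^2\psi^2$ and their cyclic partners, involve only $g$, $\Gamma$ and $f$, with no $u_x$ factor. The $w$-terms $t_{300}$, $t_{201}$, $t_{102}$, $t_{003}$ cancel there by \eqref{Commut_f_w}. Using \eqref{eq:15} and \eqref{f_der}, these coefficients should give \eqref{eq:1} and the skew-symmetrized relation \eqref{eq:221}.
\item At the next level, the terms linear in $u^l_x$ multiplying third-order $\psi$-monomials should give \eqref{eq:8}. The nonlocal contributions $t_{abc}$ are quadratic in $w$, and I expect them to survive only as $c^{\alpha\beta}w^i_{\alpha k}w^j_{\beta l}f^{kp}$. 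Cancelling the remaining $w$-terms will need \eqref{w_derivative} and \eqref{eq:47}.
\item Terms linear in $u_x$ at the level $a+b=2$ should produce \eqref{eq:118}.
\end{itemize}

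The final and hardest step is to show that every remaining coefficient vanishes identically once \eqref{eq:1}, \eqref{eq:221}, \eqref{eq:8}, \eqref{eq:118} hold. These coefficients are the quadratic and cubic ones in $u_x$, those with $u_{xx}$ or $u_{xxx}$, and those of lower $\psi$-order. I would try to show that each is a differential consequence: differentiate \eqref{eq:1}, \eqref{eq:8} and \eqref{eq:118} in $u^l$, then substitute \eqref{c_der}, \eqref{eq:733}, the curvature identity \eqref{eq:34}, \eqref{eq:25} (which now reads $\nabla_k w^i_{\alpha,j}$ symmetric), and the second derivatives of $w_\alpha$ from \eqref{e2}. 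For example, the $u^k_xu^l_x$ part of the $\td\psi^1\td\psi^2$ coefficient should reduce to the symmetrized derivative of \eqref{eq:8}, with the Riemann tensor cancelling the $c^{\alpha\beta}ww$ term through \eqref{eq:34}. Checking this elimination exhaustively, which I expect to need computer algebra, is the main obstacle, as it was for Lemma~\ref{c_der_w_der}.

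For the converse direction, the same reductions run backwards. Since every coefficient has been shown to be a combination of the listed conditions and their derivatives, the conditions imply $[P,R]=0$.
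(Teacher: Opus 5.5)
\textbf{There is a genuine gap: the sufficiency direction is not proved, and your bookkeeping of which coefficient gives which condition is wrong below the top level.}

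\textbf{What matches the paper.} Your overall route is the paper's. The nonlocal part via Theorem~\ref{th:comp_nloc} and Theorem~\ref{th:redundant_condition_3ord_Ham} is fine, including lowering \eqref{Commut_f_w} and \eqref{w_derivative} to \eqref{e1} and \eqref{e2}. Your top level ($a_{40}$, $a_{31}$, $a_{22}$ in $T_l=\sum a_{kl}\mathrm D_x^k\psi^1_i\mathrm D_x^l\psi^2_j\psi^3_p$, giving \eqref{eq:1} and \eqref{eq:221}) also agrees with the paper.

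\textbf{Homogeneity errors.} Since $[P,R]$ has weight $4$, the coefficient $a_{kl}$ has jet degree $4-k-l$.
\begin{itemize}
\item Every $t_{abc}$ has $a+b+c\le 3$, hence jet degree at least $1$. So the $\mathfrak L_{\alpha\beta}$-terms never reach $k+l=4$. Your claim that $t_{300},t_{201},t_{102},t_{003}$ cancel among the top coefficients is misplaced; they only enter from $k+l=3$ downward.
\item There are no terms `linear in $u_x$' at $k+l=2$; only $u_{xx}$ and $u_xu_x$ occur there. Condition \eqref{eq:118} has a single free lower index $l$. It is the level-$3$ condition $a_{21}=0$.
\item Your attribution of \eqref{eq:8} to level $3$ does not match the paper. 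There it is read off from the coefficient of $u^l_{xxx}$ in $a_{10}$. At level $3$ the paper finds only that $a_{30},a_{03}$ reduce to \eqref{Gamma_c}. That identity is the $u^l$-derivative of \eqref{eq:1} with $\Gamma_{,l}$ eliminated through \eqref{eq:8}, so it presupposes \eqref{eq:8}.
\end{itemize}

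\textbf{The `if' direction is missing.} This direction is the real content of the theorem, and you only announce it. The mechanism you sketch is also not the one that works.
\begin{itemize}
\item The coefficients of $[P,R]$ are linear in the data $g$, $\Gamma$, $c^{\alpha\beta}w_\alpha w_\beta$ of $P$. The paper's reduction invokes no curvature identity; \eqref{eq:34} only enters afterwards, in Corollary~\ref{cor:assoc_alg}, as an output.
\item What controls the skew part of $\Gamma^{ij}_{k,l}$ together with its $c^{\alpha\beta}ww$ companion is the linear identity \eqref{Gamma_kl_Gamma_lk}. It follows from \eqref{eq:8}, \eqref{eq:1} and \eqref{c_f_sym}. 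It is exactly the $u_{4x}$-coefficient of $a_{00}$ and what $a_{01}$ reduces to.
\end{itemize}

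The remaining coefficients vanish only through specific algebraic identities, obtained after removing all derivatives with \eqref{c_der}, \eqref{w_derivative} and \eqref{eq:8}:
\begin{itemize}
\item \eqref{Gamma_c}, for $a_{30}$ and $a_{03}$;
\item Lemma~\ref{lem:RR}, for the symmetrised condition \eqref{Gamma_der_c_der} coming from $a_{20}$, $a_{02}$ and the symmetric $u_{xx}u_x$ part of $a_{10}$;
\item Lemma~\ref{lemma:re}, through \eqref{eq:271}, for the skew $u_{xx}u_x$ part of $a_{10}$;
\item Lemma~\ref{c_der_w_der}, i.e.\ $S^{jq}_{\beta lp}+S^{jq}_{\beta pl}=0$, for the cubic part of $a_{10}$ and for $a_{00}$, which collapses to $c^{\alpha\beta}\mathrm D_x(S^{ip}_{\alpha nl}w^j_{\beta m}u^l_xu^m_xu^n_x)$.
\end{itemize}
Without these identities, or an actual computer-algebra verification, `every remaining coefficient is a differential consequence' restates the theorem rather than proving it. Your converse, `the same reductions run backwards', then has nothing to run on.
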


\begin{proof}
  We recall that the second step consists in the `integration by parts' with
  respect to the argument~$\psi^3$ in the local three-vector
  $T_{LR}+ c^{\alpha\beta}\mathfrak L_{\alpha\beta}$ in order to obtain a local
  three-vector $T_l$, which is of zero order with respect to that
  argument. That amounts to quite a long computation, which yields the
  expression
\begin{gather*}
  T_l=\sum\limits_{0\leqslant k+l\leqslant 4}a_{kl}\mathrm D_x^k\psi^1_i\mathrm D_x^l\psi^2_j\psi^3_p
\end{gather*}
where most of the coefficients are too cumbersome to display.
The simplest coefficients are $a_{40}$, $a_{22}$ and~$a_{31}$.
The vanishing of~$a_{40}$ (as well as of~$a_{40}$) immediately results in
\begin{equation}
    \label{Commut_Gamma_f}
    \Gamma^{ij}_kf^{kp} = \Gamma^{pj}_k f^{ki},
\end{equation}
vanishing of~$a_{22}$ is equivalent to the condition
\begin{equation}
    \label{Gamma_ji_Gamma_ij}
    3(\Gamma^{ij}_k - \Gamma^{ji}_k)f^{kp} =
    -2(c^{pi}_k g^{kj} - c^{pj}_k g^{ki})
    -c^{ip}_k g^{kj} + c^{jp}_k g^{ki} - c^{ji}_k g^{kp} + c^{ij}_k g^{kp}.
  \end{equation}
  The vanishing of~$a_{31}$ is equivalent to the condition
  \begin{equation}
    \label{eq:26}
       (\Gamma^{ij}_k-\Gamma^{ji}_k)f^{kp}
    + (\Gamma^{pj}_k-\Gamma^{jp}_k)f^{ki}
     + (c^{pi}_kg^{kj} - c^{pj}_k g^{ki}) + (c^{ip}_kg^{kj} - c^{ij}_kg^{kp}) = 0.
  \end{equation}
  Equation~\eqref{Gamma_ji_Gamma_ij} implies equation~\eqref{eq:26}: it is
  enough to replace the differences of Christoffel symbols in~\eqref{eq:26}
  with the help of~\eqref{Gamma_ji_Gamma_ij} to realize that. On the other
  hand, equation~\eqref{eq:26} implies~\eqref{Gamma_ji_Gamma_ij}: subtracting
  from~\eqref{eq:26} the same equation with $i$, $j$ interchanged we
  get~\eqref{Gamma_ji_Gamma_ij}. So, the two conditions are equivalent and each
  of them yields the vanishing of both~$a_{22}$ and~$a_{31}$. Moreover, the
  coefficient $a_{13}$ vanishes in view of these two conditions as well.

  Other coefficients are too cumbersome but we can deduce more from looking at
  their highest terms.  Thus, vanishing of the coefficient of~$u^l_{3x}$
  in~$a_{10}$ results in
\begin{equation}
    \label{Gamma_derivative}
    \big(\Gamma^{ij}_{k,l} +c^{\alpha\beta} w^i_{\alpha k}w^j_{\beta l}\big)f^{kp} = \Gamma^{pj}_kc^{ki}_l - \Gamma^{kj}_l c^{pi}_k,
  \end{equation}
and that of~$u^k_{4x}$ in~$a_{00}$ reduces to
\begin{equation}
    \label{Gamma_kl_Gamma_lk}
    \Gamma^{ij}_{k,l} - \Gamma^{ij}_{l,k} = c^{\alpha\beta}(w^j_{\alpha k}w^i_{\beta l} -  w^i_{\alpha k} w^j_{\beta l}).
  \end{equation}
The latter condition is satisfied in view of~\eqref{Gamma_derivative},
\eqref{Commut_Gamma_f} and~\eqref{c_f_sym}.
Indeed,
\begin{gather*}
  (\Gamma^{ij}_{k,l} + c^{\alpha\beta} w^i_{\alpha k}w^j_{\beta
    l})f^{kp}f_{ps}=(\Gamma^{pj}_kc^{ki}_l-\Gamma^{kj}_lc^{pi}_k)f_{ps}
  \\
  \Gamma^{ij}_{k,l} + c^{\alpha\beta} w^i_{\alpha k}w^j_{\beta
    l}=(\Gamma^{rj}_sc^{si}_l-\Gamma^{sj}_lc^{ri}_s)f_{rk}
  \\
  \Gamma^{ij}_{k,l}+ c^{\alpha\beta} w^i_{\alpha k}w^j_{\beta l}
  - (\Gamma^{ij}_{l,k} + c^{\alpha\beta} w^i_{\alpha l}w^j_{\beta k})=
  (\Gamma^{rj}_sc^{si}_l-\Gamma^{sj}_lc^{ri}_s)f_{rk}-(\Gamma^{rj}_sc^{si}_k-\Gamma^{sj}_kc^{ri}_s)f_{rl}=
  \\
  \Gamma^{rj}_sc^{si}_lf_{rk}-\Gamma^{sj}_lc^{ri}_sf_{rk}+\Gamma^{sj}_lc^{ri}_sf_{kr}-\Gamma^{rj}_sc^{si}_lf_{kr}\equiv
  0.
\end{gather*}
Differentiating~\eqref{Commut_Gamma_f} and killing the derivatives of
Christoffel symbols there with the help of~\eqref{Gamma_derivative} yields a
useful identity
\begin{equation}\label{Gamma_c}
  \Gamma^{ij}_k c^{pk}_l - \Gamma^{pj}_k c^{ik}_l + \Gamma^{kj}_l c^{ip}_k
  - \Gamma^{kj}_l c^{pi}_k = 0,
\end{equation}
which is equivalent to the vanishing of both~$a_{30}$ and~$a_{03}$.  With the
above identities at our disposal we can proceed to simplifying the remaining
coefficients.  Thus, the coefficient~$a_{21}$ readily gives rise to the
condition
\begin{equation}
\begin{split} \label{c_lk_g}
      &-c^{ij}_{k,l}g^{kp}-c^{pi}_{k,l}g^{kj}-c^{ji}_{k,l}g^{kp}
      + c^{pi}_{l,k}g^{kj}+ c^{ij}_{l,k}g^{kp}+ c^{jp}_{l,k}g^{ki}+c^{ki}_l\Gamma^{pj}_k-c^{kp}_l\Gamma^{ij}_k-2c^{ki}_l\Gamma^{jp}_k+2c^{kp}_l\Gamma^{ji}_k\\
      &
      -c^{jk}_l\Gamma^{ip}_k+c^{jk}_l\Gamma^{pi}_k-c^{ji}_k\Gamma^{pk}_l+c^{jp}_{k}\Gamma^{ik}_l
      +c^{\alpha\beta}(w^i_{\alpha k}f^{kj}w^p_{\beta l}-w^j_{\alpha k} f^{kp}w^i_{\beta l})=0,
  \end{split}
\end{equation}
the coefficient~$a_{20}$ is
\begin{align*}
   &   (-\Gamma^{kj}_lc^{pi}_k-\Gamma^{pj}_kc^{ik}_l+c^{ip}_k\Gamma^{kj}_l  + \Gamma^{ij}_{k}c^{pk}_l)u^l_{xx}\\
&+\big(\Gamma^{pj}_{k,m}c^{ki}_l- \Gamma^{ij}_{k,l}c^{kp}_m
       -\Gamma^{kj}_lc^{pi}_{k,m}+\Gamma^{kj}_lc^{pi}_{m,k}
       +c^{\alpha\beta}(w^p_{\alpha k}c^{ki}_lw^j_{\beta m} -w^i_{\alpha k}c^{kp}_lw^j_{\beta m})\big)u^l_xu^m_x,
\end{align*}
where the first term vanishes in view of~\eqref{Gamma_c},
and the second one through symmetrization gives rise to the condition
\begin{equation}\begin{split}\label{Gamma_der_c_der}
\Gamma^{pj}_{k,m}c^{ki}_l- \Gamma^{ij}_{k,l}c^{kp}_m
       -\Gamma^{kj}_lc^{pi}_{k,m}+\Gamma^{kj}_lc^{pi}_{m,k}
+\Gamma^{pj}_{k,l}c^{ki}_m- \Gamma^{ij}_{k,m}c^{kp}_l
       -\Gamma^{kj}_mc^{pi}_{k,l}+\Gamma^{kj}_mc^{pi}_{l,k}\\
+c^{\alpha\beta}(w^p_{\alpha k}c^{ki}_lw^j_{\beta m}-w^i_{\alpha k}c^{kp}_lw^j_{\beta m}
-w^i_{\alpha k}c^{kp}_mw^j_{\beta l}+w^p_{\alpha k}c^{ki}_mw^j_{\beta l})=0,
\end{split}\end{equation}
which vanished identically in view of the known conditions as shown in
Lemma~\ref{lem:RR}.
The coefficient~$a_{11}$ identically vanishes in view of the above conditions,
while the coefficient~$a_{02}$ is equivalent to the
condition~\eqref{Gamma_der_c_der}.  The coefficient~$a_{10}$ is still quite
lengthy after preliminary simplifications, so we opt to present it
part-wise. First of all, the highest coefficient thereof (of~$u^l_{xxx}$)
vanishes identically; the coefficient of~$u^l_{xx}u^m_x$ is not symmetrical,
\begin{gather*}
  -4\Gamma^{ij}_{k,l}c^{pk}_m-5\Gamma^{ij}_{k,l}c^{kp}_m+3\Gamma^{ij}_{k,m}c^{kp}_l
  +4\Gamma^{ij}_{k,m}c^{pk}_l+5\Gamma^{pj}_{k,m}c^{ki}_l-3\Gamma^{pj}_{k,l}c^{ki}_m
  \\
  +4\Gamma^{ij}_kc^{kp}_{l,m}+4\Gamma^{ij}_kc^{pk}_{l,m}
  -4\Gamma^{ij}_kc^{pk}_{m,l}-4\Gamma^{ij}_kc^{kp}_{m,l}
  -5\Gamma^{kj}_lc^{pi}_{k,m}
  \\
  +3\Gamma^{kj}_mc^{pi}_{k,l}+\Gamma^{kj}_mc^{pi}_{l,k}+\Gamma^{kj}_lc^{pi}_{m,k}
  +4\Gamma^{pj}_kc^{ki}_{l,m}-4\Gamma^{pj}_kc^{ki}_{m,l}
  \\
  +c^{\alpha\beta}\Big(5c^{ki}_lw^p_{\alpha k}w^j_{\beta
    m}-3c^{ki}_mw^p_{\alpha k}w^j_{\beta l}+3c^{kp}_lw^i_{\alpha k}w^j_{\beta
    m} -5c^{kp}_mw^i_{\alpha k}w^j_{\beta l}
  \\
  -4c^{pk}_mw^i_{\beta k}w^j_{\alpha l}-4c^{ip}_kw^k_{\alpha l}w^j_{\beta m}
  +4c^{pi}_kw^j_{\alpha m}w^k_{\beta l}+4c^{ik}_lw^p_{\alpha k}w^j_{\beta
    m}\Big),
\end{gather*}
and therefore it better to write down its symmetrisation and anti-symmetrisation with respect to~$l,m$.
The symmetrical part gives rise to the condition~\eqref{Gamma_der_c_der}, and its antisymmetrical counterpart to
\begin{gather*}
-\Gamma^{ij}_{k,l}c^{pk}_m+\Gamma^{ij}_{k,m}c^{kp}_l
+\Gamma^{ij}_{k,m}c^{pk}_l-\Gamma^{pj}_{k,l}c^{ki}_m
-\Gamma^{ij}_{k,l}c^{kp}_m
+\Gamma^{pj}_{k,m}c^{ki}_l\\
+\Gamma^{ij}_kc^{kp}_{l,m}+\Gamma^{ij}_kc^{pk}_{l,m}-\Gamma^{ij}_kc^{pk}_{m,l}-\Gamma^{ij}_kc^{kp}_{m,l}-\Gamma^{kj}_lc^{pi}_{k,m}
+\Gamma^{pj}_kc^{ki}_{l,m}-\Gamma^{pj}_kc^{ki}_{m,l}+\Gamma^{kj}_mc^{pi}_{k,l}\\
+c^{\alpha\beta}\left(c^{kp}_lw^i_{\alpha k}w^j_{\beta m}-c^{ki}_mw^p_{\alpha k}w^j_{\beta l}+c^{pk}_lw^i_{\alpha k}w^j_{\beta m}
-c^{pk}_mw^i_{\alpha k}w^j_{\beta l}+c^{ki}_lw^p_{\alpha k}w^j_{\beta m}-c^{kp}_mw^i_{\alpha k}w^j_{\beta l}\right)=0,
\end{gather*}
which vanishes in view of Lemma~\ref{lemma:re}.
The coefficient of~$u^l_xu^m_xu^n_x$ in~$a_{10}$ is
\begin{gather*}
\frac{\p}{\p u^n}\left(-\Gamma^{ij}_{k,l}c^{kp}_m
+\Gamma^{pj}_{k,m}c^{ki}_l
-\Gamma^{kj}_lc^{pi}_{k,m}+\Gamma^{kj}_mc^{pi}_{l,k}
+c^{\alpha\beta}(c^{ki}_mw^p_{\alpha k}w^j_{\beta l}-c^{kp}_lw^i_{\alpha k}w^j_{\beta m})\right)
+c^{\alpha\beta}(S^{ip}_{\alpha nl}w^j_{\beta m}+S^{ij}_{\alpha nl}w^p_{\beta m}).
\end{gather*}
The first part thereof vanishes in view of~\eqref{Gamma_der_c_der}, while the
latter vanishes due to Lemma~\ref{c_der_w_der}. Finally, the most lengthy
coefficient~$a_{00}$ simplifies marvelously to just
  \begin{align*}
    & c^{\alpha\beta}\ddx{}(S^{ip}_{\alpha nl}w^j_{\beta m}u^l_xu^m_xu^n_x),
  \end{align*}
which is identically zero due to Lemma~\ref{c_der_w_der} as well. Finally,
we can show that the coefficient~$a_{01}$ simplifies to~\eqref{Gamma_kl_Gamma_lk},
and thus vanishes identically as well.
\end{proof}

An interesting Corollary provides a new structure arising from the first-order
operator~$P$.
\begin{corollary}\label{cor:assoc_alg}
  The cotangent space of the space of dependent variables $(u^i)$ is endowed by
  the Christoffel symbols $\Gamma^{ij}_k$ with a structure of associative
  algebra (without unity):
  \begin{equation}\label{eq:23}
    \Gamma^{ij}_s\Gamma^{sk}_l - \Gamma^{ik}_s \Gamma^{sj}_l = 0.
  \end{equation}
\end{corollary}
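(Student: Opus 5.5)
The plan is to derive \eqref{eq:23} from the contravariant expression of the Riemann curvature of $g$ in terms of the quantities $\Gamma^{ij}_k$. This brings together two inputs: the antisymmetrised derivative condition \eqref{Gamma_kl_Gamma_lk} from the proof of Theorem~\ref{th:comp_loc}, and the Hamiltonian curvature condition \eqref{eq:34} for $P$. The quadratic expression $\Gamma^{ij}_s\Gamma^{sk}_l-\Gamma^{ik}_s\Gamma^{sj}_l$ is precisely the non-derivative part of the contravariant curvature. Both the derivative part and the curvature itself are known in terms of the tensors $W_\alpha$, so I expect the quadratic part to be forced to vanish.

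\textbf{Step 1 (contravariant curvature).} Starting from $\Gamma^{ij}_k=-g^{is}\Gamma^j_{sk}$ and using \eqref{eq:15}, \eqref{eq:17}, I will write the standard Dubrovin--Novikov/Mokhov identity
\begin{equation*}
  g^{is}\big(\Gamma^{jk}_{s,l}-\Gamma^{jk}_{l,s}\big)
  +\Gamma^{ij}_s\Gamma^{sk}_l-\Gamma^{ik}_s\Gamma^{sj}_l
  = \pm\, g^{is}g^{jt}R^{k}_{\ stl}.
\end{equation*}
Here the sign is the one fixed by the Kobayashi--Nomizu convention used in \eqref{eq:34}. The identity follows by expanding $\partial_l\Gamma^{jk}_s=-\partial_l(g^{jr}\Gamma^k_{rs})$, replacing $g^{jr}_{,l}$ by means of \eqref{eq:15}, and collecting terms.

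\textbf{Step 2 (substitution).} Lower or raise indices so that the right-hand side matches \eqref{eq:34}. Then substitute two expressions:
\begin{itemize}
\item the curvature, written through \eqref{eq:34} as $c^{\alpha\beta}$ times quadratic expressions in $w^i_{\alpha k}$, with indices raised via $g$;
\item the antisymmetrised derivative $\Gamma^{jk}_{s,l}-\Gamma^{jk}_{l,s}$, replaced via \eqref{Gamma_kl_Gamma_lk} by $c^{\alpha\beta}(w^k_{\alpha s}w^j_{\beta l}-w^j_{\alpha s}w^k_{\beta l})$.
\end{itemize}
This turns the identity into
\begin{equation*}
  \Gamma^{ij}_s\Gamma^{sk}_l-\Gamma^{ik}_s\Gamma^{sj}_l=c^{\alpha\beta}\,(\text{quadratic terms in }W).
\end{equation*}

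\textbf{Step 3 (cancellation).} Show that the remaining $c^{\alpha\beta}$-terms cancel. The tools are the $g$-symmetry $g^{is}w^j_{\alpha s}=g^{js}w^i_{\alpha s}$, which lets me move the contracted $g^{is}$ inside each $w$, and the symmetry of $c^{\alpha\beta}$, which lets me swap $\alpha$ and $\beta$. Each term from \eqref{Gamma_kl_Gamma_lk} should then match a term from \eqref{eq:34} with the opposite sign.

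\textbf{Main obstacle.} The real difficulty is index and sign bookkeeping: getting the exact form of the contravariant curvature identity in the conventions of \eqref{eq:34}, and placing the indices of $\Gamma^{ij}_k$ consistently with their use in \eqref{Gamma_kl_Gamma_lk}. If the curvature route becomes too tangled, my fallback is to avoid curvature altogether. Contracting \eqref{eq:8} with $f_{pq}$ gives $\Gamma^{ij}_{k,l}$ explicitly. I would then differentiate \eqref{eq:17}, substitute for the derivatives of $\Gamma$, and use \eqref{eq:1} and \eqref{Gamma_c} to isolate the commutator $\Gamma^{ij}_s\Gamma^{sk}_l-\Gamma^{ik}_s\Gamma^{sj}_l$.
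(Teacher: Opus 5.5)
Your plan is the paper's proof: the paper only observes that, once the antisymmetrised derivatives of $\Gamma^{ij}_k$ are replaced via \eqref{Gamma_kl_Gamma_lk}, the contravariant form of the curvature condition \eqref{eq:34} reduces to \eqref{eq:23}. The one thing to fix is the index placement in your Step~1: writing $R^{ij}_{kl}=g^{is}R^j_{\ skl}$ (Kobayashi--Nomizu), the identity is $g^{is}(\Gamma^{jk}_{s,l}-\Gamma^{jk}_{l,s})+\Gamma^{ij}_s\Gamma^{sk}_l-\Gamma^{ik}_s\Gamma^{sj}_l=g^{is}g^{jt}R^k_{\ tsl}=g^{is}R^{jk}_{sl}$, not $\pm g^{is}g^{jt}R^k_{\ stl}$ (with the latter the $W$-terms fail to cancel for either sign), and since \eqref{Gamma_kl_Gamma_lk} together with the symmetry of $c^{\alpha\beta}$ says, by \eqref{eq:34}, exactly $\Gamma^{jk}_{s,l}-\Gamma^{jk}_{l,s}=R^{jk}_{sl}$, the cancellation in Step~3 is immediate and does not even need the $g$-symmetry of $W_\alpha$.
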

\begin{proof}
  Indeed, the equation~\eqref{Gamma_kl_Gamma_lk} implies that the
  equation~\eqref{eq:34} is equivalently rewritten as~\eqref{eq:23}.
\end{proof}

\subsection{The structure of \texorpdfstring{$g$}{g}}


We prove a structure formula for $g$ with a series of lemmas. Our initial aim
is to integrate the conditions~\eqref{Gamma_derivative}
and~\eqref{Gamma_kl_Gamma_lk} out of all compatibility conditions of~$P$
and~$R$.

\begin{lemma}
There exists a matrix of functions $r^{ij}$ such that
  \begin{equation}\label{eq:10}
    \Gamma^{ij}_k + c^{\alpha\beta}w^i_{\alpha k} w^j_\beta = r^{ij}_{,k}.
  \end{equation}
\end{lemma}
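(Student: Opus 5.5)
Fix the indices $i,j$ and view $A^{ij}_k := \Gamma^{ij}_k + c^{\alpha\beta}w^i_{\alpha k}w^j_\beta$ as the components of a $1$-form $A^{ij}_k\,\mathrm du^k$ on the space of dependent variables. We will show that this form is closed; by the Poincaré lemma, on a simply connected domain in the $u$-variables, it is then exact, which produces the functions $r^{ij}$. The potentials $w^j_\beta$ are available by the Corollary after Theorem~\ref{th:comp_nloc}, because \eqref{w_gradient} says precisely that $w^j_{\beta k}=w^j_{\beta,k}$.

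Closedness means $A^{ij}_{k,l}=A^{ij}_{l,k}$. Differentiating,
\begin{equation*}
A^{ij}_{k,l}=\Gamma^{ij}_{k,l}+c^{\alpha\beta}w^i_{\alpha k,l}w^j_\beta
+c^{\alpha\beta}w^i_{\alpha k}w^j_{\beta l}.
\end{equation*}
The second term is symmetric in $k,l$ by \eqref{w_gradient}, so its antisymmetrisation vanishes. The antisymmetric part is therefore
\begin{equation*}
\Gamma^{ij}_{k,l}-\Gamma^{ij}_{l,k}+c^{\alpha\beta}\big(w^i_{\alpha k}w^j_{\beta l}-w^i_{\alpha l}w^j_{\beta k}\big).
\end{equation*}
Now use condition \eqref{Gamma_kl_Gamma_lk}, which holds under compatibility since it was derived from \eqref{Gamma_derivative}, \eqref{Commut_Gamma_f} and \eqref{c_f_sym} in the proof of Theorem~\ref{th:comp_loc}. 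It gives
\begin{equation*}
\Gamma^{ij}_{k,l}-\Gamma^{ij}_{l,k}=c^{\alpha\beta}\big(w^j_{\alpha k}w^i_{\beta l}-w^i_{\alpha k}w^j_{\beta l}\big).
\end{equation*}
Substituting, the remaining term to kill is $c^{\alpha\beta}\big(w^j_{\alpha k}w^i_{\beta l}-w^i_{\alpha l}w^j_{\beta k}\big)$. Because $c^{\alpha\beta}$ is symmetric (property 6 of the Hamiltonian conditions for $P$), relabelling $\alpha\leftrightarrow\beta$ turns $c^{\alpha\beta}w^i_{\alpha l}w^j_{\beta k}$ into $c^{\alpha\beta}w^j_{\alpha k}w^i_{\beta l}$, so this term is zero. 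Hence the form is closed.

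The only step needing care is the index bookkeeping in the last identity. One must check that the sign convention in \eqref{Gamma_kl_Gamma_lk}, as opposed to the curvature formula \eqref{eq:34}, matches the antisymmetrisation produced by differentiating the product $w^i_{\alpha k}w^j_\beta$, and that the symmetry of $c^{\alpha\beta}$ is what makes the two quadratic terms cancel rather than double. If the signs were to fail to cancel, the fallback would be to recheck \eqref{Gamma_kl_Gamma_lk} directly from \eqref{Gamma_derivative}. Finally, the statement is local in $u$: we work on a contractible domain, and there $r^{ij}$ is unique up to an additive constant.
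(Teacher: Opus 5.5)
Your proof is correct and is the paper's own argument: the paper shows that $A^{ij}_k\,\mathrm du^k$ is closed by invoking \eqref{Gamma_kl_Gamma_lk}, and you spell out the index check (via \eqref{w_gradient} and the symmetry of $c^{\alpha\beta}$) and the Poincar\'e-lemma step on a simply connected domain, both of which the paper leaves implicit. Your cancellation is right as written, so the hedging in your final paragraph is unnecessary.
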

\begin{proof}
  Indeed,
  \begin{equation*}
    (\Gamma^{ij}_k + c^{\alpha\beta}w^i_{\alpha k} w^j_\beta)_{,l} -
    (\Gamma^{ij}_l + c^{\alpha\beta}w^i_{\alpha l} w^j_\beta)_{,k} = 0,
  \end{equation*}
  in view of the condition of compatibility~\eqref{Gamma_kl_Gamma_lk}.
\end{proof}
Hence, by introducing a matrix of functions $r^{ij}$ as above we integrate the
condition~\eqref{Gamma_kl_Gamma_lk}
\begin{lemma}
  We have the identity
  \begin{equation}
    \label{eq:19}
    g^{ij} + c^{\alpha\beta}w^i_\alpha w^j_\beta = r^{ij} + r^{ji}.
  \end{equation}
\end{lemma}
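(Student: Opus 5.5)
We compare the $u$-derivatives of both sides of \eqref{eq:19}. By \eqref{eq:15}, $g^{ij}_{,k}=\Gamma^{ij}_k+\Gamma^{ji}_k$. Differentiating the nonlocal term gives
\begin{equation*}
  \big(c^{\alpha\beta}w^i_\alpha w^j_\beta\big)_{,k}
  = c^{\alpha\beta}w^i_{\alpha k}w^j_\beta + c^{\alpha\beta}w^i_\alpha w^j_{\beta k}.
\end{equation*}
Here we write $w^i_{\alpha k}=w^i_{\alpha,k}$, which is legitimate by Theorem~\ref{th:comp_nloc}.

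Since $(c^{\alpha\beta})$ is symmetric, we may swap $\alpha\leftrightarrow\beta$ in the second summand. It then becomes $c^{\alpha\beta}w^j_{\alpha k}w^i_\beta$, which is exactly the nonlocal term of \eqref{eq:10} with $i,j$ exchanged. Adding up, and using \eqref{eq:10} twice, we get
\begin{equation*}
  \big(g^{ij}+c^{\alpha\beta}w^i_\alpha w^j_\beta\big)_{,k}
  = \big(\Gamma^{ij}_k + c^{\alpha\beta}w^i_{\alpha k}w^j_\beta\big)
  + \big(\Gamma^{ji}_k + c^{\alpha\beta}w^j_{\alpha k}w^i_\beta\big)
  = r^{ij}_{,k}+r^{ji}_{,k}.
\end{equation*}

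It follows that $g^{ij}+c^{\alpha\beta}w^i_\alpha w^j_\beta-r^{ij}-r^{ji}$ is a constant matrix $K^{ij}$. This matrix is symmetric, because $g^{ij}$ is symmetric, $c^{\alpha\beta}w^i_\alpha w^j_\beta$ is symmetric in $(i,j)$ by the symmetry of $c$, and $r^{ij}+r^{ji}$ is symmetric.

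To finish, we use the freedom in the choice of $r^{ij}$. The functions $r^{ij}$ in \eqref{eq:10} are determined only up to additive constants, and the potentials $w^i_\alpha$ likewise. Replacing $r^{ij}$ by $r^{ij}+\tfrac12K^{ij}$ still satisfies \eqref{eq:10} and kills the constant, which gives \eqref{eq:19}. I do not expect a real obstacle in this argument. The only point needing care is to state explicitly that the identity holds for a suitable normalisation of the integration constants in $r^{ij}$; this normalisation is allowed because $K$ is symmetric.
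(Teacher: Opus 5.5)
Your proposal is correct and is essentially the paper's argument: the paper symmetrizes \eqref{eq:10} in $i,j$ and uses \eqref{eq:15} together with the symmetry of $c^{\alpha\beta}$, which is exactly your derivative computation. Your explicit handling of the symmetric constant $K^{ij}$, absorbed into $r^{ij}$ (which \eqref{eq:10} determines only up to additive constants), makes precise a normalisation that the paper leaves implicit.
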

\begin{proof}
  It is enough to symmetrize the equality~\eqref{eq:10} with respect to the
  indices $i$, $j$.
\end{proof}
The above Lemma shows that the formula~\eqref{eq:19} integrates the
equation~\eqref{eq:15}.

\begin{lemma}
  The system of equations~\eqref{eq:1} and~\eqref{eq:8} is equivalent to the
  system:
  \begin{subequations}\label{eq:20}
  \begin{align}
    \label{eq:125}
    &f^{lk}r^{ij}_{,k} = f^{ik}r^{lj}_{,k},
    \\ \label{eq:185}
    & f^{kp}r^{ij}_{,kl} = r^{pj}_{,k}c^{ki}_l - r^{kj}_{,l}c^{pi}_k.
  \end{align}
  \end{subequations}
\end{lemma}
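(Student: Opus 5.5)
We substitute $\Gamma^{ij}_k = r^{ij}_{,k} - c^{\alpha\beta}w^i_{\alpha k}w^j_\beta$ from~\eqref{eq:10} into~\eqref{eq:1} and~\eqref{eq:8}, and show that the terms involving the nonlocal tail cancel. The ingredients are the facts already established about the $w_\alpha$: the symmetry~\eqref{Commut_f_w}, the derivative formula~\eqref{w_derivative}, the gradient property~\eqref{w_gradient} (so that $w^i_{\alpha k}=w^i_{\alpha,k}$), and the symmetry of $c^{\alpha\beta}$. Since the substitution is invertible, with $r$ given and $\Gamma$ defined by~\eqref{eq:10}, the two systems will then be equivalent.

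\emph{First equation.} Substituting into~\eqref{eq:1} gives
\begin{equation*}
r^{ij}_{,k}f^{kp} - c^{\alpha\beta}w^i_{\alpha k}f^{kp}w^j_\beta
= r^{pj}_{,k}f^{ki} - c^{\alpha\beta}w^p_{\alpha k}f^{ki}w^j_\beta .
\end{equation*}
By~\eqref{Commut_f_w}, $w^i_{\alpha k}f^{kp}=w^p_{\alpha k}f^{ki}$, so the tail terms cancel. What remains is $f^{pk}r^{ij}_{,k}=f^{ik}r^{pj}_{,k}$, which is~\eqref{eq:125} after renaming $p\to l$ and using the symmetry of $f$.

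\emph{Second equation.} Differentiate~\eqref{eq:10} with respect to $u^l$:
\begin{equation*}
\Gamma^{ij}_{k,l} + c^{\alpha\beta}w^i_{\alpha k}w^j_{\beta l} = r^{ij}_{,kl} - c^{\alpha\beta}w^i_{\alpha k,l}w^j_\beta .
\end{equation*}
The left-hand side is exactly the bracket in~\eqref{eq:8}, so multiplying by $f^{kp}$ turns~\eqref{eq:8} into
\begin{equation*}
f^{kp}r^{ij}_{,kl} - c^{\alpha\beta}f^{kp}w^i_{\alpha k,l}w^j_\beta
= \big(r^{pj}_{,k} - c^{\alpha\beta}w^p_{\alpha k}w^j_\beta\big)c^{ki}_l - \big(r^{kj}_{,l} - c^{\alpha\beta}w^k_{\alpha l}w^j_\beta\big)c^{pi}_k ,
\end{equation*}
where $\Gamma$ on the right has also been replaced via~\eqref{eq:10}. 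Comparing with~\eqref{eq:185}, it suffices that the tail terms satisfy
\begin{equation*}
c^{\alpha\beta}w^j_\beta\big(f^{kp}w^i_{\alpha k,l} - w^p_{\alpha k}c^{ki}_l + c^{pi}_k w^k_{\alpha l}\big)=0 .
\end{equation*}
We check that the bracket vanishes identically, using~\eqref{w_derivative} with indices renamed. By~\eqref{w_gradient}, $w^i_{\alpha k,l}=w^i_{\alpha l,k}$. Then~\eqref{w_derivative}, with $j\to i$, $p\to l$, $q\to p$, reads $w^i_{\alpha l,k}f^{kp}=w^p_{\alpha k}c^{ki}_l - c^{pi}_k w^k_{\alpha l}$, which is exactly the required identity. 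Hence~\eqref{eq:8} is equivalent to~\eqref{eq:185}, given~\eqref{eq:10} and the conditions of Theorem~\ref{th:comp_nloc}.

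\emph{Main obstacle.} The delicate point is the index bookkeeping in the second equation. One has to confirm that the derivative term $w^i_{\alpha k,l}$ arising from differentiating~\eqref{eq:10} carries precisely the index pattern of~\eqref{w_derivative}. This relies on the symmetry of second derivatives, i.e. on~\eqref{w_gradient}, and on the symmetry of $f$. If the upper-index placement in $c^{ki}_l$ versus $c^{ik}_l$ fails to match, I would first try to fix it with~\eqref{c_f_sym}, and failing that with~\eqref{eq:47}.
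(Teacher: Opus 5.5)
Your proof is correct and follows the paper's argument essentially step for step. You substitute $\Gamma^{ij}_k = r^{ij}_{,k} - c^{\alpha\beta}w^i_{\alpha k}w^j_\beta$, cancel the tail in \eqref{eq:1} via \eqref{Commut_f_w}, and in \eqref{eq:8} remove the tail using the identity $c^{\alpha\beta}(f^{kp}w^i_{\alpha k,l} - w^p_{\alpha k}c^{ki}_l + c^{pi}_k w^k_{\alpha l})w^j_\beta = 0$, which follows from \eqref{w_derivative} together with \eqref{w_gradient}; your index bookkeeping checks out, so the fallback to \eqref{c_f_sym} or \eqref{eq:47} is not needed.
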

\begin{proof}
  Indeed, the first equation follows from the identities
  \begin{displaymath}
    f^{lk}\Gamma^{ij}_{k} = f^{ik}\Gamma^{lj}_{k}, \quad
    f^{lk}w^{i}_{\alpha k} = f^{ik}w^{l}_{\alpha k}.
  \end{displaymath}
  The second equation is obtained by rewriting the condition~\eqref{Gamma_derivative}
  using $r^{ij}$ as a variable:
  \begin{equation*}
      (r^{ij}_{,kl} - c^{\alpha\beta}w^i_{\alpha k,l}w^j_{\beta})f^{kp}
    =  (r^{pj}_{,k} - c^{\alpha\beta}w^p_{\alpha k} w^j_\beta)c^{ki}_l
        - (r^{kj}_{,l} - c^{\alpha\beta}w^k_{\alpha l} w^j_\beta)c^{pi}_k.
  \end{equation*}
  Since
  \begin{equation*}
    c^{\alpha\beta}(w^i_{\alpha k,l}f^{kp} + w^k_{\alpha l}c^{pi}_k
    - w^p_{\alpha k}c^{ki}_l)w^j_\beta = 0,
\end{equation*}
and the arguments can be reversed, we have the result.
\end{proof}

Note that the above potential function $r^{ij}$ is not a gradient with respect
to any of the involved pseudo-Riemannian metrics $f_{ij}$, $g_{ij}$. However,
there is the following interesting relation.
\begin{proposition}
  We have the identity
  \begin{equation*}
    (f_{bk}r^{kj})_{,a} - (f_{ak}r^{kj})_{,b} = (f_{kb,a} - f_{ka,b})r^{kj}
    = 3r^{kj}c_{kab}.
  \end{equation*}
\end{proposition}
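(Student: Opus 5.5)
The identity has two equalities, and I would prove them separately: the first from \eqref{eq:125} after lowering an index with $f_{ij}$, the second from \eqref{eq:506}.

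\textbf{Second equality.} This is immediate from \eqref{eq:506}. We have $c_{kab}=\frac13(f_{kb,a}-f_{ka,b})$, so $(f_{kb,a}-f_{ka,b})r^{kj}=3r^{kj}c_{kab}$.

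\textbf{First equality.} Expand with the Leibniz rule:
\begin{equation*}
  (f_{bk}r^{kj})_{,a}-(f_{ak}r^{kj})_{,b}
  = (f_{bk,a}-f_{ak,b})r^{kj} + f_{bk}r^{kj}_{,a}-f_{ak}r^{kj}_{,b}.
\end{equation*}
By symmetry of $f$, the first group equals $(f_{kb,a}-f_{ka,b})r^{kj}$. It therefore suffices to show
\begin{equation*}
  f_{bk}r^{kj}_{,a}=f_{ak}r^{kj}_{,b}.
\end{equation*}

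\textbf{Deriving this from \eqref{eq:125}.} Start from $f^{lk}r^{ij}_{,k}=f^{ik}r^{lj}_{,k}$ and contract with $f_{la}f_{ib}$. The left side becomes $f_{ib}r^{ij}_{,a}$. The right side becomes $f_{la}r^{lj}_{,b}$. Renaming dummy indices gives exactly $f_{bk}r^{kj}_{,a}=f_{ak}r^{kj}_{,b}$, which completes the proof.

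\textbf{Where care is needed.} The only delicate step is the index gymnastics. One must contract \eqref{eq:125} with $f_{la}f_{ib}$, using the two indices that sit on $f^{lk}$ and $f^{ik}$, and not touch the fixed index $j$. One should also make sure that it is the first upper index of $r$ that gets lowered. If the raw contraction seemed to give the transposed index placement, I would use \eqref{eq:10} together with \eqref{Commut_Gamma_f} and \eqref{Commut_f_w} directly. Those give the symmetry $f^{lk}\Gamma^{ij}_k=f^{ik}\Gamma^{lj}_k$ and the analogous one for $w^i_{\alpha k}$, and contracting each with $f_{la}f_{ib}$ yields the needed relation termwise.
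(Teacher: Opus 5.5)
Your proof is correct and takes essentially the same route as the paper. The paper's proof simply states that lowering indices in \eqref{eq:125} gives $f_{bi}r^{ij}_{,a}=f_{ai}r^{ij}_{,b}$ and concludes; you additionally write out the Leibniz expansion and the use of \eqref{eq:506} for the second equality, both of which the paper leaves implicit. Your contraction with $f_{la}f_{ib}$ does lower the first index of $r$, so the fallback in your last paragraph is not needed.
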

\begin{proof}
  Indeed, the lower indices version of~\eqref{eq:125} is $f_{bi}r^{ij}_{,a}
  = f_{ai}r^{ij}_{,b}$, from which the result follows.
\end{proof}

Let us introduce the parametrization $r^{ij}=\psi^i_\gamma Z^{\gamma j}$.
The equation~\eqref{eq:185} transforms into the equation
  \begin{equation*}
    Z^{\gamma j}_{,hk}=0,
  \end{equation*}
and thus $Z^{\gamma r} = \eta^{\gamma r}_k u^k + \xi^{\gamma r}$ for some constants~$\eta^{\gamma r}_k$ and~$\xi^{\gamma r}$.
\begin{proposition}\label{theor:integration}
 The equation~\eqref{eq:125} is equivalent to the two linear algebraic equations
  \begin{gather*}
    \phi_{\beta\gamma}[\psi_{ij}^{\beta}\eta_{k}^{\gamma r}
      +\psi_{jk}^{\beta}\eta_{i}^{\gamma r}
      +\psi_{ki}^{\beta}\eta_{j}^{\gamma r}]=0,
    \\
     \phi_{\beta\gamma}[\psi_{ik}^{\beta}\xi^{\gamma r}
      +\omega_{k}^{\beta}\eta_{i}^{\gamma r}
      -\omega_{i}^{\beta}\eta_{k}^{\gamma r}]=0.
  \end{gather*}
\end{proposition}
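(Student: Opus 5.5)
The plan is to reduce \eqref{eq:125}, for each fixed value of the second index $j=r$, to a condition on the single vector function $V^i:=r^{ir}=\psi^i_\gamma Z^{\gamma r}$. Then, using the factorisation $f_{ij}=\phi_{\alpha\beta}\psi^\alpha_i\psi^\beta_j$ together with the quadratic relations \eqref{eq:18}, this condition should become an identity between polynomials of degree one in $u$. Its coefficients are expected to be exactly the two displayed algebraic equations. This mirrors the way the conditions \eqref{eq:16} arise in \cite{FPV17:_system_cl} from \eqref{e1}.

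\emph{Step 1 (lowering indices).} Contract \eqref{eq:125}, i.e.\ $f^{lk}r^{ir}_{,k}=f^{ik}r^{lr}_{,k}$, with $f_{al}f_{bi}$. This gives the equivalent condition
\begin{equation*}
f_{bi}V^i_{,a}=f_{ai}V^i_{,b},
\end{equation*}
which is \eqref{e1} for $V$. The equivalence is clear because $f$ is nondegenerate.

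\emph{Step 2 (computing $f_{bi}V^i_{,a}$).} Since $\psi^\gamma_i=\psi^\gamma_{is}u^s+\omega^\gamma_i$, we have $\psi^\beta_{i,a}=\psi^\beta_{ia}$, and hence $\psi^\beta_i\,\psi^i_{\gamma,a}=-\psi^\beta_{ia}\psi^i_\gamma$. Together with $Z^{\gamma r}_{,a}=\eta^{\gamma r}_a$ this yields
\begin{equation*}
f_{bi}V^i_{,a}=\phi_{\alpha\beta}\psi^\alpha_b\big(\eta^{\beta r}_a-\psi^\beta_{ia}\psi^i_\gamma Z^{\gamma r}\big).
\end{equation*}
Antisymmetrising in $a,b$, the condition of Step~1 becomes
\begin{equation*}
\phi_{\alpha\beta}\big(\psi^\alpha_b\eta^{\beta r}_a-\psi^\alpha_a\eta^{\beta r}_b\big)
-\phi_{\alpha\beta}\big(\psi^\alpha_b\psi^\beta_{ia}-\psi^\alpha_a\psi^\beta_{ib}\big)\psi^i_\gamma Z^{\gamma r}=0.
\end{equation*}

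\emph{Step 3 (removing the rational dependence).} This is the key step and the one I expect to require the most care. The two relations \eqref{eq:18}, taken as the $u$-linear and constant parts, combine into the single identity
\begin{equation*}
\phi_{\alpha\beta}\big(\psi^\alpha_b\psi^\beta_{ia}+\psi^\alpha_a\psi^\beta_{bi}+\psi^\alpha_i\psi^\beta_{ab}\big)=0
\end{equation*}
for the full linear matrix $\psi^\alpha_i$. Using the skew-symmetry $\psi^\beta_{ib}=-\psi^\beta_{bi}$, it gives
\begin{equation*}
\phi_{\alpha\beta}\big(\psi^\alpha_b\psi^\beta_{ia}-\psi^\alpha_a\psi^\beta_{ib}\big)=-\phi_{\alpha\beta}\psi^\alpha_i\psi^\beta_{ab}.
\end{equation*}
Contracting with $\psi^i_\gamma$ then cancels the inverse matrix. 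The condition becomes
\begin{equation*}
\phi_{\alpha\beta}\big(\psi^\alpha_b\eta^{\beta r}_a-\psi^\alpha_a\eta^{\beta r}_b+\psi^\alpha_{ab}Z^{\beta r}\big)=0.
\end{equation*}
The main point to double-check here is the index bookkeeping and the symmetry of $\phi$, which allows $\alpha$ and $\beta$ to be swapped.

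\emph{Step 4 (splitting by degree).} The last expression is affine in $u$, so it vanishes identically if and only if its coefficients vanish. The coefficient of $u^s$ is
\begin{equation*}
\phi_{\alpha\beta}\big(\psi^\alpha_{bs}\eta^{\beta r}_a-\psi^\alpha_{as}\eta^{\beta r}_b+\psi^\alpha_{ab}\eta^{\beta r}_s\big)=0.
\end{equation*}
After using skew-symmetry and relabelling $(a,b,s)\to(k,j,i)$ up to sign, this is the first, cyclic, equation of the statement. The constant term is
\begin{equation*}
\phi_{\alpha\beta}\big(\omega^\alpha_b\eta^{\beta r}_a-\omega^\alpha_a\eta^{\beta r}_b+\psi^\alpha_{ab}\xi^{\beta r}\big)=0,
\end{equation*}
which is the second equation with $(a,b)=(i,k)$. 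Every step is reversible, so the equivalence holds in both directions.
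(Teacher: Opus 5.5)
Your proposal is correct and follows the same route as the paper: fix the column index $r$, treat $r^{ir}=\psi^i_\gamma Z^{\gamma r}$ as the flux of a single system satisfying the symmetry condition \eqref{e1}, and reduce this to the algebraic conditions of \cite{FPV17:_system_cl}. The only difference is that the paper just cites \cite{FPV17:_system_cl} for that reduction, whereas you carry it out explicitly, combining the two relations \eqref{eq:18} into one identity for the affine matrix $\psi^\alpha_i$ in order to eliminate the inverse matrix; your index bookkeeping checks out.
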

\begin{proof}
In \cite{FPV17:_system_cl} a closed form solution of the system
\begin{align}
  \label{eq:21}
  &f^{lk}w^{i}_{k} = f^{ik}w^{l}_{k}
  \\\label{eq:22}
  & w^i_{kl}f^{kp} + w^k_{l}c^{pi}_k
    - w^p_{k}c^{ki}_l = 0
\end{align}
is presented by means of the parametrization $w^i = \psi^i_\gamma Z^\gamma$. In
particular, it is proved that the last equation~\eqref{eq:22} is equivalent to
$Z^\gamma_{,ij}=0$, thus proving that
$Z^\gamma = \eta^{\gamma}_k u^k + \xi^{\gamma}$ is a vector of linear
functions. The equation \eqref{eq:21} is proved to be equivalent to the
(linear) algebraic system
 \begin{align*}
    &\phi_{\beta\gamma}[\psi_{ij}^{\beta}\eta_{k}^{\gamma }
      +\psi_{jk}^{\beta}\eta_{i}^{\gamma }
      +\psi_{ki}^{\beta}\eta_{j}^{\gamma }]=0,
    \\
    & \phi_{\beta\gamma}[\psi_{ik}^{\beta}\xi^{\gamma }
      +\omega_{k}^{\beta}\eta_{i}^{\gamma }
      -\omega_{i}^{\beta}\eta_{k}^{\gamma }]=0.
\end{align*}

Here, repeating the above argument using the parametrization $r^{ij} =
\psi^i_\gamma Z^{\gamma j}$ we obtain the result.
\end{proof}

In other words, the matrix $(r^{ij})$ can be interpreted as a family of $n$
systems of first-order conservation laws, indexed by $j$, which is Hamiltonian
with respect to the operator~$R$:
\begin{equation*}
  u^i_t = (r^{ij})_x = R^{ik}\left(\fd{H^{j}}{u^k}\right),
\end{equation*}
where $H^j$ is the Hamiltonian density (see~\cite{FPV17:_system_cl} for more
details); the fluxes have the form $r^{ij}=\psi^i_\alpha
Z^{\alpha j}$ described in the proof of the above Theorem.

The following Structure Formula for the metric of the first-order operator $P$
is a straightforward consequence of~\eqref{eq:19}.
\begin{corollary}[Structure Formula]\label{cor:structure-g}
  We have
  \begin{equation*}
    g^{ij} = \psi^i_\gamma Z^{\gamma j} + \psi^j_\gamma Z^{\gamma i}
    -c^{\alpha\beta}w^i_\alpha w^j_\beta,
  \end{equation*}
  where $w^i_\alpha$ are potentials of the coefficients of the nonlocal part in
  the expression~\eqref{WNL}: $w^i_{\alpha k}=w^i_{\alpha , k}$.
\end{corollary}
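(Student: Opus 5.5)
The Structure Formula follows by combining the identity \eqref{eq:19} with the parametrization of $r^{ij}$ from Proposition~\ref{theor:integration}. Starting from \eqref{eq:19},
\begin{equation*}
  g^{ij} + c^{\alpha\beta}w^i_\alpha w^j_\beta = r^{ij} + r^{ji},
\end{equation*}
we solve for $g^{ij}$ and then substitute the description of $r^{ij}$.

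First, we check that $r^{ij}=\psi^i_\gamma Z^{\gamma j}$ is a legitimate substitution rather than an extra assumption. Since $(\psi^\gamma_i)$ is non-degenerate, its inverse $(\psi^i_\gamma)$ exists. Hence any matrix of functions $r^{ij}$ can be written uniquely in this form by setting $Z^{\gamma j}:=\psi^\gamma_i r^{ij}$.

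Next, we use that $r^{ij}$ satisfies the system \eqref{eq:20}, which is equivalent to the compatibility conditions \eqref{eq:1} and \eqref{eq:8}. The second equation \eqref{eq:185} then forces $Z^{\gamma j}_{,hk}=0$. So $Z^{\gamma j}=\eta^{\gamma j}_k u^k+\xi^{\gamma j}$ are linear, and by Proposition~\ref{theor:integration} their coefficients satisfy the algebraic system \eqref{eq:16}.

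Substituting into \eqref{eq:19} gives
\begin{equation*}
  g^{ij}=\psi^i_\gamma Z^{\gamma j}+\psi^j_\gamma Z^{\gamma i}-c^{\alpha\beta}w^i_\alpha w^j_\beta .
\end{equation*}
Here the $w^i_\alpha$ are the potentials with $w^i_{\alpha,k}=w^i_{\alpha k}$, whose existence is guaranteed by Theorem~\ref{th:comp_nloc} through condition \eqref{w_gradient}.

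The only point needing care is a gauge issue. The potentials $w^i_\alpha$ are defined up to additive constants, and $r^{ij}$ is defined by \eqref{eq:10} only up to constants. I would check that these ambiguities are mutually consistent. Shifting $w^j_\beta$ by a constant changes $r^{ij}$ by $c^{\alpha\beta}w^i_\alpha\,\mathrm{const}$ plus a constant, which remains of the form $\psi^i_\gamma Z^{\gamma j}$ because $w^i_\alpha=\psi^i_\gamma Z^\gamma_\alpha$ with $Z^\gamma_\alpha$ linear. So the formula holds for any choice of potentials, with the $Z^{\gamma j}$ adjusted within the same solution space of \eqref{eq:16}.

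Apart from this bookkeeping, the argument is a direct substitution, and I expect no real obstacle.
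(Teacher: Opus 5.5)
Your proposal is correct and follows the paper's route: the paper obtains the Structure Formula directly from \eqref{eq:19} together with the parametrization $r^{ij}=\psi^i_\gamma Z^{\gamma j}$, where $Z^{\gamma j}$ is linear by \eqref{eq:185} and constrained by Proposition~\ref{theor:integration}. Your extra check is also correct: shifting the potentials $w^i_\alpha$ (and hence $r^{ij}$) by constants keeps $Z^{\gamma j}$ linear and in the same solution space, a point the paper leaves implicit.
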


The above results allow us to prove a conjecture that was made
in~\cite{OpanasenkoVitolo2024}.
\begin{corollary}
  The metric $g^{ij}$ is of the form
  \begin{equation*}
    g^{ij} = \psi^i_\alpha Q^{\alpha\beta}\psi^j_\beta,
  \end{equation*}
  where $Q^{\alpha\beta}$ is the matrix of quadratic functions defined by
  \begin{equation*}
    Q^{\alpha\beta} = Z^{\alpha j}\psi^\beta_j + \psi^\alpha_i Z^{\beta i} -
    c^{\gamma\delta} Z^\alpha_\gamma Z^\beta_\delta,
  \end{equation*}
  where $w^i_\alpha = \psi^i_\gamma Z^\gamma_\alpha$.
\end{corollary}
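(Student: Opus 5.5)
The plan is to substitute the parametrisations we already have into the Structure Formula (Corollary~\ref{cor:structure-g}) and factor out $\psi^i_\alpha$ on the left and $\psi^j_\beta$ on the right. Write $r^{ij}=\psi^i_\gamma Z^{\gamma j}$ and $w^i_\alpha=\psi^i_\gamma Z^\gamma_\alpha$, with $(\psi^i_\gamma)$ the inverse of $(\psi^\gamma_i)$. Then
\begin{equation*}
  g^{ij}=\psi^i_\gamma Z^{\gamma j}+\psi^j_\gamma Z^{\gamma i}
  -c^{\gamma\delta}\psi^i_\alpha Z^\alpha_\gamma\,\psi^j_\beta Z^\beta_\delta .
\end{equation*}
The last term is already of the form $\psi^i_\alpha(\cdots)^{\alpha\beta}\psi^j_\beta$. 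For the first two terms, insert the identity $\delta^j_k=\psi^j_\beta\psi^\beta_k$:
\begin{equation*}
  \psi^i_\alpha Z^{\alpha j}=\psi^i_\alpha Z^{\alpha k}\psi^\beta_k\,\psi^j_\beta .
\end{equation*}
Similarly,
\begin{equation*}
  \psi^j_\beta Z^{\beta i}=\psi^i_\alpha\,\psi^\alpha_k Z^{\beta k}\,\psi^j_\beta .
\end{equation*}
Collecting terms gives $g^{ij}=\psi^i_\alpha Q^{\alpha\beta}\psi^j_\beta$ with
\begin{equation*}
  Q^{\alpha\beta}=Z^{\alpha k}\psi^\beta_k+\psi^\alpha_k Z^{\beta k}-c^{\gamma\delta}Z^\alpha_\gamma Z^\beta_\delta ,
\end{equation*}
which is the stated matrix after renaming the dummy indices.

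It remains to check that the entries of $Q$ are quadratic. The $Z^{\alpha j}$ are linear in $u$ (from $Z^{\gamma j}_{,hk}=0$, established after~\eqref{eq:185}). The $Z^\alpha_\gamma$ are linear by the corollary to Theorem~\ref{th:comp_nloc}. The $\psi^\beta_k=\psi^\beta_{ks}u^s+\omega^\beta_k$ are affine. So each summand of $Q^{\alpha\beta}$ is a product of two affine functions, i.e.\ a polynomial of degree at most two. One also sees that $Q$ is symmetric, since $c^{\gamma\delta}$ is symmetric and the first two terms are exchanged under $\alpha\leftrightarrow\beta$.

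The only point needing care is the bookkeeping of which index of $\psi$ is upper or lower when inserting the identity matrix. I do not expect any real obstacle: the nontrivial content is already in the Structure Formula and in the linearity of $Z$.
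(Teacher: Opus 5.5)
Your proof is correct and follows the paper's implicit argument: the paper states this corollary without proof as an immediate consequence of the Structure Formula (Corollary~\ref{cor:structure-g}), obtained exactly as you do by substituting $r^{ij}=\psi^i_\gamma Z^{\gamma j}$ and $w^i_\alpha=\psi^i_\gamma Z^\gamma_\alpha$, then inserting $\psi^j_\beta\psi^\beta_k=\delta^j_k$ to factor out $\psi^i_\alpha$ on the left and $\psi^j_\beta$ on the right. Your observation that every summand of $Q^{\alpha\beta}$ is a product of two affine functions, and hence quadratic, fills in the one detail the paper leaves unstated.
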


Moreover, taking into account the results of this Section the main result can
be formulated as follows.

\begin{theorem}\label{th:compcond_integrated}
  Given the two Hamiltonian operators $P$, $R$ as in
  Subsections~\ref{sec:first-order-hamilt} and~\ref{sec:third-order-hamilt},
  the compatibility conditions $[P,R]=0$ are equivalent to:
  \begin{enumerate}
  \item For every $\alpha$ the vector function $(w^i_\alpha)$ is the vector of
    fluxes of a system of conservation laws which are Hamiltonian with respect
    to $R$.
  \item   \begin{equation*}
    g^{ij} = \psi^i_\gamma Z^{\gamma j} + \psi^j_\gamma Z^{\gamma i}
    -c^{\alpha\beta}w^i_\alpha w^j_\beta,
  \end{equation*}
  where $Z^{\gamma j} = \eta^{\gamma j}_k u^k + \xi^{\gamma j}$ are matrices of
  linear functions whose coefficients fulfill the (linear) algebraic system
  \begin{align*}
    &\phi_{\beta\gamma}[\psi_{ij}^{\beta}\eta_{k}^{\gamma j}
      +\psi_{jk}^{\beta}\eta_{i}^{\gamma j}
      +\psi_{ki}^{\beta}\eta_{j}^{\gamma j}]=0,
    \\
    & \phi_{\beta\gamma}[\psi_{ik}^{\beta}\xi^{\gamma j}
      +\omega_{k}^{\beta}\eta_{i}^{\gamma j}
      -\omega_{i}^{\beta}\eta_{k}^{\gamma j}]=0.
\end{align*}
\item
  \begin{displaymath}
(\Gamma^{ij}_k-\Gamma^{ji}_k)f^{kp}
    + (\Gamma^{pj}_k-\Gamma^{jp}_k)f^{ki}
     + (c^{pi}_kg^{kj} - c^{pj}_k g^{ki}) + (c^{ip}_kg^{kj} - c^{ij}_kg^{kp}) = 0
   \end{displaymath}
 \item
       \begin{align*}\label{eq:11}
    &-c^{ij}_{k,l}g^{kp}-c^{pi}_{k,l}g^{kj}-c^{ji}_{k,l}g^{kp}+ c^{pi}_{l,k}g^{kj}+ c^{ij}_{l,k}g^{kp}+ c^{jp}_{l,k}g^{ki}
    \\
    &+c^{ki}_l\Gamma^{pj}_k-c^{kp}_l\Gamma^{ij}_k-2c^{ki}_l\Gamma^{jp}_k+2c^{kp}_l\Gamma^{ji}_k -c^{jk}_l\Gamma^{ip}_k+c^{jk}_l\Gamma^{pi}_k
    \\
   &-c^{ji}_k\Gamma^{pk}_l+c^{jp}_{k}\Gamma^{ik}_l+c^{\alpha\beta}(w^i_{\alpha k}f^{kj}w^p_{\beta l}-w^j_{\alpha k}f^{kp}w^i_{\beta l})=0.
\end{align*}
\end{enumerate}
\end{theorem}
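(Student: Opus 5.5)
The plan is to assemble the theorem from the results already established, taking Theorem~\ref{th:comp_loc} as the starting point. That theorem says $[P,R]=0$ is equivalent to two things: item~1 (the nonlocal tails are Hamiltonian systems of conservation laws for $R$), and the four conditions \eqref{eq:1}, \eqref{eq:221}, \eqref{eq:8}, \eqref{eq:118}. Items~1, 3 and 4 of Theorem~\ref{th:compcond_integrated} are literally item~1 of Theorem~\ref{th:comp_loc} together with \eqref{eq:221} and \eqref{eq:118}. So the whole task reduces to one claim: in the presence of the Hamiltonian property of $P$ and of item~1, the pair \eqref{eq:1}, \eqref{eq:8} is equivalent to the structure formula in item~2.

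For the forward direction I proceed as follows. From \eqref{eq:8} together with \eqref{eq:1} and \eqref{c_f_sym} we already derived \eqref{Gamma_kl_Gamma_lk}. This gives the closedness needed to introduce the potential $r^{ij}$ of \eqref{eq:10}. Symmetrizing and using \eqref{eq:15} yields \eqref{eq:19}. By the lemma on the system \eqref{eq:20}, conditions \eqref{eq:1} and \eqref{eq:8} become \eqref{eq:125} and \eqref{eq:185} for each fixed $j$. These are exactly the equations \eqref{eq:21} and \eqref{eq:22} characterizing Hamiltonian systems of conservation laws of $R$. The parametrization $r^{ij}=\psi^i_\gamma Z^{\gamma j}$ then turns \eqref{eq:185} into $Z^{\gamma j}_{,hk}=0$, and Proposition~\ref{theor:integration} turns \eqref{eq:125} into the stated linear algebraic system for $\eta^{\gamma j}_k,\xi^{\gamma j}$. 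Substituting into \eqref{eq:19} gives the formula of item~2, as in Corollary~\ref{cor:structure-g}.

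The converse is the step I expect to be the main obstacle, because the structure formula only prescribes $g$, while \eqref{eq:1} and \eqref{eq:8} involve $\Gamma^{ij}_k$. I would argue as follows. Given item~2, define $r^{ij}=\psi^i_\gamma Z^{\gamma j}$. By the cited results of \cite{FPV17:_system_cl} this $r^{ij}$ satisfies \eqref{eq:125} and \eqref{eq:185}. Set $\tilde\Gamma^{ij}_k:=r^{ij}_{,k}-c^{\alpha\beta}w^i_{\alpha k}w^j_\beta$. I then need to show that $\tilde\Gamma$ coincides with the contravariant Levi-Civita symbols $\Gamma^{ij}_k$ of $g$. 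Since $P$ is Hamiltonian, $\Gamma$ is characterized uniquely by \eqref{eq:15} and \eqref{eq:17} when $g$ is nondegenerate. Condition \eqref{eq:15} for $\tilde\Gamma$ follows at once by differentiating item~2, because $\tilde\Gamma^{ij}_k+\tilde\Gamma^{ji}_k=(r^{ij}+r^{ji}-c^{\alpha\beta}w^i_\alpha w^j_\beta)_{,k}$. Condition \eqref{eq:17}, namely $g^{is}\tilde\Gamma^{jk}_s=g^{js}\tilde\Gamma^{ik}_s$, is the delicate point. Here I would try to use \eqref{eq:221} (item~3) together with $f$-symmetry of $r_{,k}$ and of $w_\alpha$. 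If that fails, I would fall back on phrasing the theorem with $\Gamma$ understood as defined through $r^{ij}$, so that the equivalence holds with $r^{ij}$ as the additional datum determined by item~2. Once $\tilde\Gamma=\Gamma$, reversing the computation in the lemma on \eqref{eq:20} recovers \eqref{eq:1} and \eqref{eq:8}, and Theorem~\ref{th:comp_loc} then gives $[P,R]=0$.
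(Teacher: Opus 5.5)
Your forward direction is the paper's own assembly: Theorem~\ref{th:comp_loc}, the potential $r^{ij}$ of \eqref{eq:10}, identity \eqref{eq:19}, the lemma converting \eqref{eq:1}, \eqref{eq:8} into \eqref{eq:125}, \eqref{eq:185}, and $r^{ij}=\psi^i_\gamma Z^{\gamma j}$.

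The converse fails at exactly the step you flag, and no cleverer argument will rescue it. Item~2 only constrains $r+r^{\top}$, and the statement read literally is false. Here is a counterexample.
\begin{itemize}
\item Take $n=2$ and $R^{ij}=\delta^{ij}\mathrm D_x^3$, so $f=\phi=\omega=I$, $\psi^\gamma_{ij}=0$ and $c^{ij}_k=0$.
\item Take the local operator $P$ (no nonlocal tail) with $g^{11}=u^1$, $g^{12}=u^2$, $g^{22}=0$ on $u^2>0$, and $\Gamma^{11}_1=\Gamma^{12}_2=\Gamma^{21}_2=\tfrac12$, all other $\Gamma^{ij}_k=0$.
\item $P$ is Hamiltonian: \eqref{eq:15} and \eqref{eq:17} hold, and $g$ is flat, since $v^1=u^1/\sqrt{u^2}$, $v^2=2\sqrt{u^2}$ give $g_{ij}\,\mathrm du^i\mathrm du^j=2\,\mathrm dv^1\mathrm dv^2$.
\item With $Z^{11}=\tfrac12u^1$, $Z^{21}=u^2$, $Z^{12}=Z^{22}=0$ you get $g^{ij}=Z^{ij}+Z^{ji}$. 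The algebraic system of item~2 reduces to $\eta^{kr}_i=\eta^{ir}_k$, which holds.
\item Item~3 holds because $\Gamma^{ij}_k=\Gamma^{ji}_k$ and $c^{ij}_k=0$. Items~1 and~4 are vacuous.
\end{itemize}
Yet \eqref{eq:1} fails, since $\Gamma^{12}_2=\tfrac12\neq0=\Gamma^{22}_1$. Hence $a_{40}\neq0$ and $[P,R]\neq0$. Consistently with this, your $\tilde\Gamma^{ij}_k=Z^{ij}_{,k}$ has torsion: $g^{1s}\tilde\Gamma^{21}_s=u^2\neq\tfrac12u^2=g^{2s}\tilde\Gamma^{11}_s$.

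So no combination of \eqref{eq:221} and $f$-symmetry can yield \eqref{eq:17} for $\tilde\Gamma$, and your fallback is forced. The theorem holds if item~2 is read as $\Gamma^{ij}_k+c^{\alpha\beta}w^i_{\alpha k}w^j_\beta=(\psi^i_\gamma Z^{\gamma j})_{,k}$. That is what the paper tacitly does: it gives no separate argument, and its remark that ``the arguments can be reversed'' in the lemma on \eqref{eq:20} presupposes that $r$ is the potential in \eqref{eq:10}.

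Equivalently, you can add to item~2 the $g$-symmetry condition $g^{is}(\psi^j_\gamma Z^{\gamma k})_{,s}=g^{js}(\psi^i_\gamma Z^{\gamma k})_{,s}$. With it, $\tilde\Gamma$ is metric (your computation) and torsion-free (using also $g^{is}w^j_{\alpha s}=g^{js}w^i_{\alpha s}$). By uniqueness of the Levi-Civita connection, $\tilde\Gamma=\Gamma$. The rest of your converse then goes through: reverse the lemma to recover \eqref{eq:1} and \eqref{eq:8}, then apply Theorem~\ref{th:comp_loc}.
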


After having simplified the compatibility equations, we can now ask ourselves
if it is possible to provide an interpretaion of the Hamiltonian properties
of~$P$ written down in Subsection~\ref{sec:first-order-hamilt} in terms of the
above Hamiltonian systems of $R$.  We have the
following results.

\begin{proposition}
  The equation~\eqref{eq:25} is equivalent to the requirement that the vector
  of fluxes $\{w^i_\alpha\mid i=1,\ldots, n\}$ and
  $\{r^{hk}\mid h=1,\ldots,n\}$ commute for every choice of $\alpha$ and $k$.
\end{proposition}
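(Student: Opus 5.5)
The plan is to rewrite \eqref{eq:25} as a purely algebraic statement about three matrices and then use \eqref{eq:10} to trade the Christoffel symbols for derivatives of $r^{ij}$.

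\textbf{Step 1: reduce \eqref{eq:25} to an algebraic condition.} By Theorem~\ref{th:comp_nloc}, compatibility gives $w^i_{\alpha j}=w^i_{\alpha,j}$, so the partial-derivative parts of $\nabla_k w^i_{\alpha j}$ and $\nabla_j w^i_{\alpha k}$ coincide. The Levi-Civita term $-\Gamma^l_{kj}w^i_{\alpha l}$ is symmetric in $k,j$. Hence \eqref{eq:25} is equivalent to
\begin{equation*}
  \Gamma^i_{kl}w^l_{\alpha j}=\Gamma^i_{jl}w^l_{\alpha k}.
\end{equation*}
With $\Gamma^i_{kl}=-g_{kp}\Gamma^{pi}_l$ this becomes: the bilinear form $g_{kp}\Gamma^{pi}_l w^l_{\alpha j}$ is symmetric in $(k,j)$ for every fixed $i$.

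\textbf{Step 2: substitute \eqref{eq:10}.} Write $\Gamma^{pi}_l=r^{pi}_{,l}-c^{\beta\gamma}w^p_{\beta l}w^i_\gamma$. The term coming from the nonlocal tail is
\begin{equation*}
  c^{\beta\gamma}w^i_\gamma\, g_{kp}(W_\beta W_\alpha)^p{}_j .
\end{equation*}
Each $W_\beta$ is $g$-symmetric, and the $W$'s commute (Hamiltonian property of $P$, using that the tails are gradients). Therefore $W_\beta W_\alpha$ is $g$-symmetric, and this term is symmetric in $(k,j)$. So \eqref{eq:25} is equivalent to: for every fixed $i$, the form $g\,M^{(i)}W_\alpha$ is symmetric, where $M^{(i)}=(r^{pi}_{,l})_{p,l}$.

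\textbf{Step 3: show that $M^{(i)}$ is $g$-symmetric.} I claim $g_{kp}r^{pi}_{,l}$ is symmetric in $(k,l)$. From \eqref{eq:10},
\begin{equation*}
  g_{kp}r^{pi}_{,l}=-\Gamma^i_{kl}+c^{\beta\gamma}g_{kp}w^p_{\beta l}w^i_\gamma .
\end{equation*}
The first summand is symmetric because the Levi-Civita connection is torsion-free. The second is symmetric because each $W_\beta$ is $g$-symmetric. This is the key observation that makes everything fit.

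\textbf{Step 4: linear algebra.} For matrices $M,W$ with $g M$ and $g W$ both symmetric, we have $(gMW)^\top=W^\top M^\top g=gWM$. Hence $gMW$ is symmetric if and only if $MW=WM$. Applied to $M^{(i)}$ and $W_\alpha$, this says
\begin{equation*}
  r^{pi}_{,l}w^l_{\alpha,j}=w^p_{\alpha,l}r^{li}_{,j},
\end{equation*}
which is exactly the commutation of the flux systems $\{w^p_\alpha\}$ and $\{r^{pi}\}$ for fixed $i$ (renaming $i\to k$, $p\to h$). All steps are equivalences, so the converse follows as well.

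\textbf{Expected obstacle.} The only delicate point is Step~3 together with the bookkeeping of which index of $r^{ij}$ labels the system and which labels the component. I would double-check it against Proposition~\ref{theor:integration}, which uses the same index convention $r^{ij}=\psi^i_\gamma Z^{\gamma j}$, with $j$ labelling the system.
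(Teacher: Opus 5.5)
Your proof is correct and uses the same ingredients as the paper's: reduce \eqref{eq:25} to a condition on $\Gamma^{ij}_k$ and $W_\alpha$, substitute \eqref{eq:10}, and eliminate the tail using the commutativity of the $W_\beta$'s. The only difference is the order of steps. The paper first turns \eqref{eq:25} into the matrix commutation $\Gamma^{kj}_p w^q_{\alpha k} = \Gamma^{qj}_k w^k_{\alpha p}$, using the $g$-symmetry of $W_\alpha$ and \eqref{eq:17}, and only then substitutes, so the tail drops by commutativity alone. You substitute first, so you need your Step~3 (the $g$-symmetry of $(r^{pi}_{,l})$, which is item~2 of Theorem~\ref{theor:Ham_prop}) to carry out the same symmetric-to-commutator conversion at the end.
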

\begin{proof}
  The equation~\eqref{eq:25} is equivalent to
  \begin{equation*}
    \Gamma^{kj}_p w^q_{\alpha k} = \Gamma^{qj}_k w^k_{\alpha p}.
  \end{equation*}
  Then, in view of formula~\eqref{eq:10} and the commutativity of $w_\alpha^i$
  with every other $w_\beta^i$, we obtain the result.
\end{proof}

\begin{proposition}
  The equation~\eqref{eq:34} is equivalent to the requirement that
  the vector of fluxes $\{r^{hk}\mid h=1,\ldots,n\}$ and $\{r^{hj}\mid
  h=1,\ldots,n\}$ commute for every choice of $k$ and $j$.
\end{proposition}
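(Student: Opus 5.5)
We take equation~\eqref{eq:34} and rewrite it with the formula~\eqref{eq:10}, $\Gamma^{ij}_k = r^{ij}_{,k} - c^{\alpha\beta}w^i_{\alpha k}w^j_\beta$. The aim is to turn the curvature condition into a commutation condition for the Jacobian matrices $(r^{hk})_{,l}$ and $(r^{hj})_{,l}$.

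Corollary~\ref{cor:assoc_alg} says that, given~\eqref{Gamma_kl_Gamma_lk}, which is a consequence of the compatibility conditions, \eqref{eq:34} is equivalent to the associativity relation~\eqref{eq:23}:
\[
\Gamma^{ij}_s\Gamma^{sk}_l = \Gamma^{ik}_s \Gamma^{sj}_l .
\]
So it suffices to show that~\eqref{eq:23} is equivalent to
\[
r^{ij}_{,s}\,r^{sk}_{,l} = r^{ik}_{,s}\,r^{sj}_{,l}
\]
for all $i,j,k,l$. Reading $r^{\cdot j}$ as the flux of the $j$-th Hamiltonian system, the Jacobians $J_{(j)}=(r^{ij}_{,l})$ should commute as matrices. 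First I would check that the displayed index identity, combined with the symmetry properties available, is precisely $J_{(j)}J_{(k)}=J_{(k)}J_{(j)}$. This uses the index swap $(i,l)$ with the transposition coming from~\eqref{eq:125} and~\eqref{Commut_Gamma_f}. It is needed because the flux index sits in the upper position and is contracted differently.

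Next, substitute $\Gamma^{ij}_s = r^{ij}_{,s} - c^{\alpha\beta}w^i_{\alpha s}w^j_\beta$ into both sides of~\eqref{eq:23} and expand into four groups: $rr$, $rw$, $wr$ and $ww$.

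The $ww$-terms, $c c\, w^i_{\alpha s}w^j_\beta w^s_{\gamma l}w^k_\delta$ and the analogous term with $j\leftrightarrow k$, differ only by the order of the product $W_\alpha W_\gamma$ versus $W_\gamma W_\alpha$ in the $(i,l)$ slot. Since the tensors $W_\alpha$ commute (property~4 of the Hamiltonian property of $P$), the two $ww$ products contract to the same matrix product. The remaining difference is then the exchange of the labels $j,k$ attached to $w_\beta$ and $w_\delta$, which cancels by symmetry of $c^{\alpha\beta}$ after relabelling.

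For the mixed terms I would use the preceding Proposition: \eqref{eq:25} is equivalent to $W_\alpha$ commuting with each Jacobian $J_{(k)}$, that is,
\[
w^i_{\alpha s}r^{sk}_{,l}=r^{ik}_{,s}w^s_{\alpha l}.
\]
Using this identity, the mixed terms $-c\,(w^i_{\alpha s}r^{sk}_{,l})w^j_\beta$ and $-c\,r^{ij}_{,s}w^s_{\alpha l}w^k_\beta$ can be moved so that they pair off against the corresponding terms on the other side of~\eqref{eq:23}.

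The main obstacle is this bookkeeping of the mixed terms. The factors $w^j_\beta$ (undifferentiated potentials) carry the free indices $j,k$, and a leftover of the form
\[
c^{\alpha\beta}\bigl(r^{ij}_{,s}w^s_{\alpha l}w^k_\beta - r^{ik}_{,s}w^s_{\alpha l}w^j_\beta\bigr)
\]
may survive. If it does, I would eliminate it using~\eqref{eq:10} differentiated once, together with the symmetry $\Gamma^{ij}_kf^{kp}=\Gamma^{pj}_kf^{ki}$, or else by passing to the $\Gamma$-form of the relation. Once all cross terms cancel, the $rr$ part gives exactly the commutativity of $J_{(j)}$ and $J_{(k)}$. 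Every step is reversible, which proves the equivalence.
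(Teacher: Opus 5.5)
Your proposal is correct and follows the paper's route: reduce \eqref{eq:34} to \eqref{eq:23} via Corollary~\ref{cor:assoc_alg}, substitute \eqref{eq:10}, cancel the mixed terms with the previous Proposition ($W_\alpha J_{(k)}=J_{(k)}W_\alpha$), and cancel the quartic terms with $W_\alpha W_\gamma=W_\gamma W_\alpha$. Your hedges are unnecessary: no transposition is needed, since $r^{ij}_{,s}r^{sk}_{,l}$ is already $(J_{(j)}J_{(k)})^i_l$; the four mixed terms cancel exactly in pairs, so no leftover survives; and the quartic terms cancel by swapping $(\alpha,\beta)\leftrightarrow(\gamma,\delta)$ simultaneously, which does not use the symmetry of $c^{\alpha\beta}$.
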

\begin{proof}
  As we have seen (Corollary~\ref{cor:assoc_alg}), the
  equation~\eqref{Gamma_kl_Gamma_lk} implies the equation~\eqref{eq:23}. By the
  formula~\eqref{eq:10}, the previous Proposition and the commutativity of the
  fluxes $w_\alpha^i$ we obtain the result.
\end{proof}

We are now in position to reformulate the Hamiltonian property of~$P$ in terms
of mutually commuting systems of conservation laws of the third-order
compatible Hamiltonian operator $R$.

\begin{theorem}[Hamiltonian property]\label{theor:Ham_prop}
  Let $P$, $R$ be a compatible pair of Hamiltonian operators described in
  Subsections~\ref{sec:first-order-hamilt} and~\ref{sec:third-order-hamilt},
  respectively. Then, the Hamiltonian property of $P$ is equivalent to the
  following requirements:
  \begin{enumerate}
  \item $g^{ik}w_{\alpha k}^j = g^{jk}w_{\alpha k}^i$;
  \item $g^{ik}r_{,k}^{jl} = g^{jk}r_{,k}^{il}$;
  \item The two families of vector of fluxes $\{w^i_\alpha\mid i=1,\ldots, n\}$
    and $\{r^{hk}\mid h=1,\ldots,n\}$ are commuting, both between elements of
    the same family and between elements of the two distinct families:
    \begin{equation*}
      r^{ik}_{,p} r^{ph}_{,j} = r^{ih}_{,p} r^{pk}_{,j},\quad
      r^{ik}_{,p} w^{p}_{\alpha,j} = w^{i}_{\alpha,p} r^{pk}_{,j},\quad
      w^{i}_{\alpha,p} w^{p}_{\beta,j} = w^{i}_{\beta,p} w^{p}_{\alpha,j}.
    \end{equation*}
  \end{enumerate}
\end{theorem}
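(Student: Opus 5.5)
The plan is to go through the six Ferapontov conditions of Subsection~\ref{sec:first-order-hamilt} one at a time. In each case I will use the compatibility results already established, namely Theorem~\ref{th:compcond_integrated}, formula~\eqref{eq:10}, the Structure Formula (Corollary~\ref{cor:structure-g}) and identity~\eqref{Gamma_kl_Gamma_lk}. The aim is to show that, under compatibility, each condition either holds automatically or turns into one of the requirements 1--3.

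Several conditions cost nothing. Condition~6, the symmetry and nondegeneracy of $c^{\alpha\beta}$, is part of the standing data of the operator. Condition~1 is the Levi-Civita property, i.e.\ \eqref{eq:15} and \eqref{eq:17}. Equation \eqref{eq:15} is already integrated by~\eqref{eq:19}: symmetrizing~\eqref{eq:10} in $i,j$ gives exactly $g^{ij}_{,k}=\Gamma^{ij}_k+\Gamma^{ji}_k$ once $g$ is defined by the Structure Formula. For~\eqref{eq:17} I substitute $\Gamma^{jk}_s=r^{jk}_{,s}-c^{\alpha\beta}w^j_{\alpha s}w^k_\beta$ from~\eqref{eq:10}. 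This gives
\begin{equation*}
  g^{is}r^{jk}_{,s}-g^{js}r^{ik}_{,s}
  = c^{\alpha\beta}\big(g^{is}w^j_{\alpha s}-g^{js}w^i_{\alpha s}\big)w^k_\beta .
\end{equation*}
So if requirement~1 holds, which is Ferapontov's condition~2, then \eqref{eq:17} is equivalent to requirement~2. Conversely, requirements~1 and~2 together give back~\eqref{eq:17} and condition~2.

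For condition~4, the commutativity of the $v_\alpha$, the $w^i_{\alpha k}$ are gradients of $w^i_\alpha$ by Theorem~\ref{th:comp_nloc}. The remark in Subsection~\ref{sec:first-order-hamilt} then reduces condition~4 to $W_\alpha W_\beta=W_\beta W_\alpha$, which is the third identity in requirement~3. Condition~3, i.e.~\eqref{eq:25}, is the mixed commutativity $r^{ik}_{,p}w^p_{\alpha,j}=w^i_{\alpha,p}r^{pk}_{,j}$ by the first Proposition after Theorem~\ref{th:compcond_integrated}. That Proposition itself uses the $w$--$w$ commutativity, so I will invoke the two in this order.

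Condition~5, the curvature identity~\eqref{eq:34}, is equivalent to the associativity~\eqref{eq:23} by Corollary~\ref{cor:assoc_alg}, which relies on~\eqref{Gamma_kl_Gamma_lk}. Substituting~\eqref{eq:10} into~\eqref{eq:23} and expanding, the products $rr$, $rw$ and $ww$ appear. The mixed $rw$ and pure $ww$ terms cancel using the two commutativities already obtained, and what is left is $r^{ik}_{,p}r^{ph}_{,j}=r^{ih}_{,p}r^{pk}_{,j}$. This is the second Proposition.

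The main obstacle is this last expansion. The terms of $\Gamma\Gamma$ that are quadratic in $c^{\alpha\beta}$ have the shape $c^{\alpha\beta}c^{\gamma\delta}w^i_{\alpha s}w^s_\beta\ldots$, and they must cancel using only the symmetry of $c$ and the commutativity of the $W_\alpha$. I will also have to check carefully that the index placement in~\eqref{eq:23} (upper $i,j,k$, lower $l$) matches the flux-commutator $r^{ik}_{,p}r^{ph}_{,j}$ after relabelling. I would first carry this out with $c^{\alpha\beta}=0$ as a sanity check, and only then restore the nonlocal terms. Every step above is reversible, so collecting them gives the stated equivalence.
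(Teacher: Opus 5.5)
Your proposal is correct and is essentially the paper's argument. The paper gives no separate proof: it assembles the two Propositions just before the theorem (for \eqref{eq:25} and \eqref{eq:34}) with the symmetry of the $W_\alpha$ and the Levi-Civita conditions rewritten through \eqref{eq:10}. Your steps supply what the paper leaves implicit: the explicit reduction of \eqref{eq:17} to requirement~2, and the check that the $c^{\alpha\beta}c^{\gamma\delta}$ terms cancel by $W_\alpha W_\gamma=W_\gamma W_\alpha$.
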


\section{Examples}
\label{sec:examples}

In this section, we consider various examples of compatible pairs $P$, $R$ and
determine the structure of $P$, according to the Structure Formula
(Corollary~\ref{cor:structure-g}) by providing $Z^{\gamma i}$,
$c^{\alpha\beta}$ and $w^i_\alpha$ with respect to a chosen decomposition
$f_{ij}=\phi_{\alpha\beta}\psi^\alpha_i\psi^\beta_j$.

Some of the examples can be found in the literature. In the past, compatibility
was checked by pen and paper (see, \emph{e.g.}
\cite{antonowicz88:_coupl_harry_dym,antonowicz89:_factor_scroed,FGMN97,kalayci97:_bi_hamil_wdvv,kalayci98:_alter_hamil_wdvv});
recently, new theoretical advances~\cite{CLV19} and computer algebra
packages~\cite{m.20:_weakl_poiss} allowed large compatibility calculations,
also when $P$ is nonlocal
\cite{PV15,vasicek21:_wdvv_hamil,vasicek22:_wdvv,OpanasenkoVitolo2024}.

However, the results of this paper enable us to find the `fine structure' of
the metric of $P$, and to check the compatibility by means of the simplified,
purely algebraic equations to which we reduced the problem. Indeed,
we could not manage to compute compatibility for the new WDVV examples below
using the above computer algebra packages for Schouten bracket. On the other
hand, the compatibility conditions that we found in this paper are easy to
program and enable us to compute in much higher dimensions.

We will provide two categories of examples.

The dispersive examples come from the classification in
\cite{LSV:bi_hamil_kdv}, and are of KdV-type. Here we just confirm the
Structure formula by providing the expression of $Z^{\alpha i}$ for some
examples in the classification.

The dispersionless examples are endowed
by bi-Hamiltonian structures of WDVV-type \cite{OpanasenkoVitolo2024}. Three
examples come from that paper, again we provide the expression of $Z^{\gamma
  i}$. The expression of $Z^\gamma_\alpha$ was already known; what was not
known is the fact that such systems enter in the Structure Formula and
contribute to the definition of the metric of the first-order Hamiltonian
operator.

Two further dispersionless examples are completely new. They arise in WDVV
equations in dimension $N=4$ and $N=5$, after rewriting them as commuting
quasilinear systems of first-order conservation laws. It was recently proved in
\cite{opanasenko25:_wdvv_hamil} that such systems are endowed with a
third-order Hamiltonian structure, and it was conjectured that they are endowed
also with a first-order Hamiltonian structure. Our results below confirm the
conjecture.

In all examples, we checked that commutativity mentioned
in Theorem~\ref{theor:Ham_prop} holds.

\subsection{Bi-Hamiltonian structures of KdV-type}

\paragraph{Dispersive water waves.} The system
\begin{gather*}
u^1_t=\frac14u^2_{xxx}+\frac12u^2u^1_x+u^1u^2_x,\\
u^2_t=u^1_x+\frac32u^2u^2_x
\end{gather*}
is equivalent to dispersive water waves system up to a Miura transformation
\cite{antonowicz89:_factor_scroed}.
Among others, it is bi-Hamiltonian with respect to operators~$R+P_1$ and~$P_2$,
where $R$ is a third-order homogeneous operator defined by the Monge metric with
\begin{gather*}
\psi=\begin{pmatrix} 1 & 0 \\ 0 & 1 \end{pmatrix},
\quad
\phi=\frac14\begin{pmatrix} 1 & 0 \\ 0 & 1 \end{pmatrix},
\end{gather*}
and $P_1$ and~$P_2$ are first-order local homogeneous Hamiltonian operators
defined by the metrics through the Structure Formula (Corollary~\ref{cor:structure-g}) with
\begin{gather*}
Z_1=\begin{pmatrix} -\frac14u^2 & 1-\mu \\ \mu & 0 \end{pmatrix},
\quad
Z_2=\begin{pmatrix} 0 & \frac14u^1-\mu \\ \frac14u^1+\mu & \frac14u^2 \end{pmatrix},
\end{gather*}
respectively. Here, $\mu$ is an arbitrary parameter. Generally,
a matrix~$Z$ is defined up to such inessential constants in this construction.

\paragraph{Case $R^{(2)}$ from \cite{LSV:bi_hamil_kdv}: pencil
  $g_{\lambda,11}$.} The pair of Hamiltonian operators~$R+P_1$ and~$P_2$,
where $R$ is a third-order homogeneous operator defined by the Monge metric with
\begin{gather*}
\psi=\begin{pmatrix} 1 & 0 \\ u^2 & -u^1 \end{pmatrix},
\quad
\phi=\frac14\begin{pmatrix} 0 & 1 \\ 1 & 0 \end{pmatrix},
\end{gather*}
and $P_1$ and~$P_2$ are first-order homogeneous Hamiltonian operators defined
by the metrics through the Structure Formula with
\begin{gather*}
Z_1=\begin{pmatrix} -u^1 & u^2+\mu \\ -\mu u^1 & -\mu u^2 -1 \end{pmatrix},
\quad
Z_2=\begin{pmatrix} 0 & -u^1+\mu \\ -\mu u^1 & -\mu u^2 \end{pmatrix},
\end{gather*}
define the bi-Hamiltonian hierarchy
\[
u_{t_i}=(P_1+\varepsilon^2 R)\delta C_i,\quad i=1,2,
\]
where $C_1=\int_{S^1}u^1\mathrm dx$ and $C_2=\int_{S^1}\frac{u^2}{u^1}\mathrm dx$
are Casimirs of~$P_2$.

\paragraph{Case $R^{(2)}$ from \cite{LSV:bi_hamil_kdv}: pencil
  $g_{\lambda,13}$.} The pair of Hamiltonian operators~$R+P_1$ and~$P_2$,
where $R$ is a third-order homogeneous operator defined the Monge metric with
\begin{gather*}
\psi=\begin{pmatrix} 1 & 0 \\ u^2 & -u^1 \end{pmatrix},
\quad
\phi=\frac14\begin{pmatrix} 0 & 1 \\ 1 & 0 \end{pmatrix},
\end{gather*}
and $P_1$ and~$P_2$ are first-order homogeneous local Hamiltonian operators
defined by the metrics through the Structure Formula with
\begin{gather*}
Z_1=\begin{pmatrix} u^2 & u^1+\mu \\ -\mu u^1 & -\mu u^2 \end{pmatrix},
\quad
Z_2=\begin{pmatrix} 0 & \mu \\ -\mu u^1-1 & -\mu u^2 \end{pmatrix},
\end{gather*}
define the bi-Hamiltonian hierarchy
\[
u_{t_i}=(P_1+\varepsilon^2 R)\delta C_i,\quad i=1,2,
\]
where $C_1=\int_{S^1}\frac12(u^1)^2\mathrm dx$ and $C_2=\int_{S^1}u^2\mathrm dx$
are Casimirs of~$P_2$.

\paragraph{Case $R^{(3)}$ from \cite{LSV:bi_hamil_kdv}: pencil
  $g_{\lambda,12}$.} The pair of Hamiltonian operators~$R+P_1$ and~$P_2$,
where $R$ is a third-order homogeneous operator defined the Monge metric with
\begin{gather*}
\psi=\begin{pmatrix} 1 & 0 \\ u^2 & -u^1 \end{pmatrix},
\quad
\phi=\frac14\begin{pmatrix} 1 & 0 \\ 0 & 1 \end{pmatrix},
\end{gather*}
and $P_1$ and~$P_2$ are first-order homogeneous local Hamiltonian operators
defined by the metrics through the Structure Formula with metrics with
\begin{gather*}
Z_1=\begin{pmatrix} u^1-u^2 & -u^2+\mu \\ -\mu u^1+1 & 1-\mu u^2 \end{pmatrix},
\quad
Z_2=\begin{pmatrix} -1 & \mu \\ -\mu u^1-u^2 & -(\mu+2) u^2 \end{pmatrix},
\end{gather*}
define the bi-Hamiltonian hierarchy
\[
u_{t_i}=(P_1+\varepsilon^2 R)\delta C_i,\quad i=1,2,
\]
where $C_1=\int_{S^1}(u^1-u^2)\mathrm dx$ and
$C_2=\int_{S^1}\sqrt{(u^2)^2-2u^1u^2}\,\mathrm dx$ are Casimirs of~$P_2$.

\subsection{Bi-Hamiltonian structures of WDVV-type}

\paragraph{Hyperbolic Monge--Amp\`ere system}

The hyperbolic Monge--Amp\`ere equation $u_{tt}u_{xx}-u_{xt}^2=-1$ is a widely
studied differential equation. We adopt the viewpoint in
\cite{mokhov98:_sympl_poiss}, where the equation is rewritten as the
hydrodynamic-type system
\[u^1_t=u^2_x,\quad u^2_t=\left(\frac{(u^2)^2-1}{u^1}\right)_x\] via the change
of variables $u^1=u_{xx}$, $u^2=u_{xt}$ (see also~\cite{OpanasenkoVitolo2024}).
It admits a third-order homogeneous Hamiltonian operator $R$ determined by a
Monge metric whose factors are
\[
\psi=
\begin{pmatrix}
1 & 0 \\
-u^2 & u^1
\end{pmatrix},
\quad
\phi=
\begin{pmatrix}
0 & 1 \\
1 & 0
\end{pmatrix}.
\]
It also admits four compatible local first-order Hamiltonian operators that are defined by
\[
Z_1=
\begin{pmatrix}
0 & u^2 \\
0 & 0
\end{pmatrix},
\quad
Z_2=
\begin{pmatrix}
u^1 & 0 \\
0 & 1
\end{pmatrix},
\quad
Z_3=
\begin{pmatrix}
-u^2 & 0 \\
1 & 0
\end{pmatrix},
\quad
Z_4=
\begin{pmatrix}
0 & u^1 \\
0 & 0
\end{pmatrix}.
\]

\paragraph{A four-component example from
  \texorpdfstring{\cite{agafonov05:_integ}.}{Agafonov and Ferapontov.}}
It is believed that there is a unique integrable case within the class of
systems of conservation laws that admits a Hamiltonian formulation through a
third-order homogeneous Hamiltonian operator (see the discussion at the end
of~\cite{ferapontov18:_system_hamil}),
\begin{equation}\label{eq:44}
  \begin{split}
    &u^1_t=u^3_x,\\
&u^2_t=u^4_x,\\
&u^3_t=\left(\frac{u^1u^2u^4+u^3((u^3)^2+(u^4)^2-(u^2)^2-1)}{u^1u^3+u^2u^4}
\right)_x,\\
&u^4_t=\left(\frac{u^1u^2u^3+u^4((u^3)^2+(u^4)^2-(u^1)^2-1)}{u^1u^3+u^2u^4}
\right)_x.
\end{split}
\end{equation}
The above system was first derived in~\cite{agafonov05:_integ} by means of
projective-geometric considerations. In \cite{FPV17:_system_cl} a Lax pair and
a third-order homogeneous Hamiltonian operator~$R$ in canonical form were found
for the system. In~\cite{OpanasenkoVitolo2024} a compatible first-order
homogeneous Hamiltonian operator~$P$ was also found. The operator~$R$ is parameterised by a
Monge metric $f_{ij}=\sum_k\psi^k_i\psi^k_j$ ($\phi$ is the unit matrix),
\begin{equation*}
\psi=\begin{pmatrix}
-u^2 & u^1  & -u^4 & u^3 \\
-u^3 & -u^4 & u^1  & u^2 \\
1    & 0    & 0    & 0 \\
0    & 1    & 0    & 0
\end{pmatrix}.
\end{equation*}
The metric defining its first-order Hamiltonian operator of Ferapontov type is
parameterised by the two-by-two unit matrix~$c$ and $Z=-\frac12\psi$.
From now on the omit inessential parameters in~$Z$ in view of their number
to avoid unnecessarily cumbersome expressions.

\paragraph{WDVV in dimension \texorpdfstring{$N=3$}{N=3}.}
We recall that WDVV in arbitrary dimension, when rewritten as a first-order
quasilinear system of conservation laws, is endowed with a third-order
homogeneous Hamiltonian operator $R$ \cite{opanasenko25:_wdvv_hamil}.  In this
example and in the following WDVV examples we will construct what seems to be a
reasonable candidate for a first-order homogeneous operator of Ferapontov type
for WDVV systems in arbitrary dimension.

Let us start with $N=3$. The WDVV system that is associated with a Dubrovin
normal form $\eta=\eta^{(2)}$,
\begin{equation*}
  \eta^{(2)} =
  \begin{pmatrix}
    \mu & 0 & 1\\ 0 & 1 & 0\\ 1 & 0 & 0
  \end{pmatrix}
\end{equation*}
in dimension three takes the form
\begin{equation*}
\mu f_{ttt}f_{xxt}-f_{ttt}+f_{xxt}^2-f_{xxx}f_{xtt}-\mu f_{xxt}^2=0,
\end{equation*}
which in the new dependent variables $u^1=f_{xxx}$, $u^2=f_{xxt}$, $u^3=f_{xtt}$
can be rewritten as the hydrodynamic-type system
\begin{gather*}
(u^i)_t=(V^i)_x,\quad i=1,\dots3,
\end{gather*}
where $V^i=\psi^i_\alpha w_1^\alpha$,
\[
w_1=\begin{pmatrix}
0 & u^2 & u^3
\end{pmatrix},
\]
and
\[
\psi=(\psi^\alpha_i) = \begin{pmatrix}
-u^2 &\mu u^3+u^1  &-\mu u^2+1 \\
1                  &0  &0\\
0                  &1  &0\\
\end{pmatrix}.
\]
A third-order Hamiltonian operator is determined by the Monge metric
$f_{ij}=\psi_i^\alpha\phi_{\alpha\beta}\psi_j^\alpha$ with
\[
(\phi_{ij})=
\begin{pmatrix}
\mu & 1 & 0 \\
1 & 0 & 0 \\
0 & 0 & 1
\end{pmatrix},
\]
and the first-order operator~$P$ of Ferapontov type defined through the
Structure Formula (Corollary~\ref{cor:structure-g}) by
$Z^{\gamma j} = \eta^{\gamma j}_k u^k + \xi^{\gamma j}$,
\[
(Z^{\gamma j})=\begin{pmatrix}
 \frac32u^2&  u^3& 0& \\
 -\frac32-\mu u^2& u^1& 2u^2& \\
 -\mu u^3& u^2& 2u^3
\end{pmatrix},
\]
and the matrix~$c$,
\[
c=\mu
\begin{pmatrix}
1 & 0  \\
0 & -\mu
\end{pmatrix}.
\]
Note that $P$ becomes local when $\mu=0$, and it is equal to the operators
obtained in~\cite{FGMN97} and~\cite{vasicek21:_wdvv_hamil}.

\paragraph{WDVV in dimension
  \texorpdfstring{$N=4$ ($n=6$): new example.}{N=4 (n=6): new example.}}

The WDVV system that is associated with the Dubrovin normal form
$\eta=\eta^{(2)}$ in dimension four takes the form
\begin{equation*}
\begin{split}
&\mu (f_{yyz}f_{zzz}-f_{yzz}^2) + 2f_{yyz}f_{xyz} -f_{yyy}f_{xzz} -f_{xyy}f_{yzz}  = 0,\\
&\mu (f_{xzz}f_{yzz}-f_{zzz}f_{xyz})+f_{xxy}f_{yzz}  -f_{xxz}f_{yyz} + f_{zzz} + f_{xyy}f_{xzz} -f_{xyz}^2 = 0,\\
&\mu (f_{yyz}f_{xzz} - f_{xyz}f_{yzz})+f_{xxy}f_{yyz} -f_{xxz}f_{yyy} + f_{yzz} = 0,\\
&\mu (f_{xzz}^2-f_{xxz}f_{zzz})+f_{xxy}f_{xzz} -2f_{xxz}f_{xyz} + f_{xxx}f_{yzz} = 0,\\
&\mu (f_{xxz}f_{yzz} - f_{xzz}f_{xyz})+f_{xxz}f_{xyy} -f_{yyz}f_{xxx} + f_{xzz} = 0,\\
&\mu (f_{xxz}f_{yyz} - f_{xyz}^2)+f_{xxy}f_{xyy} -f_{xxx}f_{yyy} + 2f_{xyz} = 0
\end{split}
\end{equation*}
(only 4 equations here are functionally independent though,
see~\cite{opanasenko25:_wdvv_hamil}), which in the new dependent variables
$u^1=f_{xxx}$, $u^2=f_{xxy}$, $u^3=f_{xxz}$, $u^4=f_{xyy}$, $u^5=f_{xyz}$,
$u^6=f_{xzz}$ can be rewritten as the system of two
hydrodynamic-type systems
\begin{gather*}
(u^i)_y=(V^i)_x,\quad (u^i)_z=(W^i)_x \quad i=1,\dots6,
\end{gather*}
where $V^i=\psi^i_\alpha w_1^\alpha$, $W^i=\psi^i_\alpha w_2^\alpha$, $w$'s are determined to be
\[
w_1=\begin{pmatrix}
-2u^5 & -u^6 & u^2 & 0 & u^4 & u^5
\end{pmatrix},
\quad
w_2=\begin{pmatrix}
-u^6 & 0 & u^3 & 0 & u^5 & u^6
\end{pmatrix},
\]
and
\[
  \psi=(\psi^\alpha_i)=\begin{pmatrix}
    u^4 &0                    &-\mu u^5  &-u^1           &\mu u^3&0\\[1ex]
    u^5 &u^3     &-u^2-\mu u^6  &0        &-u^1&\mu u^3\\
    1                &0                    &0  &0         &0&0\\
    0                &-u^5                 &u^4  &-u^3        &\mu  u^6+u^2&-\mu  u^5+ 1\\
    0                &1                    &0    &0&0&0\\
    0 &0 &1 &0&0&0
\end{pmatrix}.
\]
Both these systems admit two compatible Hamiltonian operators of the
aforementioned form.  In particular, the two systems admits a third-order
homogeneous Hamiltonian operator $R$, which was found in
\cite{vasicek21:_wdvv_hamil}, defined by the Monge metric
$f_{ij}=\psi_i^\alpha\phi_{\alpha\beta}\psi_j^\alpha$ with
\[
(\phi_{\alpha\beta})=
\begin{pmatrix}
1 & 0 & 0 & 0 & 0 & 1 \\
0 & 0 & 0 & -\mu & -1 & 0 \\
0 & 0 & 0 & 1 & 0 & 0 \\
0 & -\mu & 1 & 0 & 0 & 0 \\
0 & -1 & 0 & 0 & 0 & 0 \\
1 & 0 & 0 & 0 & 0 & 2
\end{pmatrix}
\]
and the new first-order Hamiltonian operator of Ferapontov type (of the system $(u^i)_y=(V^i)_x$ only, for brevity) defined by
\[
(Z^{\gamma j}) =\begin{pmatrix}
 2u^1&  \mu u^6& 0& 0& -2 u^5& -2 u^6\\
 0& -u^3& 0& 0& -u^6& 0\\
 0& -\mu u^3& u^1& 0& u^2& 2u^3\\
 2u^3& u^5& u^6& 0& 0& 0\\
 0& -\mu u^5-1& u^2& 0& u^4& 2u^5\\
 0&-\mu u^6& u^3&0& u^5& 2u^6
\end{pmatrix}
\]
and the matrix~$c$,
\[
c=\mu
\begin{pmatrix}
0 & 1 & 0 \\
1 & 0 & 0 \\
0 & 0 & -\mu
\end{pmatrix}
\]
where the columns are associated with the unit matrix~$w_0$, and the vectors $w_1$ and~$w_2$, respectively.

\paragraph{WDVV in dimension \texorpdfstring{$N=5$ ($n=10$)}{
  N=5 (n=10)}: new examples.}
The WDVV system that is associated with a Dubrovin normal form in dimension in
the dependent variables (here we switch to the notation $(t^2,t^3,t^4,t^5)$ for
the independent variables):
\[u^1=f_{222},\ u^2=f_{223},\ u^3=f_{233},\ u^4=f_{224},\ u^5=f_{234},\ u^6=f_{244},\ u^7=f_{225},\ u^8=f_{235},\ u^9=f_{245},\ u^{10}=f_{255}\]
is rewritten as the system of three hydrodynamic-type systems; one of them has
the form
\begin{gather*}
(u^i)_3=(V^i)_2,\quad i=1,\dots10,
\end{gather*}
(it is unpractical to write the system in full details) where $V^i=\psi^i_\alpha w_2^\alpha$, $w$'s are
determined to be
\begin{gather*}
w_1=\begin{pmatrix}
u^1 & u^2 & 0 & u^4 & -u^7 & -u^8 & u^7 & 0 & 0& u^{10}
\end{pmatrix}\\
w_2=\begin{pmatrix}
u^2 & u^3 & -u^7 & u^5 & 0 & u^9 & u^8 & 0 & u^{10}& 0
\end{pmatrix}\\
w_3=\begin{pmatrix}
u^4 & u^5 & u^8 & u^6 & u^9 & 0 & u^9 & u^{10} & 0& 0
\end{pmatrix}\\
w_4=\begin{pmatrix}
u^7 & u^8 & 0 & u^9 & 0 & 0 & u^{10} & 0 & 0& 0
\end{pmatrix}
\end{gather*}
and
\[
\psi=(\psi^\alpha_i)=\begin{pmatrix}
1& 0& 0& 0& 0 & 0& 0& 0& 0& 0\\
0& 1& 0& 0& 0 & 0& 0& 0& 0& 0\\
-u^5& -u^3+u^4& u^2& -u^2& u^1 & 0 &\mu u^8& -\mu u^7& 0& 0\\
0& 0& 0& 1& 0 & 0& 0& 0& 0& 0\\
-u^6& -u^5& 0& 0& u^2 & u^1& \mu u^9-1& 0& -\mu u^7& 0\\
0& -u^6& -u^5& u^5& u^3-u^4 & u^2& 0& \mu u^9-1& -\mu u^8& 0\\
0& 0& 0& 0& 0 & 0& 1& 0& 0& 0\\
-u^9& -u^8& 0& -u^7& 0 & 0& \mu u^{10}+u^4& u^2& u^1& -\mu u^7\\
0& -u^9& -u^8& 0& -u^7 & 0& u^5& \mu u^{10}+u^3& u^2& -\mu u^8\\
0& 0& 0& -u^9& -u^8 & -u^7& u^6& u^5& \mu u^{10}+u^4& -\mu u^9+1
\end{pmatrix}.
\]
This system admits two compatible Hamiltonian operators of the
aforementioned form.  In particular, the third-order
Hamiltonian operator~$R$ (found in \cite{vasicek21:_wdvv_hamil}) is defined by the Monge
metric $f_{ij}=\psi_i^\alpha\phi_{\alpha\beta}\psi_j^\alpha$ with
\[
(\phi_{\alpha\beta})=
\begin{pmatrix}
0& 0& 0& 0& 0 & 0& 0& 0& 0& -1\\
0& 0& 0& 0& 0 & 0& 0& 0& -1& 0\\
0& 0& 0& 0& 0 & -1& 0& 0& 0& 0\\
0& 0& 0& 0& 0 & 0& 0& -1& 0& 0\\
0& 0& 0& 0& -1 & 0& 0& 0& 0& 0\\
0& 0& -1& 0& 0 & 0& 0& 0& 0& 0\\
0& 0& 0& 0& 0 & 0& -1& 0& 0& 0\\
0& 0& 0& -1& 0 & 0& 0& 0& 0& -\mu\\
0& -1& 0& 0& 0 & 0& 0& 0& -\mu& 0\\
-1& 0& 0& 0& 0 & 0& 0& -\mu& 0& 0
\end{pmatrix}
\]
and a first-order Hamiltonian operator of Ferapontov type, which is new and is
defined by the matrix
\[
(Z^{\gamma j})=\begin{pmatrix}
0& 0& -\mu u^7& -\mu u^7& 0 & 0& u^1& u^2& u^4& 2u^7\\
0& 0& -\mu u^8& -\mu u^8& 0 & 0& u^2& u^3& u^5& 2u^8\\
0& -u^1& -\frac32u^2& 0& 0 & 0& 0& -u^7& u^8& 0\\
0& 0& -\mu u^9& -\mu u^9-1& 0 & 0& u^4& u^5& u^6& 2u^9\\
-2u^1& -u^2& 0& 0& 0 & 0& -u^7& 0& u^9& 0\\
-2u^2& -u^3+u^4& \frac32u^5& 0& 0 & 0& -u^8& u^9& 0& 0\\
0& 0& -\mu u^{10}& -\mu u^{10}& 0 & 0& u^7& u^8& u^9& 2u^{10}\\
0& 0& 0& u^7& 0 & 0& 0& 0& u^{10}& 0\\
0& u^7& \frac32u^8& u^8& 0 & 0& 0& u^{10}& 0& 0\\
2u^7& u^8& 0& u^9& 0 & 0& u^{10}& 0& 0& 0
\end{pmatrix},
\]
and the matrix~$c$,
\[
c=\mu
\begin{pmatrix}
0 & 0& 1 & 0  \\
0 & 1& 0 & 0  \\
1 & 0& 0 & 0  \\
0 & 0& 0 & -\mu
\end{pmatrix}.
\]

\section*{Discussion}

So far, we have been able to derive a set of conditions whose vanishing is
equivalent to the vanishing of the Schouten bracket $[P,R_3]$. The strength of
our results lies in the fact that the Structure Formula in
Corollary~\ref{cor:structure-g} allows to reduce the system $[P,R_3]=0$ to
algebraic equations. The equations that determine a third-order homogeneous
operator in Doyle--Pot\"emin canonical form (see
Subsection~\ref{sec:third-order-hamilt}) are indeed algebraic, and by the
Structure Formula, a compatible first-order homogeneous operator is determined
by a finite number of constants.

If we consider a third-order operator as given, the equations are partly linear
(the equations~\eqref{eq:16} that determine Hamiltonian systems), and partly
algebraic of second degree (the symmetry of the velocity matrices of the
systems with respect to the metric~$g$ and the commutativity relations).

In our computational experiments, the additional conditions~\eqref{eq:221} and
\eqref{eq:118} seem not to play any role in the search for first-order
operators $P$ that are compatible with a given $R_3$, after having used all
other conditions. There exists the possibility that they vanish identically; we
will address this problem in future research work.

We are currently working on compatibility between second-order and first-order
homogeneous Hamiltonian operators \cite{lorenzoni:_compat12}; the calculations
that we did until now allow us to confirm that many of the results of this
paper hold in a strikingly similar way also in that case.


\providecommand{\cprime}{\/{\mathsurround=0pt$'$}}
  \providecommand*{\SortNoop}[1]{}

\section{Appendix}

Here we give the details of the proof of some results that we use to simplify
the coefficients of $[P,R]$.

\begin{lemma}\label{c_der_w_der}
$
S^{jq}_{\beta lp}+S^{jq}_{\beta pl}=0.
$
\end{lemma}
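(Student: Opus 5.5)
The plan is a direct tensorial verification. We use the compatibility conditions \eqref{w_gradient}, \eqref{Commut_f_w}, \eqref{w_derivative} already extracted from the highest coefficients of $T_n$ (the $\mathrm D_x^4$, $\mathrm D_x^3$ and $u_{xxxx}$ terms), together with the Hamiltonian conditions for $R$ in the form \eqref{f_der}--\eqref{c_der} or, equivalently, \eqref{eq:505}. After symmetrisation in $l,p$, the quantity $S^{jq}_{\beta lp}+S^{jq}_{\beta pl}$ splits into two groups. The first group contains the derivatives of $w$:
\begin{equation*}
w^j_{\beta l,k}c^{kq}_p+w^j_{\beta p,k}c^{kq}_l-w^q_{\beta k,l}c^{kj}_p-w^q_{\beta k,p}c^{kj}_l .
\end{equation*}
The second group contains the derivatives of $c$:
\begin{equation*}
\bigl(c^{jq}_{l,k}-c^{jq}_{k,l}\bigr)w^k_{\beta p}+\bigl(c^{jq}_{p,k}-c^{jq}_{k,p}\bigr)w^k_{\beta l}.
\end{equation*}

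First, in the first group I use \eqref{w_gradient} to rewrite $w^q_{\beta k,l}=w^q_{\beta l,k}$. Every derivative of $w$ is then of the form $w^a_{\beta m,k}$ contracted with some $c$. Next, I invert \eqref{w_derivative} with $f_{ij}$:
\begin{equation*}
w^j_{\beta l,k}=f_{kq}\bigl(w^q_{\beta m}c^{mj}_l-c^{qj}_m w^m_{\beta l}\bigr).
\end{equation*}
This eliminates all derivatives of $w$. The first group becomes a cubic expression of the type $f\,c\,c\,w$, with no derivatives left.

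For the second group, I eliminate the derivatives of $c$ using \eqref{c_der}. Since \eqref{c_der} only gives $c^{ij}_{k,l}f^{kp}$, I first multiply the whole identity by $f^{sj}$ or $f^{sq}$ (or, more conveniently, lower the free indices $j,q$ with $f_{ja}f_{qb}$). Then I pass to the covariant quantities $c_{ijk}$ and use \eqref{eq:733}, $c_{ijk,l}=-f^{rs}c_{ril}c_{sjk}$. Lowering indices produces extra terms involving derivatives of $f$. These are again expressed through $c$ by \eqref{eq:506} and \eqref{f_der}, so the second group also becomes a combination of terms $f\,c\,c\,w$. Throughout, I use \eqref{Commut_f_w} in the form ``$f_{ai}w^i_{\beta b}$ is symmetric in $a,b$'' to move the index of $w$ freely.

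The final step is to show that the resulting cubic expression vanishes identically. For this I use only the algebraic identities for $c$: the skew-symmetry \eqref{c_f_sym}, the cyclic identity \eqref{c_f_cycle}, and \eqref{e3} for $w_\beta$, which is now available by Theorem~\ref{th:redundant_condition_3ord_Ham}. In lowered form, \eqref{c_f_sym} says that $c_{ijk}$ is skew in a pair of indices, and \eqref{c_f_cycle} says that its cyclic sum vanishes. I expect the main obstacle to be the bookkeeping in this last step. The terms coming from the two groups do not cancel pairwise; they cancel only after one application of the cyclic identity, combined with the symmetry of $f\,w_\beta$. I would first organise all terms by the index pattern of the two $c$'s with respect to the free indices $j,q,l,p$, and then apply \eqref{c_f_cycle} to the unmatched ones. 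If cancellation still fails, I would check it with computer algebra on a generic Monge metric, for instance one of the examples in Section~\ref{sec:examples}.
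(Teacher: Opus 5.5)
Your plan is correct and is essentially the paper's proof. Both arguments eliminate all derivatives of $w_\beta$ and $c$ via \eqref{w_gradient}, \eqref{w_derivative} and the Hamiltonian conditions of $R$, and then check that the derivative-free remainder cancels using \eqref{Commut_f_w} and the algebraic identities for $c$. The only difference is bookkeeping: the paper raises the free indices $l,p$ with $f^{kl}f^{ps}$ and uses \eqref{c_der} as it stands, which then requires \eqref{c_f_cycle} twice. You instead lower $j,q$ and use \eqref{eq:733} together with $f_{ij,k}=-(c_{ijk}+c_{jik})$.

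In your covariant version the last step is easier than you expect. Substitute
$f_{aj}w^j_{\beta l,k}=c_{aml}w^m_{\beta k}+c_{amk}w^m_{\beta l}$ and
$f_{ja}f_{qb}c^{jq}_{k,l}=f^{rs}\bigl(c_{brl}c_{sak}+(c_{ral}+c_{arl})c_{bsk}\bigr)$.
The terms containing $w^m_{\beta k}f^{kn}$ then cancel by the symmetry of that matrix. The remaining coefficients of $w^m_{\beta l}$ and $w^m_{\beta p}$ cancel in pairs, either directly or after relabelling $r\leftrightarrow s$. This uses only the skew-symmetry of $c_{ijk}$ in its last two indices, so you need neither a further application of \eqref{c_f_cycle} nor \eqref{e3}.
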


\begin{proof}
In order to get rid of derivatives with the help of~\eqref{c_der} and~\eqref{w_derivative},
we multiply $S^{jq}_{\beta lp}+S^{jq}_{\beta pl}$ by~$f^{kl}f^{ps}$,
\begin{gather*}
(w^j_{\beta k,h}c^{hq}_p -c^{jq}_{h,k}w^h_{\beta p} +c^{jq}_{k,h}w^h_{\beta p} -w^q_{\beta h,k}c^{hj}_p)f^{kl}f^{ps}+\\
(w^j_{\beta p,h}c^{hq}_k -c^{jq}_{h,p}w^h_{\beta k} +c^{jq}_{p,h}w^h_{\beta k} -w^q_{\beta h,p}c^{hj}_k)f^{kl}f^{ps}=\\
w^j_{\beta h,k}f^{kl}c^{hq}_p f^{ps}-c^{jq}_{h,k}f^{hp}w^s_{\beta p} f^{kl} +c^{jq}_{k,h}f^{kl}w^h_{\beta p} f^{ps} -w^q_{\beta h,k}f^{kl}c^{hj}_p f^{ps}+\\
w^j_{\beta h,p}f^{ps}c^{hq}_k f^{kl}-c^{jq}_{h,p}f^{hk}w^l_{\beta k} f^{ps} +c^{jq}_{p,h}f^{ps}w^h_{\beta k} f^{kl} -w^q_{\beta h,p}f^{ps}c^{hj}_k f^{kl}=\\
(-c^{lj}_kw^k_{\beta h}+w^l_{\beta k}c^{kj}_h)c^{hq}_p f^{ps}-(c^{jp}_hc^{hq}_k-c^{pj}_hc^{hq}_k-c^{pq}_h(c^{hj}_k+c^{jh}_k))w^s_{\beta p} f^{kl} \\
+(c^{jl}_kc^{kq}_h-c^{lj}_kc^{kq}_h-c^{lq}_k(c^{kj}_h+c^{jk}_h))w^h_{\beta p} f^{ps}-(-c^{lq}_kw^k_{\beta h}+w^l_{\beta k}c^{kq}_h)c^{hj}_p f^{ps}+\\
(-c^{sj}_pw^p_{\beta h}+w^s_{\beta p}c^{pj}_h)c^{hq}_k f^{kl}-(c^{jk}_hc^{hq}_p-c^{kj}_hc^{hq}_p-c^{kq}_h(c^{hj}_p+c^{jh}_p))w^l_{\beta k} f^{ps} \\
+(c^{js}_pc^{pq}_h-c^{sj}_pc^{pq}_h-c^{sq}_p(c^{pj}_h+c^{jp}_h))w^h_{\beta k} f^{kl}-(-c^{sq}_pw^p_{\beta h}+w^s_{\beta p}c^{pq}_h)c^{hj}_k f^{kl}
\end{gather*}
Now, we collect the coefficients of~$w^k_{\beta h}$, $w^s_{\beta h}$ and~$w^l_{\beta h}$ ($k$ is a dummy index, and~$s$ and~$l$ are fixed),
\begin{gather*}
\left(c^{hq}_p(c^{sj}_k f^{pl}+c^{lj}_kf^{sp})-c^{hj}_p(c^{lq}_kf^{sp}+c^{sq}_k f^{pl})\right)w^k_{\beta h}\\
+\left((c^{jl}_pc^{pq}_k-c^{lj}_pc^{pq}_k-c^{lq}_pc^{pj}_k-c^{lq}_pc^{jp}_k)f^{hk}
-(2c^{lq}_kc^{hj}_p-c^{jh}_pc^{lq}_k)f^{kp}-c^{hq}_pc^{lp}_kf^{kj}\right)w^s_{\beta h}\\
+\left((c^{js}_pc^{pq}_k-c^{sj}_pc^{pq}_k-c^{sq}_pc^{pj}_k-c^{sq}_pc^{jp}_k)f^{hk}
-(2c^{hj}_pc^{sq}_k-c^{jh}_pc^{sq}_k)f^{kp}-c^{hq}_pc^{sp}_kf^{kj}\right)w^l_{\beta h}
\end{gather*}
All three coefficients vanish. Indeed, in the coefficient of~$w^k_{\beta h}$ we use~\eqref{c_f_sym} and
renaming the dummy indices to get
\begin{gather*}
c^{hq}_p(c^{sj}_k f^{pl}+c^{lj}_kf^{sp})-c^{hj}_p(c^{lq}_kf^{sp}+c^{sq}_k f^{pl})=\\
c^{hq}_pc^{sj}_k f^{pl}+c^{hq}_pc^{lj}_kf^{sp}-c^{hj}_pc^{lq}_kf^{sp}-c^{hj}_pc^{sq}_k f^{pl}=\\
-\left(c^{lq}_pc^{sj}_k+c^{sq}_pc^{lj}_k-c^{sj}_pc^{lq}_k-c^{lj}_pc^{sq}_k\right)f^{ph}=\\
(-c^{lq}_kc^{sj}_p-c^{sq}_kc^{lj}_p+c^{sj}_pc^{lq}_k+c^{lj}_pc^{sq}_k)f^{ph}=0
\end{gather*}
The coefficients of~$w^s_{\beta h}$ and~$w^l_{\beta h}$ are the same up to permutation,
so we consider only one of them. As a preparatory step, we use again~\eqref{c_f_sym} and redenoting the dummy indices to yield,
\begin{gather*}
(c^{js}_pc^{pq}_k-c^{sj}_pc^{pq}_k-c^{sq}_pc^{pj}_k-c^{sq}_pc^{jp}_k)f^{hk}
-(2c^{hj}_pc^{sq}_k-c^{jh}_pc^{sq}_k)f^{kp}-c^{hq}_pc^{sp}_kf^{kj}=\\
(c^{js}_pc^{pq}_k-c^{sj}_pc^{pq}_k)f^{hk}
-c^{sq}_k(c^{jk}_pf^{ph} +c^{hj}_pf^{pk}+c^{kh}_pf^{pj})
-c^{hq}_pc^{sp}_kf^{kj},
\end{gather*}
where the middle summand vanishes due to~\eqref{c_f_cycle}, thus leaving us with
\begin{gather*}
(c^{js}_pc^{pq}_k-c^{sj}_pc^{pq}_k)f^{hk}-c^{hq}_pc^{sp}_kf^{kj}=\\
-c^{hq}_k(c^{js}_pf^{pk}+c^{kj}_pf^{ps}+c^{sk}_pf^{pj})=0
\end{gather*}
due to~\eqref{c_f_cycle} again.
\end{proof}

\begin{lemma}\label{lem:RR}
$R^{ijp}_{lm}+R^{ijp}_{ml}=0$,
where
\[
R^{ijp}_{lm}:=-\Gamma^{ij}_{k,l}c^{kp}_m
+\Gamma^{pj}_{k,l}c^{ki}_m
-\Gamma^{kj}_lc^{pi}_{k,m}+\Gamma^{kj}_lc^{pi}_{m,k}
+c^{\alpha\beta}(c^{ki}_lw^p_{\alpha k}w^j_{\beta m}-c^{kp}_lw^i_{\alpha k}w^j_{\beta m}).\]
\end{lemma}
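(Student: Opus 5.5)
We follow the strategy of Lemma~\ref{c_der_w_der}: remove all derivatives from $R^{ijp}_{lm}+R^{ijp}_{ml}$ using the differential relations available, then check that the resulting purely algebraic expression cancels by the algebraic identities of $R$ and of the compatibility conditions. The derivatives appearing are $\Gamma^{ij}_{k,l}$ and $c^{pi}_{k,m}$, $c^{pi}_{m,k}$. Since~\eqref{Gamma_derivative} and~\eqref{c_der} express them only after contraction with $f$, we first multiply the symmetrised expression by $f^{ls}f^{mr}$ (or a single $f$ where enough). We then substitute:
\begin{itemize}
\item \eqref{Gamma_derivative} in the form $\Gamma^{ij}_{k,l}f^{kq}=\Gamma^{qj}_kc^{ki}_l-\Gamma^{kj}_lc^{qi}_k-c^{\alpha\beta}w^i_{\alpha k}w^j_{\beta l}f^{kq}$, after using~\eqref{c_f_sym} to move the $f$ onto the index of $\Gamma_{,l}$;
\item \eqref{c_der} for the derivatives of $c$.
\end{itemize}

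After the substitution, the expression splits into two kinds of terms. The first kind is quadratic in $c$ and linear in $\Gamma$. The second kind contains $c^{\alpha\beta}cww$; it comes from the $ww$ part of~\eqref{Gamma_derivative} together with the explicit $c^{\alpha\beta}$ terms of $R^{ijp}_{lm}$.

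For the $cww$ group we use the algebraic conditions~\eqref{Commut_f_w} and~\eqref{eq:47}. Together with the $l\leftrightarrow m$ symmetrisation, these should make the group cancel, much as the coefficient of $w^k_{\beta h}$ cancelled in Lemma~\ref{c_der_w_der}. If a residue survives, we convert $w_{\alpha}$ contractions via~\eqref{w_derivative}; we expect the identity $w^q_{\alpha k}f^{kj}=w^j_{\alpha k}f^{kq}$ to be the decisive one.

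For the $\Gamma cc$ group we collect coefficients of $\Gamma^{\cdot j}_{\cdot}$ with a fixed free index pattern, as was done with $w^s_{\beta h}$ in the Appendix. Each coefficient should reduce, via~\eqref{c_f_sym} and relabelling, to cyclic sums of the form $c^{ab}_pf^{pc}+c^{bc}_pf^{pa}+c^{ca}_pf^{pb}$, which vanish by~\eqref{c_f_cycle}. The remaining pieces should be handled by~\eqref{Gamma_c} and~\eqref{Commut_Gamma_f}. Identity~\eqref{Gamma_c} is precisely the integrability condition combining $\Gamma f$ commutation with derivatives, so it should absorb terms of type $\Gamma^{kj}_l c^{pi}_k$ versus $\Gamma^{pj}_kc^{ik}_l$.

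The main obstacle is bookkeeping. $\Gamma^{kj}_l$ carries the derivation index $l$ in its lower slot, unlike the $\Gamma_{,l}$ terms. Consequently, the terms $-\Gamma^{kj}_lc^{pi}_{k,m}+\Gamma^{kj}_lc^{pi}_{m,k}$ become, after~\eqref{c_der}, products $\Gamma^{kj}_l\,cc$ that do not obviously pair with those coming from~\eqref{Gamma_derivative}. To resolve this, I would first use~\eqref{f_der} and the antisymmetric combination $c^{pi}_{m,k}-c^{pi}_{k,m}$. In lowered indices, the curl of $c$ is known from~\eqref{eq:733}: $c_{ijk,l}-c_{ijl,k}=-f^{pq}(c_{pil}c_{qjk}-c_{pik}c_{qjl})$. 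This keeps the number of $cc$ terms small before symmetrising in $l,m$, after which the cyclic identity should finish the argument.
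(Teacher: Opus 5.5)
Your plan is correct and is essentially the paper's proof: substitute \eqref{Gamma_derivative} for $\Gamma^{ij}_{k,l}$, multiply by $f^{lx}f^{my}$, remove the $c$-derivatives with \eqref{c_der}, then use \eqref{c_f_sym} and \eqref{Commut_Gamma_f} to collect the coefficients of $\Gamma^{xj}_l$ and $\Gamma^{yj}_m$, each of which reduces to a cyclic sum killed by \eqref{c_f_cycle}.

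Two steps are simpler than you anticipate. First, the $c^{\alpha\beta}ww$ terms produced by \eqref{Gamma_derivative} in $R^{ijp}_{lm}$ cancel identically against the explicit ones in $R^{ijp}_{ml}$ (and vice versa) under the $l\leftrightarrow m$ symmetrisation, so \eqref{Commut_f_w}, \eqref{eq:47}, \eqref{w_derivative} and \eqref{Gamma_c} are never needed. Second, the term $\Gamma^{kj}_l c^{pi}_{k,m}$ is put into the form required by \eqref{c_der} simply by using \eqref{Commut_Gamma_f} to move $f^{lx}$ onto the index $k$, so the detour through the curl formula from \eqref{eq:733} is unnecessary.
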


\begin{proof}
First of all, we rewrite~\eqref{Gamma_derivative} as
\begin{gather*}
\Gamma^{ij}_{k,l} + c^{\alpha\beta}w^i_{\alpha k}w^j_{\beta l} = (\Gamma^{tj}_rc^{ri}_l - \Gamma^{rj}_l c^{ti}_r) f_{tk},
\end{gather*}
and simplify $R^{ijp}_{lm}+R^{ijp}_{ml}$ to
\begin{gather*}
-(\Gamma^{tj}_rc^{ri}_l - \Gamma^{rj}_l c^{ti}_r) f_{tk}c^{kp}_m
+(\Gamma^{tj}_rc^{rp}_m - \Gamma^{rj}_m c^{tp}_r) f_{tk}c^{ki}_l
-(\Gamma^{tj}_rc^{ri}_m - \Gamma^{rj}_m c^{ti}_r) f_{tk}c^{kp}_l
+(\Gamma^{tj}_rc^{rp}_l - \Gamma^{rj}_l c^{tp}_r) f_{tk}c^{ki}_m\\
-\Gamma^{kj}_lc^{pi}_{k,m}+\Gamma^{kj}_mc^{pi}_{l,k}-\Gamma^{kj}_mc^{pi}_{k,l}+\Gamma^{kj}_lc^{pi}_{m,k}
\end{gather*}
In order to get rid of the remaining derivatives with the help of~\eqref{c_der},
we multiply the above expression by~$f^{lx}f^{my}$,
\begin{gather*}
\Big((\Gamma^{tj}_rc^{ri}_l - \Gamma^{rj}_l c^{ti}_r) f_{mk}c^{kp}_t
-(\Gamma^{tj}_rc^{rp}_m - \Gamma^{rj}_m c^{tp}_r) f_{lk}c^{ki}_t
+(\Gamma^{tj}_rc^{ri}_m - \Gamma^{rj}_m c^{ti}_r) f_{lk}c^{kp}_t
-(\Gamma^{tj}_rc^{rp}_l - \Gamma^{rj}_l c^{tp}_r) f_{mk}c^{ki}_t\\
-\Gamma^{kj}_lc^{pi}_{k,m}+\Gamma^{kj}_mc^{pi}_{l,k}-\Gamma^{kj}_mc^{pi}_{k,l}+\Gamma^{kj}_lc^{pi}_{m,k}\Big)f^{lx}f^{my}=\\
(\Gamma^{tj}_rc^{ri}_l - \Gamma^{rj}_l c^{ti}_r) f_{mk}c^{kp}_tf^{lx}f^{my}
-(\Gamma^{tj}_rc^{rp}_m - \Gamma^{rj}_m c^{tp}_r) f_{lk}c^{ki}_tf^{lx}f^{my}\\
+(\Gamma^{tj}_rc^{ri}_m - \Gamma^{rj}_m c^{ti}_r) f_{lk}c^{kp}_tf^{lx}f^{my}
-(\Gamma^{tj}_rc^{rp}_l - \Gamma^{rj}_l c^{tp}_r) f_{mk}c^{ki}_tf^{lx}f^{my}\\
-(c^{pl}_kc^{ki}_m-c^{lp}_kc^{ki}_m-c^{li}_kc^{kp}_m-c^{li}_kc^{pk}_m)\Gamma^{xj}_lf^{my}
+(c^{px}_lc^{li}_k-c^{xp}_lc^{li}_k-c^{xi}_lc^{lp}_k-c^{xi}_lc^{pl}_k)\Gamma^{kj}_mf^{my}\\
-(c^{pm}_kc^{ki}_l-c^{mp}_kc^{ki}_l-c^{mi}_kc^{kp}_l-c^{mi}_kc^{pk}_l)\Gamma^{yj}_mf^{lx}
+(c^{py}_mc^{mi}_k-c^{yp}_mc^{mi}_k-c^{yi}_mc^{mp}_k-c^{yi}_mc^{pm}_k)\Gamma^{kj}_lf^{lx},
\end{gather*}
where we use~\eqref{c_f_sym} and~\eqref{Commut_Gamma_f} to reduce the expression to the form
with collectable coefficients of~$\Gamma^{xj}_l$ and~$\Gamma^{yj}_m$,
\begin{gather*}
(- c^{li}_k c^{kp}_mf^{my}-c^{pl}_kc^{ki}_mf^{my}-c^{yi}_mc^{pm}_kf^{lk}
+c^{lp}_kc^{ki}_mf^{my}+c^{li}_kc^{pk}_mf^{my}+c^{py}_mc^{mi}_kf^{lk})\Gamma^{xj}_l\\
+(- c^{mi}_k c^{kp}_lf^{lx}-c^{pm}_kc^{ki}_lf^{lx}+c^{mp}_kc^{ki}_lf^{lx}+c^{mi}_kc^{pk}_lf^{lx}
+c^{px}_mc^{li}_kf^{mk}
-c^{xi}_mc^{pl}_kf^{mk})\Gamma^{yj}_m
\end{gather*}
Both coefficients are the same up to permutation, and thus, we need to show that only one of them vanishes.
Again, we use~\eqref{c_f_sym} to reduce the expression to the form, where the cycle condition~\eqref{c_f_cycle} is evident,
\begin{gather*}
- c^{li}_k c^{kp}_mf^{my}-c^{pl}_kc^{ki}_mf^{my}-c^{yi}_kc^{pk}_mf^{lm}
-c^{lp}_mc^{yi}_kf^{mk}-c^{li}_kc^{yk}_mf^{mp}+c^{li}_kc^{ky}_mf^{mp}=\\
- c^{mi}_k c^{yp}_mf^{kl}-c^{yi}_kc^{kl}_mf^{mp}-c^{yi}_kc^{pk}_mf^{ml}
-c^{lp}_mc^{yi}_kf^{mk}-c^{li}_kc^{yk}_mf^{mp}+c^{li}_kc^{ky}_mf^{mp}
=\\
- c^{mi}_k c^{yp}_mf^{kl}+c^{yi}_kc^{lp}_mf^{mk}-c^{lp}_mc^{yi}_kf^{mk}-c^{li}_kc^{yk}_mf^{mp}+c^{li}_kc^{ky}_mf^{mp}=\\
c^{li}_k( c^{yp}_mf^{mk}+c^{pk}_mf^{my}+c^{ky}_mf^{mp})=0.
\end{gather*}
\end{proof}

\begin{lemma} The condition~\eqref{Gamma_c} can be rewritten in lower indices
  as
 \begin{equation}
   \label{eq:271}
   c_{mkl}\Gamma^{mj}_{\alpha i} + c_{mik}\Gamma^{mj}_{\alpha l}
   + c_{mli}\Gamma^{mj}_{\alpha k} = 0.
 \end{equation}
\end{lemma}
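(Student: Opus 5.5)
The plan is a direct index computation: fix the free index $j$ throughout and lower the indices of \eqref{Gamma_c} with the Monge metric $f_{ij}$. I read $\Gamma^{mj}_{\alpha i}$ in \eqref{eq:271} as $\Gamma^{mj}_i$ with $j$ fixed; the label $\alpha$ plays no role.

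For fixed $j$, put $A^i_k:=\Gamma^{ij}_k$. Condition \eqref{Commut_Gamma_f}, $\Gamma^{ij}_kf^{kp}=\Gamma^{pj}_kf^{ki}$, says that $A$ is self-adjoint with respect to $f$. Equivalently, $A_{ak}:=f_{ai}A^i_k$ is symmetric in $a,k$. This is the only property of $\Gamma$ I will use.

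On the $c$ side, I will use the following facts:
\begin{itemize}
\item the inverse of the lowering is $c^{pq}_k=f^{qi}f^{pj}c_{ijk}$;
\item $c_{ijk}$ is skew in $j,k$, by \eqref{eq:506};
\item the cyclic sum $c_{ijk}+c_{jki}+c_{kij}$ vanishes, by \eqref{eq:506} and \eqref{eq:507};
\item \eqref{c_f_sym} and \eqref{c_f_cycle}, already in the lowered form implicit in \eqref{eq:505}.
\end{itemize}

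The steps are as follows. First, multiply \eqref{Gamma_c} by $f_{ai}f_{bp}$, so that the free indices become $a,b,l$ with $a,b$ lower. Second, in each of the four terms replace $c^{\cdot\cdot}_\cdot$ by $ff c_{\cdot\cdot\cdot}$ and contract the $f$'s. For instance, $f_{ai}f_{bp}\Gamma^{ij}_kc^{pk}_l=A_{ak}f^{kr}c_{rbl}$, since $f_{bp}f^{kr}f^{ps}c_{rsl}=f^{kr}c_{rbl}$. The terms in which the $c$-index is contracted with the lower index of $\Gamma$ are treated the same way: using the symmetry of $A_{ak}$, each becomes of the form $A^m{}_{\cdot}\,c_{m\cdot\cdot}$ with $A^m{}_a = A^m_a$ reinstated.

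After this, every term has the shape $c_{m\,\cdot\,\cdot}A^m_{\,\cdot}$ with free lower indices $a,b,l$. The two terms with $\Gamma^{kj}_l$ combine into $A^m_l(c_{mab}-c_{mba})$ up to the lowering conventions. The first two terms give $A^m_a c_{m b l}$-type and $A^m_b c_{m l a}$-type contributions. The last step is to use the skew symmetry and the cyclic identity to rewrite $c_{mab}-c_{mba}$ in terms of a single ordering, for example as $-c_{abm}$-type terms moved back to the form $c_{m\cdot\cdot}$ via \eqref{c_f_sym}. The result should be the cyclic sum \eqref{eq:271} with $(i,k,l)\leftrightarrow(a,b,l)$. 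Since every manipulation is an invertible contraction with $f$, the two forms are equivalent.

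The main obstacle is bookkeeping. One must decide which lowered index of $c^{pq}_k$ is skew and which slot of $c_{ijk}$ ends up contracted with $A$. A wrong placement turns the expected cyclic sum into a non-cyclic combination. I would first verify the lowering of \eqref{c_f_sym}, i.e.\ that it becomes $c_{ijk}=-c_{ikj}$ rather than skew symmetry in another pair. I would then fix a convention for $A^m_a$ versus $A_a{}^m$ using the symmetry of $A_{ak}$, before assembling the four terms.
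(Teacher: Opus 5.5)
Your plan is correct, and it is the direct index-lowering computation that the paper leaves implicit (the lemma is stated there without proof): contracting \eqref{Gamma_c} with $f_{ai}f_{bp}$ and using \eqref{Commut_Gamma_f} gives $\Gamma^{mj}_a c_{mbl}-\Gamma^{mj}_b c_{mal}+\Gamma^{mj}_l(c_{bam}-c_{abm})=0$. One slip: the $\Gamma^{kj}_l$ bracket is $c_{bam}-c_{abm}$, not $c_{mab}-c_{mba}$ as you wrote, since the latter equals $2c_{mab}$ and would break the cyclic form; the cyclic identity turns $c_{bam}-c_{abm}$ into $c_{mab}$, skew-symmetry gives $-c_{mal}=c_{mla}$, and you obtain \eqref{eq:271} with $(a,b,l)=(i,k,l)$.
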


\begin{lemma}\label{lemma:re}
\begin{gather*}
-\Gamma^{ij}_{k,l}c^{pk}_m+\Gamma^{ij}_{k,m}c^{kp}_l
+\Gamma^{ij}_{k,m}c^{pk}_l-\Gamma^{pj}_{k,l}c^{ki}_m
-\Gamma^{ij}_{k,l}c^{kp}_m
+\Gamma^{pj}_{k,m}c^{ki}_l\\
+\Gamma^{ij}_kc^{kp}_{l,m}+\Gamma^{ij}_kc^{pk}_{l,m}-\Gamma^{ij}_kc^{pk}_{m,l}-\Gamma^{ij}_kc^{kp}_{m,l}-\Gamma^{kj}_lc^{pi}_{k,m}
+\Gamma^{pj}_kc^{ki}_{l,m}-\Gamma^{pj}_kc^{ki}_{m,l}+\Gamma^{kj}_mc^{pi}_{k,l}\\
-c^{ki}_mw^p_kw^j_l+c^{kp}_lw^i_kw^j_m+c^{pk}_lw^i_kw^j_m-c^{pk}_mw^i_kw^j_l+c^{ki}_lw^p_kw^j_m-c^{kp}_mw^i_kw^j_l=0
\end{gather*}
\end{lemma}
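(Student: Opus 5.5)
We prove Lemma~\ref{lemma:re} the same way as Lemmas~\ref{c_der_w_der} and~\ref{lem:RR}: remove every derivative of $\Gamma$ and of $c$ with the conditions already in hand, and then reduce the leftover quadratic expression in $c$ and $\Gamma$ to the algebraic identities \eqref{c_f_sym}, \eqref{c_f_cycle}, \eqref{Commut_Gamma_f} and~\eqref{Gamma_c}. We read $w^p_k w^j_l$ in the last line as $c^{\alpha\beta}w^p_{\alpha k}w^j_{\beta l}$, as in the coefficient of $a_{10}$ it comes from.

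\emph{Step 1.} Group the six terms with $\Gamma_{,l}$ or $\Gamma_{,m}$ together with the six $c^{\alpha\beta}ww$ terms. Each term of the form $\Gamma^{ij}_{k,l}$ is paired with its $-c^{\alpha\beta}w^i_{\alpha k}w^j_{\beta l}$ partner. For example, $-\Gamma^{ij}_{k,l}(c^{pk}_m+c^{kp}_m)$ pairs with $-c^{\alpha\beta}w^i_{\alpha k}w^j_{\beta l}(c^{pk}_m+c^{kp}_m)$, and $\Gamma^{pj}_{k,m}c^{ki}_l$ pairs with $c^{ki}_l w^p_k w^j_m$. One has to check that the signs match exactly, which they appear to do term by term. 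Each combination $\Gamma^{ij}_{k,l}+c^{\alpha\beta}w^i_{\alpha k}w^j_{\beta l}$ is then replaced using \eqref{Gamma_derivative} in the form employed in the proof of Lemma~\ref{lem:RR}:
\begin{equation*}
\Gamma^{ij}_{k,l}+c^{\alpha\beta}w^i_{\alpha k}w^j_{\beta l}
=(\Gamma^{tj}_r c^{ri}_l-\Gamma^{rj}_l c^{ti}_r)f_{tk}.
\end{equation*}
After this substitution every $w$ has disappeared.

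\emph{Step 2.} The remaining derivative terms $c_{l,m}-c_{m,l}$ and $c^{pi}_{k,m}$, $c^{pi}_{k,l}$ are expressed through \eqref{c_der}. To apply it cleanly, we multiply the whole expression by $f^{lx}f^{my}$, exactly as in Lemma~\ref{lem:RR}; by~\eqref{c_f_sym}, the factor $f^{kp}$ can be moved onto the differentiated $c$. Note that the combinations $c^{kp}_{l,m}+c^{pk}_{l,m}$ appearing here equal $f^{kp}_{,lm}$ by~\eqref{f_der}. Their antisymmetrisation in $l,m$ therefore vanishes, so the terms $\Gamma^{ij}_k(c^{kp}_{l,m}+c^{pk}_{l,m}-c^{pk}_{m,l}-c^{kp}_{m,l})$ drop out immediately. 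Only the two $\Gamma^{pj}_k c^{ki}_{[l,m]}$ terms and the two $\Gamma^{kj}c^{pi}_{k,\cdot}$ terms need~\eqref{c_der}.

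\emph{Step 3.} The result is a cubic expression, linear in $\Gamma$ and quadratic in $c$, contracted with $f$. Using \eqref{Commut_Gamma_f} and \eqref{c_f_sym}, we move indices so that $\Gamma$ always appears as $\Gamma^{xj}_{\cdot}$, $\Gamma^{yj}_{\cdot}$, or $\Gamma^{\cdot j}_{r}$ with a free contraction. We then collect coefficients. Terms in which $\Gamma$ is contracted with a single $c$ in the pattern of \eqref{Gamma_c} (equivalently~\eqref{eq:271}) are cancelled by that identity. The coefficients of the remaining $\Gamma$'s should be shown to assemble into cyclic sums $c^{ab}_m f^{mc}+c^{bc}_m f^{ma}+c^{ca}_m f^{mb}$, which vanish by~\eqref{c_f_cycle}.

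\emph{Main obstacle.} The hard part is this last bookkeeping. The expression is not symmetric in $l,m$, so the trick of Lemma~\ref{lem:RR}, where the two coefficients coincide up to a permutation, is unavailable. Identifying the correct use of \eqref{Gamma_c} before invoking \eqref{c_f_cycle} is the delicate point. Our first attempt would be to split the expression into its $\Gamma^{ij}$, $\Gamma^{pj}$ and $\Gamma^{kj}$ parts, treat each separately, and check consistency with a computer-algebra verification on the Monge-metric parametrisation $f=\Psi\Phi\Psi^\top$.
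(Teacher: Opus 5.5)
Your proposal follows the paper's proof. Both eliminate $\Gamma^{ij}_{k,l}+c^{\alpha\beta}w^i_{\alpha k}w^j_{\beta l}$ with \eqref{Gamma_derivative} (the sign pairing you flag does hold term by term), remove the $c$-derivatives with \eqref{c_der}, simplify with \eqref{c_f_sym} and \eqref{c_f_cycle}, and close with \eqref{eq:271}. Your observation that the $\Gamma^{ij}_k$-terms equal $\Gamma^{ij}_k(f^{kp}_{,lm}-f^{kp}_{,ml})=0$ by \eqref{f_der} is exactly why they silently disappear from the paper's intermediate expression.

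You leave the last step undone, and two of your remarks about it are off. First, the expression is antisymmetric in $l,m$, not merely non-symmetric, so the analogue of the trick of Lemma~\ref{lem:RR} is available. Lower $i,p$ with $f_{iI}f_{pP}$ and use $c_{abk}$ as in \eqref{eq:505}. The whole expression then reduces to $B_{lm}-B_{ml}$, where
$B_{lm}=f^{xy}c_{Iyl}\big(\Gamma^{dj}_m c_{dPx}-\Gamma^{dj}_x c_{dPm}+\Gamma^{dj}_P c_{dxm}\big)$,
and the bracket is an instance of \eqref{eq:271}. Second, that bracket mixes $\Gamma$'s with lower index $m$, with lower index $P$, and contracted. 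So treating the $\Gamma^{ij}$, $\Gamma^{pj}$ and $\Gamma^{kj}$ parts separately would not show vanishing piece by piece; they cancel only jointly.
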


\begin{proof}
We start by removing all derivatives from the LHS by using identities~\eqref{c_der} and~\eqref{eq:8},
\begin{gather*}
-c^{pk}_mc^{ri}_l\Gamma^{sj}_kf_{sr}
+c^{kp}_lc^{ri}_m\Gamma^{sj}_kf_{sr}
+c^{pk}_lc^{ri}_m\Gamma^{sj}_kf_{sr}
-c^{kp}_mc^{ri}_l\Gamma^{sj}_kf_{sr}
-c^{ki}_mc^{rp}_l\Gamma^{sj}_kf_{sr}
+c^{ki}_lc^{rp}_m\Gamma^{sj}_kf_{sr}\\
-c^{pk}_mc^{ri}_k\Gamma^{sj}_lf_{sr}
+c^{kp}_lc^{ri}_k\Gamma^{sj}_mf_{sr}
+c^{pk}_lc^{ri}_k\Gamma^{sj}_mf_{sr}
-c^{kp}_mc^{ri}_k\Gamma^{sj}_lf_{sr}
-c^{ki}_mc^{rp}_k\Gamma^{sj}_lf_{sr}
+c^{ki}_lc^{rp}_k\Gamma^{sj}_mf_{sr}\\
+(c^{ps}_kc^{ri}_l+c^{si}_kc^{pr}_l+c^{si}_kc^{rp}_l)f_{rs}\Gamma^{kj}_m
-(c^{ps}_kc^{ri}_m+c^{si}_kc^{pr}_m+c^{si}_kc^{rp}_m)f_{rs}\Gamma^{kj}_l
  +(c^{si}_lc^{rk}_m-c^{ri}_mc^{sk}_l)f_{rs}\Gamma^{pj}_k
\end{gather*}
where we used the condition~\eqref{c_der} in the form
\begin{equation*}
c^{ij}_{p,l}=(c^{is}_pc^{rj}_l+c^{sj}_pc^{ir}_l+c^{sj}_pc^{ri}_l)f_{rs},
\end{equation*}
which is achieved by using the conditions~\eqref{c_f_sym} and~\eqref{c_f_cycle} in~\eqref{c_der}.
Now using the condition~\eqref{c_f_sym} and renaming dummy indices
we can rewrite $\Gamma^{sj}_l$ and~$\Gamma^{sj}_m$ in the second line as $\Gamma^{kj}_l$ and~$\Gamma^{kj}_m$
and sum up these terms with the third line to get
\[
c^{pk}_rc^{ri}_l\Gamma^{sj}_mf_{ks}
-c^{ri}_mc^{pk}_r\Gamma^{sj}_lf_{sk}
  +(c^{si}_lc^{rk}_m-c^{ri}_mc^{sk}_l)f_{rs}\Gamma^{pj}_k
\]
thus reducing the whole expression to
\begin{gather}\label{eq:121}
-c^{ri}_l(c^{pk}_m\Gamma^{sj}_kf_{sr}
-c^{pk}_r\Gamma^{sj}_kf_{sm}
-c^{sk}_mf_{rs}\Gamma^{pj}_k)
-c^{ri}_m(c^{pk}_r\Gamma^{sj}_kf_{sl}
+c^{pk}_l\Gamma^{sj}_kf_{rs}
-c^{sk}_lf_{rs}\Gamma^{pj}_k)
\end{gather}
Lowering indices in~\eqref{eq:121} by $f_{ai}f_{bj}f_{cp}$ we get
\begin{gather*}
  - (c_{dcm}\Gamma^{dj}_k
  - c_{dck}\Gamma^{dj}_m
  - c_{dkm}\Gamma^{dj}_c)c_{asl}f^{sk}
  + (c_{dcl}\Gamma^{dj}_k
  - c_{dck}\Gamma^{dj}_l
  - c_{asm}\Gamma^{dj}_c)c_{asm}f^{sk},
\end{gather*}
which vanishes in view of~\eqref{eq:271}.
\end{proof}
\end{document}